\newcommand{\defparproblem}[4]{
  \vspace{3mm}
\noindent\fbox{
  \begin{minipage}{.95\textwidth}
  \begin{tabular*}{\textwidth}{@{\extracolsep{\fill}}lr} \textsc{#1}  & {\bf{Parameter:}} #3 \\ \end{tabular*}
  {\bf{Input:}} #2  \\
  {\bf{Question:}} #4
  \end{minipage}
  }
  \vspace{2mm}
}
\newcommand{\defproblem}[3]{
  \vspace{3mm}
\noindent\fbox{
  \begin{minipage}{.95\textwidth}
  \begin{tabular*}{\textwidth}{@{\extracolsep{\fill}}lr} #1  \\ \end{tabular*}
  {\bf{Input:}} #2  \\
  {\bf{Question:}} #3
  \end{minipage}
  }
  \vspace{2mm}
  }
\newcommand*{\medcup}{\mathbin{\scalebox{1.5}{\ensuremath{\cup}}}} 
\newcommand{\UCOED}{\textsc{Undirected Connected Odd Edge Deletion}\xspace}
\newcommand{\UEED}{\textsc{Undirected Eulerian Edge Deletion}\xspace}
\newcommand{\odddel}{\textsc{Undirected Connected Odd Edge Deletion}\xspace}
\newcommand{\DEED}{\textsc{Directed Eulerian Edge Deletion}\xspace}
\newcommand{\CCTJ}{\textsc{Co-Connected $T$-Join}\xspace}
\newcommand{\MTJ}{\textsc{Min $T$-Join}\xspace}
\newcommand{\mat}{\ensuremath{M=(E,{\cal I})}\xspace}
\newcommand{\pathsym}[1]{ \sqsubseteq_{peq}^{#1} }
\newcommand{\rep}[2] {\ensuremath{\widehat{{\cal #1}} \subseteq_{rep}^{#2} {\cal #1}}\xspace}
\newcommand{\DD}{{\mathcal D}}
\newcommand{\FF}{\ensuremath{\mathcal{F}}\xspace}
\newcommand{\II}{\ensuremath{\mathcal{I}}\xspace}
\newcommand{\OO}{{\mathcal O}}
\newcommand{\PP}{\ensuremath{\mathcal{P}}\xspace}
\newcommand{\QQ}{\ensuremath{\mathcal{Q}}\xspace}
\newcommand{\RR}{\ensuremath{\mathcal{R}}\xspace}
\newcommand{\TT}{{\mathcal T}}
\tikzset{
        stars/.style={star,inner sep=2pt}
    }
  \newcommand{\lref}[2][]{\hyperref[#2]{#1~\ref*{#2}}}
\title{Finding Even Subgraphs Even Faster}
\author{ Prachi Goyal \inst{1} \and Pranabendu Misra \inst{2}
 \and Fahad Panolan\inst{2} \and Geevarghese Philip \inst{3} \and Saket Saurabh\inst{2,4}}
\institute{Indian Institute of Science, India. 
\email{prachi.goyal@csa.iisc.ernet.in} 
\and Institute of Mathematical Sciences,  India. 
\email{\{pranabendu|fahad|saket\}@imsc.res.in}
\and 
Max-Planck-Institute for Informatics, Germany. 
\email{gphilip@mpi-inf.mpg.de}
\and University of Bergen, Norway.
}
\newcommand{\No}{{\sc No}}
\newcommand{\FPT}{\textrm{\textup{FPT}}\xspace}
\newcommand{\WOH}{\textrm{\textup{W[1]-hard}}\xspace}
\newcommand{\YES}{\textsc{Yes}\xspace}
\newcommand{\NO}{\textsc{No}\xspace}
\newtheorem{observation}{\bf Observation}
\newtheorem{thm}{\bf Theorem}
\begin{document}
\maketitle
\begin{abstract}
  Problems of the following kind have been the focus of much
  recent research in the realm of parameterized complexity:
  \emph{Given an input graph (digraph) on $n$ vertices and a
    positive integer parameter $k$, find if there exist $k$ edges
    (arcs) whose deletion results in a graph that satisfies some
    specified parity constraints.}  In particular, when the
  objective is to obtain a connected graph in which all the
  vertices have even degrees---where the resulting graph is
  \emph{Eulerian}---the problem is called \UEED. The corresponding
  problem in digraphs where the resulting graph should be strongly
  connected and every vertex should have the same in-degree as its
  out-degree is called \DEED. 
  %Cygan, Marx, Pilipczuk, Pilipczuk and Schlotter 
 Cygan et al.~[\emph{Algorithmica, 2014}] showed that these
  problems are fixed parameter tractable (\FPT), and gave
  algorithms with the running time $2^{\OO(k \log
    k)}n^{\OO(1)}$. They also asked, as an open problem, whether
  there exist \FPT algorithms which solve these problems in time
  $2^{\OO(k)}n^{\OO(1)}$.  In this paper we answer their question
  in the affirmative: using the technique of computing
  \emph{representative families of co-graphic matroids} we design
  algorithms which solve these problems in time
  $2^{\OO(k)}n^{\OO(1)}$. The crucial insight we bring to these
  problems is to view the solution as an independent set of a
  co-graphic matroid. We believe that this view-point/approach
  will be useful in other problems where one of the constraints
  that need to be satisfied is that of connectivity.
%(here the element set 
%s the edge set of the graph and  a set of edges is an indpendent set if its removal does not
%Pilipczuk 
%where the objective is to delete as few edges from the graph as possible, so that they satisfy certain parity constraints. 
%
%\begin{keywords}
%
%
%\end{keywords}
\end{abstract}
%!TEX root = main-euler-del.tex

\section{Introduction}
%-----------------------------------------------------------
%(m-k)-Even Subgraph-UG
%(m-k)-Balanced Subgraph-DG
%(m-k)-Eulerian Subgraph-UG (connected)
%(m-k)-Eulerian Subgraph-DG (SC)
%(m-k)-Odd Subgraph-UG
%(m-k)-Odd Connected Subgraph-UG
%--------------------------------------------------------------
%k-Even Subgraph-UG  [Exact]
%k-Eulerian Subgraph-UG  [Exact]
%$2.85^k$ path 
%-----------------------------------------------------------------
%k-Odd Subgraph-UG  [Exact]  $O(k^2)$ kernel
%k-Odd Connected Subgraph-UG  [Exact]  $O(k^k)$ kernel
%-------------------------------------------------------------------
%We do not have results for the last two problems. 

Many well-studied algorithmic problems on graphs can be phrased in
the following way: Let \FF be a family of graphs or
digraphs. Given as input a graph (digraph) \(G\) and a positive
integer $k$, can we delete $k$ vertices (or edges or arcs) from
\(G\) such that the resulting graph (digraph) belongs to the class
\FF?  Recent research in parameterized algorithms has focused on
problems of this kind where the class \FF consists of all
graphs/digraphs whose vertices satisfy certain \emph{parity}
constraints~\cite{CyganMPPS14,FominG14,CaiY11,FominG13}. In this
paper we obtain significantly faster parameterized algorithms for
two such problems, improving the previous best bounds due to Cygan
et al.~\cite{CyganMPPS14}. We also settle the parameterized
complexity of a third problem, disproving a conjecture of Cai and
Yang~\cite{CaiY11} and solving an open problem posed by Fomin and
Golovach~\cite{FominG14}. We obtain our results using
recently-developed techniques for the efficient computation of
representative sets of matroids.
% One of the most well studied direction in parameterized complexity is as follows. Let ${\cal F}$ be a family of graphs/digraphs. Given as input, graph (digraph) and a positive integer $k$, can we delete $k$ vertices/edges/arcs such that the resulting graph belongs to  ${\cal F}$. Recently, a lot of study has gone where $\cal F$ contains all graphs/digraphs where the vertices satisfy certain parity constraints -- like every vertex has even degree or odd degree. This paper continues this line of  research and obtains the fastest known parameterized algorithms for some of these problems. 
%\vspace{-5pt}

\medskip
\noindent
{\bf Our Problems.}
An undirected graph \(G\) is \emph{even} (respectively,
\emph{odd}) if every vertex of \(G\) has even (resp. odd)
degree. A directed graph \(D\) is \emph{balanced} if the in-degree
of each vertex of \(D\) is equal to its out-degree.  An undirected
graph is \emph{Eulerian} if it is connected and even; and a
directed graph is \emph{Eulerian} if it is strongly connected and
balanced.  Cai and Yang~\cite{CaiY11} initiated the systematic
study of parameterized Eulerian subgraph problems. In this work we
take up the following edge-deletion problems of this kind:
% An undirected graph is called {\em even/odd} if every vertex has
% even/odd degree. A directed graph is called {\em balanced} if the
% in-degree of every vertex is equal to its out-degree.  An
% undirected graph is {\em Eulerian} if it is connected and even;
% and a directed graph is {\em Eulerian} if it is strongly connected
% and balanced.  Cai and Yang~\cite{CaiY11} initiated the systematic
% parameterized complexity study of problems around Eulerian
% graphs. Among those problems we would be interested in
% following.

\defparproblem{\UEED
}{
A connected undirected graph $G$ and an integer $k$.
}
{
$k$
}
{ Does there exist a set $S$ of at most $k$ edges in $G$ such that $G \setminus S$ is Eulerian?
%, i.e, even and 
%connected?
}

\defparproblem{\UCOED
}{
A connected undirected graph $G$ and an integer $k$.
}
{
$k$
}
{ Does there exist a set $S$ of at most $k$ edges in $G$ such that
  $G \setminus S$ is odd and connected?
}

\defparproblem{\DEED}{
A strongly connected directed graph $D$ and an integer $k$.
}
{
$k$
}
{ Does there exist a set $S$ of at most $k$ arcs in $D$ such that $D \setminus S$ is Eulerian? 
% i.e, balanced 
%and strongly connected?
}
% An ``odd" analog of {\sc Undirected Eulerian Edge Deletion} is
% known as {\sc Undirected Connected Odd Edge Deletion} problems
% where the goal is test whether one can delete a set of at most $k$
% edges such that the resulting graph is odd and connected. 
%
% When we remove the connectivity requirement on the graph
% \(G\setminus{}S\) from the first two problems, we get the problems
% \textsc{Undirected Even Edge Deletion} and \textsc{Directed
%   Balanced Edge Deletion}, respectively.
%
% Similarly one can define
% the vertex versions of these problems where the objective is to
% test whether there exists a set of at most $k$ vertices such that
% its deletion leads to even/odd/balanced/Eulerian subgraph
% problems.
%\vspace{-10pt}

Our algorithms for these problems also find such a set
\(S\) of edges/arcs when it exists; so we slightly abuse the
notation and refer to \(S\) as a \emph{solution} to the problem in
each case.

\medskip
\noindent
{\bf Previous Work.}
%A gr
%In the above problems if we drop the condition of connectivity then the correspond to 
%
%We study Eulerian Edge Deletion problem, 
%where objective is to make the input graph Eulerian by deleting edges as few as possible.
%What makes these problems interesting is solution to these problems should satisfy one local property and 
%one global property simultaneously. Local property is that each vertex should be even/balanced  
%and global property is that graph should be connected after deleting solution (deletion set) from the graph. 
%We study the parametrized complexity of these problems. 
Cai and Yang~\cite{CaiY11} listed sixteen odd/even undirected
subgraph problems in their pioneering paper, and settled the
parameterized complexity of all but four.  The first two problems
above are among these four; Cai and Yang conjectured that these
are both \WOH, and so are unlikely to have fixed-parameter
tractable (\FPT) algorithms: those with running times of the form
$f(k)\cdot{}n^{\OO(1)}$ for some computable function \(f\) where
\(n\) is the number of vertices in the input graph. 
%Cygan, Marx, Plipczuk, Pilipczuk and Schlotter~\cite{CyganMPPS14} 
Cygan et al.~\cite{CyganMPPS14} 
disproved this conjecture for the first problem: they used a novel and
non-trivial application of the colour-coding technique to solve
both \UEED and \DEED in time $2^{\OO(k \log k)}n^{\OO(1)}$. They
also posed as open the question whether there exist
$2^{\OO(k)}n^{\OO(1)}$-time algorithms for these two
problems. Fomin and Golovach~\cite{FominG14} settled the
parameterized complexity of the other two problems---not defined
here---left open by Cai and Yang, but left the status of \UCOED
open.
%\vspace{-10pt} 

\medskip
\noindent
{\bf Our Results and Methods.}
We devise deterministic algorithms which run in time
$2^{\OO(k)}n^{\OO(1)}$ for all the three problems defined above.  This
answers the question of Cygan et al.~\cite{CyganMPPS14} in the affirmative, solves
the problem posed by Fomin and Golovach, and disproves the
conjecture of Cai and Yang for \UCOED.

\begin{thm}\label{thm:main-theorem}
  \UEED, \UCOED, and \DEED can all be solved in time
 $\OO(2^{(2+\omega{})k}\cdot{}n^2m^3k^6 )+m^{\OO(1)}$  where
 \(n=|V(G)|\), \(m=|E(G)|\) 
  %$2^{(\omega+2)k}\cdot{}k^{\omega}n^{\OO(1)}$ time 
% where   \(n=|V(G)|\) 
and \(\omega\) is the exponent of matrix multiplication.
\end{thm}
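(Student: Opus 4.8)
The plan is to reduce all three problems to the task of finding an independent set of prescribed size in a co-graphic matroid that additionally satisfies a parity/connectivity requirement, and then to solve that task by dynamic programming over a ``connectivity pattern'' while using representative families to keep the table small. First I would recall the standard observation that a solution $S$ to \UEED must be a $T$-join for the appropriate set $T$ of odd-degree vertices (for \UCOED it is a $T$-join complement, and for \DEED it is an arc set correcting every imbalance), so the degree/parity side of the problem is captured by a linear-algebraic condition over $\mathbb{F}_2$ (or over $\mathbb{Q}$ in the directed case, via the incidence matrix). The key new insight, as advertised in the introduction, is to also encode the connectivity requirement: $G\setminus S$ is connected iff $S$ is an independent set of the co-graphic matroid $M^*(G)$ of rank $m-n+1$, equivalently iff $E(G)\setminus S$ is a connected spanning subgraph, equivalently iff $S$ is a subset of no cycle space coset that disconnects $G$. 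So a solution is exactly a set that is simultaneously independent in $M^*(G)$ and satisfies the parity constraint; this is a matroid intersection–flavoured condition, and the sizes involved are bounded by a function of $k$ because $|S|\le k$.

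Next I would set up the dynamic programming. Guess (or branch on) the relevant set $T$ of ``deficient'' vertices — there are $O(k)$ of them since each edge of $S$ can fix the parity of at most two vertices — and build the solution $S$ edge by edge as a chain $\emptyset = S_0 \subseteq S_1 \subseteq \cdots \subseteq S_\ell = S$ with $\ell\le k$. At each stage the partial object is described by (i) a sub-multiset of $T$ already ``served,'' and (ii) the partial connectivity information, i.e.\ which of the $O(k)$ terminals are currently in a common component of the subgraph being deleted; this is a partition of an $O(k)$-size set, contributing a $2^{O(k)}$ factor. The crucial step is that for each such bounded description we do not keep all partial solutions but only a \emph{representative family} of partial independent sets of the co-graphic matroid $M^*(G)$: by the representative-sets machinery (truncation of $M^*(G)$ to rank $O(k)$, then computing a $q$-representative family in time roughly $\binom{p+q}{p}$ times a polynomial), a family of size $\binom{k}{i}\cdot 2^{O(k)}$ suffices to preserve the existence of an extension to a full independent solution. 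This keeps every table entry of size $2^{O(k)}$ while guaranteeing correctness of the final answer; the $\omega$ in the exponent and the $n^2m^3k^6$ polynomial factor come exactly from the cost of the linear-algebraic computation of these representative families over the matroid's representation.

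The main obstacle I expect is getting the connectivity bookkeeping to be \emph{compatible} with the representative-family pruning: a representative family is guaranteed only to preserve extendibility to an independent set of the matroid, but we also need it to preserve the property that the final set, together with the rest of $S$, yields the right partition of terminals (i.e.\ a single component containing everything in the directed/connected cases). The fix is to phrase the DP so that the connectivity state is part of the \emph{index} of the table rather than something the matroid has to certify — then within each fixed state the only thing we prune on is matroid independence, and a bound on how many matroid elements can still be added (namely the remaining budget $k-i$) lets us invoke the representative-family theorem with $q = k-i$. One then argues by a standard exchange/replacement argument along the chain $S_0\subseteq\cdots\subseteq S_\ell$ that if a solution exists, a solution survives the pruning at every step, so the algorithm outputs \YES and reconstructs an $S$. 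Handling the directed case requires the co-graphic matroid of the underlying undirected graph for connectivity and a separate $\mathbb{Q}$-linear constraint for balancedness; I would treat strong connectivity of $D\setminus S$ by the same component-partition device applied after observing that a balanced connected digraph is strongly connected. Summing the costs over the $O(k)$ stages, the $2^{O(k)}$ states, and the per-state representative-family computation gives the claimed $O\!\left(2^{(2+\omega)k}\,n^2 m^3 k^6\right) + m^{O(1)}$ bound.
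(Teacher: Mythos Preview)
Your high-level strategy---encode connectivity via the co-graphic matroid and prune partial solutions with representative families---is exactly the paper's, but the dynamic programming you sketch does not achieve the claimed bound.  The state you propose records a \emph{partition} of the $O(k)$ terminals according to the components of the subgraph being deleted; the number of such partitions is the Bell number $B_{O(k)}=2^{\Theta(k\log k)}$, so your table already has $2^{O(k\log k)}$ cells before any pruning.  That recovers the Cygan--Marx--Pilipczuk--Pilipczuk--Schlotter bound, not the $2^{O(k)}$ bound you are asked to prove.

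What the paper does instead is prove a structural lemma you are missing: every \emph{minimal} co-connected $T$-join is a forest and hence decomposes as the edge-disjoint union of exactly $\ell=\lvert T\rvert/2$ paths, each joining a distinct pair of terminals (Proposition~\ref{prop:pathsystem-tjoin} and Lemma~\ref{lemma:co-connected-T-join}).  This lets one build the solution \emph{path by path in a fixed order}, so the DP state is merely $(i,T',v)$ with $i\le k$ the number of edges placed, $T'\subseteq T$ the set of terminals already used as path endpoints, and $v\in V(G)\cup\{\epsilon\}$ the current tip of the path under construction.  That is only $2^{O(k)}\cdot n$ states, and the ``parity condition'' is enforced automatically by the path structure rather than by a separate $\mathbb{F}_2$-linear constraint.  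Your proposal never identifies this decomposition, and without it the partition bookkeeping cannot be avoided.

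A second, subtler gap is the compatibility step you call ``a standard exchange/replacement argument.''  It is not standard: when a representative family replaces a partial path system $\mathcal{P}$ by some $\mathcal{P}^*$ with the same endpoint data and $E(\mathcal{P}^*)\cup E(\mathcal{P}')$ still independent in $M^*(G)$, the concatenation $P_r^*P_r'$ need not be a path---it could be a walk with repeated vertices.  The paper handles this (Lemma~\ref{lem:repset-safe}) by running the algorithm only for the \emph{minimum} solution size $k'$, so that $E(\mathcal{P}^*)\cup E(\mathcal{P}')$ is a minimum co-connected $T$-join and hence a forest, forcing $P_r^*P_r'$ to be a genuine path.  Your sketch does not address this, and without it the correctness of the pruning is not established.
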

\noindent 
Our main conceptual contribution is  \emph{
to view the solution as an independent set of a
  co-graphic matroid},   which we believe  will be useful in other problems where one of the constraints
  that need to be satisfied is that of connectivity.

We now give a high-level overview of our algorithms. Given a
subset of vertices $T$ of a graph $G$, a \emph{$T$-join} of \(G\)
is a set $S \subseteq E(G)$ of edges such that $T$ is exactly the
set of odd degree vertices in the subgraph $H = (V(G),
S)$. Observe that \(T\)-joins exist only for even-sized vertex
subsets \(T\).  The following problem is long known to be solvable
in polynomial time~\cite{Edmonds73}.

\defproblem{
{\MTJ}
}{
An undirected graph $G$ and a set of terminals $T\subseteq V(G)$.
}
{ 
Find a $T$-join of \(G\) of the smallest size.
}

Consider the two problems we get when we remove the
connectivity (resp. strong connectivity) requirement on the graph
\(G\setminus{}S\) from \UEED and \DEED; we call these problems
\textsc{Undirected Even Edge Deletion} and \textsc{Directed
  Balanced Edge Deletion}, respectively. Cygan et al. show that
\textsc{Undirected Even Edge Deletion} can be reduced to \MTJ, and
\textsc{Directed Balanced Edge Deletion} to a minimum cost flow
problem with unit costs, both in polynomial
time~\cite{CyganMPPS14}.  Thus it is not the local requirement of
even degrees which makes these problems hard, but the simultaneous
global requirement of (strong) connectivity.

% \noindent 
% It is well known that {\sc Min $T$-Join} problem can be solved in
% polynomial time~\cite{Edmonds73}. We refer to~\cite{Frank94} for a
% survey on $T$-join.  One can easily obtain polynomial time
% algorithms for {\sc Undirected Even Edge Deletion} and {\sc
%   Directed Balanced Edge Deletion} by applying an algorithm for
% {\sc Min $T$-Join} (see~\cite{CyganMPPS14} for more details).
% Thus, the only reason \uedc\ and \dedc\ become \NPC is the demand
% of global constraint of connectivity and strong connectivity. 

To handle this situation we turn to a \emph{matroid}
which correctly captures the connectivity requirement.  Let \II be
the family of all subsets \(X\subseteq{}E(G)\) of the edge set of
a graph \(G\) such that the subgraph $(V(G),E(G)\setminus X)$ is
connected. Then the pair $(E(G),\II)$ % 
% \todo{should we change it to $(E(G),\II)$. Made this change. --
% Philip}
forms a linear matroid called the co-graphic matroid of \(G\) (See
\autoref{sec:prelims} for definitions). Let \(T\) be the set of
odd-degree vertices of the input graph $G$. Observe that for
\UEED, the solution \(S\) we are after is \emph{both} a $T$-join  \emph{and} an
independent set of the co-graphic matroid of \(G\). We exploit
this property of \(S\) to design a dynamic programming algorithm
which finds \(S\) by computing ``representative
sub-families''~\cite{FLS13,KratschW12,M09,M85} of certain families
of edge subsets in the context of the co-graphic matroid of \(G\).
% To
% overcome this difficulty we make use of the following property of
% a solution. Given a graph $G$, the set family $(E,{\cal I})$ --
% where $E=E(G)$ and a set $X\subset E(G)$ is in $\cal I$ if
% $G'=(V(G),E(G)\setminus X)$ is connected -- forms a matroid (see
% definition later) and known as a co-graphic matroid.  Observe that
% for undirected graph the solution we are seeking form is a
% $T$-join where $T$ is the set of odd degree vertices of $G$ and as
% well as an independent set of the corresponding co-graphic
% matroid. We use this connection to design dynamic programming
% algorithms based on the method of ``representative families of
% matroids"~\cite{FLS13,KratschW12,M09,M85}.
% \begin{definition} % This definition appears in Preliminaries as well
%   Given a matroid \mat{} and a family~$\cal{S}$ of subsets of~$E$,
%   we say that a subfamily~$\widehat{\cal{S}}\subseteq \cal S$ is
%   {\em~$q$-representative} for~$\cal S$ if the following holds:
%   For every set~$Y\subseteq E$ of size at most~$q$, if there is a
%   set~$X \in \cal S$ disjoint from~$Y$ with~$X\cup Y \in \II$,
%   then there is a set~$\widehat{X} \in \widehat{\cal S}$ disjoint
%   from~$Y$ with~$\widehat{X} \cup Y \in \II$.
% \end{definition}
%It is known that there exists $q$-representative family
%of size ${p+q \choose p}$~\cite{Lovasz77}.\todo{does this sentence
%  make sense ''here''} \todo{I will revisit this paragraph once I
%  grok the algorithm. -- Philip} 
We give simple characterizations of solutions which allow us to do
dynamic programming, where at every step we only need to keep a
representative family of the family of partial solutions where
each partial solution is an independent set of the corresponding
co-graphic matroid. To find the desired representative family of
partial solutions we use the algorithm by Lokshtanov et
al.~\cite{LokshtanovMPS14}.  Our methods also imply that {\sc
  Undirected Connected Odd Edge Deletion} admits an algorithm with
running time $2^{\OO(k)}n^{\OO(1)}$. 

\section{Preliminaries}
\label{sec:prelims}
% \todo[inline]{I did not read this section. Please use $V(G)$ and $E(G)$ as the notation for vertices and edges as we use $E$ for the matroid ground set.}

%In this section we give various definitions and results, that we make use of in the paper.
%In this section we state some basic definitions related to matroids, representative families, parameterized complexity, and give an overview of the notation used in this paper. 

% \subsection{Matroids}

% \paragraph*{Sets and Set Families}
% Let $E$ be a set of $n$ elements. 
% A $p$-family over $E$ is a collection of subsets of $E$ of size at most $p$.
% Given two families of sets $\cal P$ and $\cal Q$ over $E$, we define ${\cal P} \bullet {\cal Q}$ as,
% $$ {\cal P} \bullet {\cal Q} = \{ X \cup Y \vert X \in {\cal P}, Y \in {\cal Q} \text { such that } X \cap Y = \phi \}$$

Throughout the paper we use $\omega$ to denote the exponent in the
running time of matrix multiplication, the current best known
bound for which is
$\omega<2.373$~\cite{Williams12}. %We use $e$ to denote the base of natural logarithm.

\noindent
{\bf Graphs and Directed Graphs.} 
We use ``graph'' to denote simple graphs without self-loops,
directions, or labels, and ``directed graph'' or ``digraph'' for
simple directed graphs without self-loops or labels. We use
standard terminology from the book of Diestel~\cite{Diestel} for
those graph-related terms which we do not explicitly define. In
general we use \(G\) to denote a graph and \(D\) to denote a
digraph. We use $V(G)$ and $E(G)$, respectively, to denote the
vertex and edge sets of a graph $G$, and $V(D)$ and $A(D)$,
respectively, to denote the vertex and arc sets of a digraph $D$.
For an edge set $E'\subseteq E(G)$, we use (i) $V(E')$ to denote
the set of {\em end vertices} of the edges in $E'$, (ii) $G\setminus E'$
to denote the subgraph $G'=(V(G),E(G)\setminus{}E')$ of \(G\), and
(iii) $G(E')$ to denote the subgraph $(V(E'),E')$ of \(G\).  The
terms \(V(A')\), \(D\setminus{}A'\), and \(D(A')\) are defined
analogously for an arc subset \(A'\subseteq A(D)\). 

If \(P\) is a path from
vertex \(u\) to vertex \(v\) in graph \(G\) (or in digraph $D$)
then we say that (i) \(P\) \emph{connects} \(u\) and \(v\), (ii)
$u,v$ are, respectively, the {\em initial vertex}  and the
{\em final} vertex of $P$, and
(iii) % If $P$ is a path connects $u$ and $v$
% in a graph $G$, then we say
$u,v$ are the {\em end} vertices of path $P$.  Let
\(P_{1}=x_{1}x_{2}\dotso{}x_{r}\) and
\(P_{2}=y_{1}y_{2}\dotso{}y_{s}\) be two \emph{edge-disjoint}
paths in graph \(G\).  If \(x_{r}=y_{1}\) and
\(V(P_{1})\cap{}V(P_{2})=\{x_{r}\}\), then we use \(P_{1}P_{2}\)
to denote the path
\(x_{1}x_{2}\dotso{}x_{r}y_{2}\dotso{}y_{s}\). 
A \emph{path system} \PP in graph \(G\) (resp., digraph \(D\)) is
a collection of paths in \(G\) (resp. in $D$), and it is
\emph{edge-disjoint} if no two paths in the system share an
edge. We use $V(\PP)$ and $E(\PP)$ ($A(\PP)$ for a path system in digraph) for the set of vertices and
edges, respectively, in a path system \PP.  We say that a path
system ${\cal P}=\{P_1,\ldots,P_r\}$ \emph{ends} at a vertex $u$
if path $P_r$ ends at $u$. We use $V^e({\cal P})$ to denote the
set of end vertices of paths in a path system ${\cal P}$. For a
path system ${\cal P}$ in a digraph $D$, we use $V^i({\cal P})$
and $V^f({\cal P})$, respectively, to denote the set of initial
vertices and the set of final vertices, respectively, of paths in
${\cal P}$. For a path system $\PP=\{P_1,\dotsc,P_r\}$ and an edge/arc $(u,v)$, 
we define $\PP\circ (u,v)$ as follows.
\begin{equation*}
\PP\circ (u,v) =
\left\{ \begin{array}{ll}
\{P_1,\dotsc,P_rv\} & \mbox{ if $u$ is the final vertex of $P_r$ and $v\notin V(P_r)$}  \\ 
\{P_1,\dotsc,P_r,uv\} & \mbox{ if $u$ is not the final vertex of $P_r$}
\end{array}\right.
\end{equation*}

A directed graph \(D\) is \emph{strongly connected} if for any two
vertices $u$ and $v$ of \(D\), there is a directed path from $u$
to $v$ and a directed path from $v$ to $u$ in \(D\). A digraph
\(D\) is \emph{weakly connected} if the underlying undirected
graph is connected.  The \emph{in-neighborhood} of a vertex $v$ in
\(D\) is the set \(N^{-}_{D}(v)=\{u\,\vert\,(u,v)\in{}A(D)\}\),
and the \emph{in-degree} of $v$ in \(D\) is
\(d^{-}_{D}(v)=\vert{}N^{-}_{D}(v)\vert\). The
\emph{out-neighborhood} of $v$ is the set
\(N^{+}_{D}(v)=\{w\,\vert\,(v,w)\in A(D)\}\), and its
\emph{out-degree} is \(d^{+}_{D}(v)=\vert{}N^{+}_{D}(v)\vert\).

\noindent
{\bf Matroids.}
We now state some basic definitions and properties of matroids
which we use in the rest of the paper. We refer the reader to the
book of Oxley~\cite{Oxley2006} for a comprehensive treatment of
the subject.

\begin{definition}
\label{def:matroid}
A pair \mat, where~$E$ is a set called the {\em ground set} and~$\cal I$ is a family of subsets of $E$,
which are called {\em independent sets}, is a {\em matroid} if it satisfies the following conditions:
\begin{enumerate}
    \item[\rm (I1)] ~$\emptyset \in \cal I$. 
    \item[\rm (I2)]  If~$A' \subseteq A~$ and~$A\in \cal I$ then~$A' \in  \cal I$. 
     %Every subset of an independent set is independent. That is,  if~$A' \subseteq A~$ and~$A\in \cal I$ then~$A' \cal I$. 
     %That is,~$ \Big{\forall} A' \subseteq A \subseteq E~$, if~$A \in \cal I$ then~$A' \in I$.
    \item[\rm (I3)] If~$A, B  \in \cal I$  and~$ |A| < |B|~$, then~$\exists ~e \in  (B \setminus A)~$  such that~$A\cup\{e\} \in \cal I$.
    % \item If~$A, B  \in \cal I$ and~$ |A| > |B|~$, then there exists an element~$a\in  (A \setminus B)$   such that~$B\cup\{a\} \in \cal I$.
    % A which is not in B that when added to B still gives an independent set. 
\end{enumerate}
\end{definition}
An inclusion-wise maximal set of~$\cal I$ is called a {\em basis}
of the matroid.
All bases of a matroid have the same size, called the \emph{rank}
of the matroid~$M$.
% and denoted $ \rank M $.
%Associating a (rational) weight with each element of the ground
%set gives rise to weighted matroids. Using a greedy algorithm one
%can compute the maximum (or minimum) weighted independent set of
%such a matroid.
%
%    %\begin{definition}
%The \emph{dual matroid} of a matroid \mat{} is a matroid
%${M}^*=(E,I^*)$ such that the bases of ${M}^*$ are exactly the
%complements of the bases of ${M}$.
%   %\end{definition}
%  %
%Note that the rank of the dual matroid $M^*$ is \(|E|-\rank M\).

% \smallskip
\noindent
{\bf Linear Matroids and Representable Matroids.}
Let \(E\) be the set of column labels of a matrix \(A\) over some
field \(\mathbb{F}\), and let \II be the set of all subsets \(X\)
of \(E\) such that the set of columns labelled by \(X\) is
linearly independent over \(\mathbb{F}\). Then \mat{} is a
matroid, called the \emph{vector matroid} of the matrix \(A\).  If
a matroid \(M\) is the vector matroid of a matrix $A$ over some
field $\mathbb F$, then we say that \(M\) (i) is
\emph{representable} over $\mathbb F$, and (ii) is a \emph{linear}
(or \emph{representable}) matroid.
%\begin{proposition}\label{prop:dual}\textup{\textbf{\cite{M09}}}\todo{check for proper
%    citation} The dual of a linear matroid is
%  also linear. Further, given a representation $A$ of a matroid
%  $M$ over a field ${\mathbb F}$, a representation over ${\mathbb
%    F}$ of the dual matroid $M^*$ can be found in polynomial time.
%\end{proposition}
% We now define an important class of matroids that arise from graphs.

% \smallskip
\noindent
{\bf Co-Graphic Matroids.}   The \emph{co-graphic matroid} of a connected graph $G$ is defined
as \(M=(E(G),\II)\) where
\(\II=\{S\subseteq{}E(G)\;\vert\;(G\setminus{}S)\text{ is
  connected}\}\). 
%  and is the dual of the graphic matroid of \(G\).
It is a linear matroid and, given a
graph \(G\),  a representation of the co-graphic matroid of \(G\) over the finite field \(\mathbb{F}_{2}\) 
can be found in polynomial time~\cite{M09,Oxley2006}.  The rank of the cographic matroid
of a connected graph $G$ is \((|E(G)|-|V(G)|+1)\). We use
\(M_{G}\) to denote the co-graphic matroid of a graph $G$.  For a
directed graph $D$ we use $M_{D}$ to denote the co-graphic matroid
of the underlying undirected graph of $D$. 
%\paragraph{\bf Graphic and Co-Graphic Matroids.}  
%Let \II be the set of all edge subsets \(F\subseteq{}E(G)\) of an
%undirected graph \(G\) such that the subgraph \((V(G),F)\) of
%\(G\) is a forest. Then \(M=(E(G),\II)\) is a matroid, called the
%\emph{graphic matroid} of graph \(G\). The graphic matroid of a
%graph \(G\) is representable over any field
%\(\mathbb{F}\). Further, given a graph \(G\) one can compute a
%representation of the graphic matroid of \(G\) over a fixed field
%\(\mathbb{F}\) in polynomial time~\cite{Oxley2006}. Whenever we
%omit explicitly specifying the field \(\mathbb{F}\), it can be
%taken to be the finite field \(\mathbb{F}_{2}\). The rank of
%graphic matroid of a connected graph $G$ is \((|V(G)|-1)\).
%
%The \emph{co-graphic matroid} of a connected graph $G$ is defined
%as \(M=(E(G),\II)\) where
%\(\II=\{S\subseteq{}E(G)\;\vert\;(G\setminus{}S)\text{ is
%  connected}\},\) and is the dual of the graphic matroid of \(G\).
%Hence by \autoref{prop:dual} it is a linear matroid and, given a
%graph \(G\), a representation of the co-graphic matroid of \(G\)
%can be found in polynomial time.  The rank of a cographic matroid
%of a connected graph $G$ is \((|E(G)|-|V(G)|+1)\). We use
%\(M_{G}\) to denote the co-graphic matroid of a graph $G$.  For a
%directed graph $D$ we use $M_{D}$ to denote the co-graphic matroid
%of the underlying undirected graph of $D$. 

Let \(\mathcal{A}\) be a family of path systems in a graph \(G\). 
% (or digraph \(D\)). , each of which ends at a fixed vertex $u$ of \(G\) (or of \(D\)).  
Let $e=(u,v)$ be an edge in \(G\) (or an arc in \(D\)), and let
$M=(E,{\cal I})$ be the co-graphic matroid of graph $G$ (or of
digraph $D$). We use ${\cal A}\bullet \{e\}$ to denote the family
of path systems
\[{\cal A}\bullet \{e\}=
\left\{{\cal P}'=\PP\circ e~|~{\cal P}\in {\cal A}, e\notin E({\cal P}), E({\cal P}')\in {\cal I}~\right\}.\]
% \paragraph{\bf The Truncation of a Matroid.} 
% The {\em $k$-truncation} of a matroid \mat{} % to rank $k$
% is a matroid \(M^{k}=(E,{\cal I}')\) such that $S\subseteq E$ is
% independent in \(M^{k}\) if and only if $|S| \leq k$ and $S$ is
% independent in $M$ (that is, $S\in \cal I$).
% %is a matroid M?(E,I?) such that S ? E is independent in M? if and only if |S| � k and S is independent in M.
% % 
% For a linear matroid $M$ and any \(k\in\mathbb{N}\) the
% \(k\)-truncation \(M^{k}\) of \(M\) is also linear, and a
% representation of \(M^{k}\) can be found in randomized polynomial
% time.  More formally,
% 
% \begin{proposition}[{\cite[Proposition 3.7]{M09}}]
% \label{prop:truncationrep}
% %[\cite{Marx09}]
% %[\cite[Proposition 3.4]{Marx09}]
% Given a representation $A$ of a matroid $M$ over a finite field
% $\mathbb F$ and an integer $k$, a representation of the
% $k$-truncation \(M^{k}\) of \(M\) can be found in randomized
% polynomial time.
% \end{proposition}
% %For a matroid $M$, we denote $M^k$ to represent $k$-truncation of $M$.

% \smallskip
\noindent
{\bf Representative Families of Matroids.}
The notion of representative families of matroids and their fast computation
% ~\cite{FLS13} 
play key roles in our algorithms.
% In this section we define the notion of representative family, 
% which plays a key role in our algorithms (see also~\cite{FLS13}).
%We now turn to the definition of a {\em~$q$-representative family},\begin{definition}[{\bf~$q$-Representative Family},\cite{FLS13}]

\begin{definition}
\textup{\textbf{\cite{FLS13,M09}}}
\label{def:repsets}
  Given a matroid \mat{}, a family~$\cal S$ of subsets of~$E$, and
  a non-negative integer \(q\), we say that a
  subfamily~$\widehat{\cal{S}}\subseteq \cal S$ is
  {\em~$q$-representative} for~$\cal S$ if the following holds.
  For every set~$Y\subseteq E$ of size at most~$q$, if there is a
  set~$X \in \cal S$ disjoint from~$Y$ with~$X\cup Y \in \II$,
  then there is a set~$\widehat{X} \in \widehat{\cal S}$ disjoint
  from~$Y$ with~$\widehat{X} \cup Y \in \II$.
\end{definition}
In other words, if some independent set $X$ in $\cal S$ can be
extended to a larger independent set by a set \(Y\) of at most $q$
new elements, then there is a set $\widehat{X}$ in $\widehat{\cal
  S}$ that can be extended by the \emph{same} set \(Y\).  If
$\widehat{\cal S} \subseteq {\cal S}$ is $q$-representative for
${\cal S}$ we write \rep{S}{q}.  

In this paper we are interested in linear matroids and in
representative families derived from them.  The following theorem
states the key algorithmic result which we use for the computation
of representative families of linear matroids.

\begin{theorem}
%[\cite{LokshtanovMPS14}]
\textup{\textbf{\cite{LokshtanovMPS14}}}
\label{thm:repsetlovasz}
Let \mat{} be a linear matroid of rank $n$ and let $ {\cal S} =
\{S_1,\ldots, S_t\}$ be a family of independent sets, each of size
\(b\).  Let $A$ be an $n\times |E|$ matrix representing $M$ over a
field $ \mathbb{F}$, where ${\mathbb{F}}={\mathbb{F}}_{p^\ell}$ or
$\mathbb{F}$ is ${\mathbb Q}$. Then there is deterministic
algorithm which computes a representative set \rep{S}{q} of size
at most $ nb {b+q \choose b}$, using $\OO\left({b+q \choose b} t
  b^3n^2 + t {b+q \choose b}^{\omega-1} (bn)^{\omega-1}
\right)+(n+|E|)^{\OO(1)}$ operations over the field $\mathbb{F}$.
% \begin{enumerate}
% \setlength{\itemsep}{-2pt}
% \item A family $\widehat{\cal S}$ of size \bnoml{p+q}{p} in  $\cO\left({p+q \choose p}^2 t p^3n^2 + t {p+q \choose q} ^{\omega} np\right)+(n+|E|)^{\cO(1)}$, operations over $ \mathbb{F}$. 
% \item A family   $\widehat{\cal S}$ of size  $ np {p+q \choose p}$ in  $\cO\left({p+q \choose p} t p^3n^2 
% + t {p+q \choose q}^{\omega-1} (pn)^{\omega-1} \right)+(n+|E|)^{\cO(1)}$ operations over $ \mathbb{F}$.  
% \end{enumerate}
\end{theorem}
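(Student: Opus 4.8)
The plan is to first push the rank of the ambient matroid down to $b+q$, and then run the classical exterior-algebra computation of representative families on this low-rank matroid. Let $M_{b+q}$ be the \emph{truncation} of $M$ to rank $b+q$: the matroid on $E$ whose independent sets are exactly the sets $X\in{\cal I}$ with $|X|\le b+q$. Every set that appears in Definition~\ref{def:repsets} when testing whether $\widehat{\cal S}$ is $q$-representative for ${\cal S}$ has size at most $b+q$ (each $S_i$ has size $b$ and each extending set $Y$ has size $\le q$), so for $|X|=b$ and $|Y|\le q$ we have $X\cup Y\in{\cal I}$ iff $X\cup Y$ is independent in $M_{b+q}$; unwinding the definition, it follows that $\widehat{\cal S}\subseteq{\cal S}$ is $q$-representative for ${\cal S}$ with respect to $M_{b+q}$ iff it is so with respect to $M$. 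Hence it suffices to (i) produce a representation of $M_{b+q}$ and (ii) compute a $q$-representative subfamily with respect to $M_{b+q}$.

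Granting (i), step (ii) is the standard Lov\'asz argument, now in a matroid of rank only $b+q$. Writing $a_e$ for the column of the truncated representation indexed by $e$, associate with a size-$b$ set $S$ the wedge product $w_S=\bigwedge_{e\in S}a_e$ in $\bigwedge^{b}\mathbb F^{b+q}$, a space of dimension $\binom{b+q}{b}$. Then $S$ is independent iff $w_S\ne 0$, and for disjoint $X,Y$ with $|X|=b$ and $|Y|\le q$ we have $X\cup Y\in{\cal I}$ iff $w_X\wedge w_Y\ne 0$ (and $w_X\wedge w_Y\ne 0$ already forces $X\cap Y=\emptyset$). The algorithm processes $S_1,\ldots,S_t$ in order, keeping a subfamily $\widehat{\cal S}$ for which $\{w_S:S\in\widehat{\cal S}\}$ is a basis of the span of the vectors seen so far, and adding $S_i$ exactly when $w_{S_i}$ is not in the current span. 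Correctness: if $X\in{\cal S}$ and $Y$ witness the hypothesis of Definition~\ref{def:repsets}, then $v\mapsto v\wedge w_Y$ does not annihilate $w_X$, hence not the whole span, hence not some basis vector $w_{\widehat X}$, giving $\widehat X\in\widehat{\cal S}$ with $\widehat X\cup Y\in{\cal I}$ and $\widehat X\cap Y=\emptyset$. The size bound is immediate since $\widehat{\cal S}$ corresponds to a linearly independent set of vectors in a $\binom{b+q}{b}$-dimensional space (the extra factor $nb$ absorbing the overhead of working over the extension field used in (i)). For the running time, computing each $w_{S_i}$ by $b-1$ successive wedge multiplications against single columns yields the $\binom{b+q}{b}\,t\,b^3 n^2$ term, and maintaining the basis in blocks of $\binom{b+q}{b}$ vectors via fast rectangular matrix multiplication yields the $t\binom{b+q}{b}^{\omega-1}(bn)^{\omega-1}$ term, once per-operation extension-field arithmetic is folded in.

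The hard part is step (i): computing, \emph{deterministically} and in polynomial time, a representation of $M_{b+q}$. The natural attempt is a matrix of the form $WA$ with $W$ a $(b+q)\times n$ matrix; one checks $WA$ represents $M_{b+q}$ exactly when $W$ maps every $M$-independent set of at most $b+q$ columns of $A$ to an independent set, which---by the Cauchy--Binet formula---amounts to a finite list of determinants in the entries of $W$, each not identically zero, being simultaneously nonzero. A random $W$ works by Schwartz--Zippel, but a generic $W$ would need a super-polynomially large field, and over a small field one cannot afford to search; keeping the field and all dimensions polynomially bounded while remaining deterministic is the crux. I would resolve it with a \emph{structured} $W$---a Vandermonde-type matrix built from powers of suitably chosen elements, over $\mathbb F_{p^{\ell k}}$ with $k=(b+q)^{O(1)}$ when $\mathbb F=\mathbb F_{p^\ell}$, or over an appropriate number field when $\mathbb F=\mathbb Q$---for which each determinant in the list can be certified nonzero by a direct algebraic argument on its Cauchy--Binet expansion. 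This is precisely the deterministic-matroid-truncation subroutine; bounding the degree of the extension and its arithmetic cost is what introduces the polynomial factors in $n$ and $b$, while producing $WA$ itself costs $(n+|E|)^{O(1)}$ field operations---the additive term in the statement. With deterministic truncation in hand, the reduction to rank $b+q$, the exterior-algebra criterion, and the greedy basis maintenance are all routine.
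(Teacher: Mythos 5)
First, note that the paper does not prove this statement at all: Theorem~\ref{thm:repsetlovasz} is imported verbatim from \cite{LokshtanovMPS14} and used as a black box, so there is no in-paper proof to compare against; your outline has to be judged against the cited source. Your architecture does match that source: reduce to the $(b+q)$-truncation of $M$ (correctly observing that $q$-representativity only ever inspects sets of size at most $b+q$, so it is preserved under truncation), then run the Lov\'asz exterior-algebra argument of \cite{FLS13} in the rank-$(b+q)$ matroid, keeping a maximal linearly independent subfamily of the wedge vectors $w_S$; that part of your write-up, including the correctness argument via $v\mapsto v\wedge w_Y$, is sound.

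The genuine gap is exactly where you say ``the hard part'' is: step (i), the deterministic computation of a representation of the $(b+q)$-truncation, is not proved but only gestured at (``a Vandermonde-type matrix \dots for which each determinant can be certified nonzero by a direct algebraic argument''). That certification is the entire technical content of \cite{LokshtanovMPS14}: a plain Vandermonde compression $W\!A$ does not obviously make all the relevant Cauchy--Binet minors nonzero, and the actual construction uses Wronskian matrices over $\mathbb{Q}$ and \emph{folded} Wronskians over suitable extensions of $\mathbb{F}_{p^\ell}$, with a nontrivial nonvanishing argument; without it your proof of the theorem does not go through. Two smaller inaccuracies compound this. Your claim that a random $W$ ``would need a super-polynomially large field'' is off---randomized truncation works over a field of polynomial size by Schwartz--Zippel; the only obstacle is derandomization. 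And the stated size bound $nb\binom{b+q}{b}$ together with the $(bn)^{\omega-1}$ term is asserted rather than derived: it is not a generic ``overhead'' constant but comes from the specific degree of the extension (or function-field) representation produced by the truncation step, over which the maximal independent subfamily of wedge vectors is taken; with only $\binom{b+q}{b}$-dimensional reasoning as written, your argument would promise the stronger bound $\binom{b+q}{b}$, which the deterministic construction of \cite{LokshtanovMPS14} does not deliver.
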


\section{Undirected Eulerian Edge Deletion}
\label{section:undirected}
In this section we describe our $2^{\OO(k)}n^{\OO(1)}$-time
algorithm for \UEED.  Let $(G,k)$ be an instance of the problem.
Cygan et al.~\cite{CyganMPPS14} observed the following
characterization.
\begin{observation}
\label{obs:solution-characterization}
  A set \(S\subseteq{}E(G)\,;\,|S|\leq{}k\) of edges of a graph
  \(G\) is a solution to the instance \((G,k)\) of \UEED if and
  only if it satisfies the following conditions:
\begin{enumerate}
\item[(a)] $G\setminus S$ is a connected graph; and,  
\item[(b)] $S$ is a $T$-join  where $T$ is the set of all odd degree
  vertices in $G$. 
\end{enumerate}
\end{observation}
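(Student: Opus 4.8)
The plan is to unwind the definition of ``Eulerian'' and reduce the even-degree requirement to a vertex-by-vertex parity computation. Being a solution to the instance $(G,k)$ means exactly that $|S|\le k$ and $G\setminus S$ is Eulerian, and the cardinality bound $|S|\le k$ appears verbatim on both sides of the claimed equivalence; so it suffices to show that $G\setminus S$ is Eulerian if and only if conditions (a) and (b) hold. By definition $G\setminus S$ is Eulerian precisely when it is connected \emph{and} every vertex of it has even degree. The connectivity half is literally condition (a), so the only real content is to match the even-degree condition with condition (b).

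For that, first I would note that for every vertex $v\in V(G)$ the degree of $v$ in $G\setminus S$ equals $d_G(v)-d_{G(S)}(v)$, since deleting the edges of $S$ removes from $v$ exactly the $d_{G(S)}(v)$ edges of $S$ incident with it. Hence $d_{G\setminus S}(v)$ is even if and only if $d_G(v)$ and $d_{G(S)}(v)$ have the same parity. Because $T$ is, by definition, the set of vertices of odd degree in $G$, this is the same as saying that $G\setminus S$ is even if and only if, for every $v$, the vertex $v$ has odd degree in the subgraph $(V(G),S)$ exactly when $v\in T$ --- which is precisely the statement that $T$ is the set of odd-degree vertices of $(V(G),S)$, i.e.\ that $S$ is a $T$-join. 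Conjoining this with the connectivity observation gives the equivalence, and therefore the characterization.

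I do not expect a genuine obstacle here; the argument is a short parity manipulation. The only points I would state with care are (i) that $T$-joins are only defined for even-sized vertex sets $T$, which causes no problem because the handshake lemma forces $|T|$ to be even; and (ii) that the bound $|S|\le k$ is never used in the equivalence itself, so it can simply be carried along unchanged on both sides.
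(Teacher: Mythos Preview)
Your argument is correct and is exactly the natural parity computation one would expect. The paper does not actually give a proof of this observation; it simply attributes the characterization to Cygan et al.\ and states it without argument, so there is nothing further to compare.
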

For a designated set \(T\subseteq{}V(G)\) of \emph{terminal}
vertices of graph \(G\), we call a set $S\subseteq E(G)$ a
\emph{co-connected $T$-join} of graph \(G\) if $(i)\;G\setminus S$
is connected and $(ii)\;S$ is a $T$-join. From
\autoref{obs:solution-characterization} we get that the \UEED
problem is equivalent to checking whether the given graph \(G\)
has a co-connected \(T\)-join of size at most \(k\), where $T$ is
the set of all \emph{odd-degree} vertices in $G$. We present an
algorithm which finds a co-connected $T$-join for an
\emph{arbitrary} (even-sized) set of terminals $T$ within the
claimed time-bound. That is, we solve the following more general
problem

\defparproblem{\CCTJ} 
{A connected graph $G$, an even-sized subset $T\subseteq V(G)$ and an integer $k$.}
{$k$}
{ Does there exist a co-connected $T$-join of \(G\) of size at most $k$?}
 
We design a dynamic programming algorithm for this problem where
the partial solutions which we store satisfy the first property of co-connected $T$-join 
%\autoref{obs:solution-characterization} 
and ``almost satisfy'' the
second property. To limit the number of partial solutions which we
need to store, we compute and store instead, at each step, a
\emph{representative family} of the partial solutions in the
corresponding co-graphic matroid. We start with the following
characterization of the $T$-joins of a graph $G$.
\begin{proposition}\label{prop:pathsystem-tjoin}\textup{\textbf{\cite[Proposition 1.1]{Frank94}}}
  Let \(T\) be an even-sized subset of vertices of a graph \(G\),
  and let \(\ell=\frac{|T|}{2}\). A subset \(S\) of edges of \(G\)
  is a \(T\)-join of \(G\) if and only if \(S\) can be expressed
  as a union of the edge sets of (i) \(\ell\) paths which connect
  disjoint pairs of vertices in \(T\), and (ii) zero or more
  cycles, where the paths and cycles are all pairwise
  edge-disjoint.
\end{proposition}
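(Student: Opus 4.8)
The plan is to prove the two implications separately, treating the ``if'' direction by a direct parity computation and the ``only if'' direction by induction on $|S|$.

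For the ``if'' direction, suppose $S$ is the edge set of a collection consisting of $\ell$ pairwise edge-disjoint paths $P_1,\dots,P_\ell$ whose endpoint pairs partition $T$, together with some cycles that are pairwise edge-disjoint and also edge-disjoint from the $P_i$. I would form the subgraph $H=(V(G),S)$ and compute $d_H(v)\bmod 2$ for each $v$: every cycle contributes an even amount to every degree, and each path $P_i$ contributes $2$ to each of its internal vertices and $1$ to each of its two (distinct) endpoints. Since the endpoint pairs partition $T$, each vertex of $T$ is an endpoint of exactly one $P_i$ and each vertex outside $T$ is an endpoint of none; hence $d_H(v)$ is odd precisely when $v\in T$, so $S$ is a $T$-join.

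For the ``only if'' direction, let $S$ be a $T$-join, i.e.\ the odd-degree vertices of $H=(V(G),S)$ are exactly $T$, and induct on $|S|$. If $T=\emptyset$ then every vertex of $H$ has even degree, so $S$ partitions into pairwise edge-disjoint cycles by the classical decomposition of an even graph into cycles (obtained by repeatedly extracting a cycle, which exists because the subgraph on the vertices of positive degree has minimum degree at least $2$); this gives the claimed decomposition with $\ell=0$ paths. If $T\neq\emptyset$, I would pick any $a\in T$; the component of $H$ containing $a$ has an even number of odd-degree vertices, so it contains some $b\in T\setminus\{a\}$, and I take a simple $a$--$b$ path $P$ in $H$, noting $E(P)\subseteq S$. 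Setting $S'=S\setminus E(P)$, deleting $E(P)$ flips the parity of $d_H$ at exactly $a$ and $b$ (a simple path uses two edges at each internal vertex and one at each end), so $S'$ is a $T'$-join with $T'=T\setminus\{a,b\}$ and $|T'|=2(\ell-1)$. Since $|S'|<|S|$, the induction hypothesis decomposes $S'$ into $\ell-1$ pairwise edge-disjoint paths whose endpoint pairs partition $T'$, plus pairwise edge-disjoint cycles; adjoining $P$ produces the required decomposition of $S$, with all parts pairwise edge-disjoint because $E(P)=S\setminus S'$.

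The two steps needing attention are the existence of the second terminal $b$ in the same component as $a$ (a handshake-lemma argument inside one component) and the base case, where I rely on the standard fact that an even graph is an edge-disjoint union of cycles. I expect no genuine difficulty here; the only thing to be careful about is that the statement allows an \emph{arbitrary} partition of $T$ into pairs — we do not control which pairs the paths realise — and that reattaching $P$ preserves pairwise edge-disjointness, both of which the induction handles automatically.
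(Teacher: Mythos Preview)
The paper does not prove this proposition; it is quoted verbatim from Frank's survey (cited as \cite[Proposition 1.1]{Frank94}) and used as a black box. Your argument is the standard proof of this classical characterisation of $T$-joins and is correct: the parity computation for the ``if'' direction is clean, and the induction for the ``only if'' direction goes through because any $a\in T$ lies in a component of $H$ containing another terminal $b$ (handshake lemma), the simple $a$--$b$ path in $H$ has at least one edge, and removing it shifts parities only at $a$ and $b$.
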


This proposition yields the following useful property of
\emph{inclusion-minimal} co-connected $T$-joins ({\em minimal}
co-connected $T$-joins for short) of a graph $G$.

% Now we give a characterization of inclusion-minimal co-connected
% $T$ join ({\em minimal} co-connected $T$-join for short) of a
% graph $G$.
\begin{lemma}
 \label{lemma:co-connected-T-join}
 Let \(T\) be an even-sized subset of vertices of a graph \(G\),
 and let \(\ell=\frac{|T|}{2}\). Let \(S\) be a minimal
 co-connected \(T\)-join of \(G\).  Then (i) the subgraph \(G(S)\)
 is a forest, and (ii) the set \(S\) is a union of the edge-sets
 of \(\ell\) pairwise edge disjoint paths which connect disjoint
 pairs of vertices in \(T\).
\end{lemma}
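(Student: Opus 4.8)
The plan is to derive both claims from the minimality of \(S\) together with Proposition~\ref{prop:pathsystem-tjoin}; no heavy machinery is needed. The guiding observation is that deleting the edge set of a cycle from a \(T\)-join produces another \(T\)-join, while deleting edges from \(S\) only \emph{adds} edges back to \(G\setminus S\) and hence cannot destroy its connectivity.

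First I would prove~(i) by contradiction. Suppose \(G(S)\) is not a forest; then it contains a simple cycle \(C\) with \(E(C)\subseteq S\). Put \(S' := S\setminus E(C)\). Every vertex has degree \(0\) or \(2\) in \(C\), so \((V(G),S)\) and \((V(G),S')\) have exactly the same odd-degree vertices, namely \(T\); hence \(S'\) is again a \(T\)-join. Moreover \(G\setminus S'\) is obtained from the connected spanning subgraph \(G\setminus S\) by re-inserting the edges of \(C\), so it stays connected. Thus \(S'\) is a co-connected \(T\)-join properly contained in \(S\), contradicting minimality; therefore \(G(S)\) is a forest. Part~(ii) then follows at once: Proposition~\ref{prop:pathsystem-tjoin} expresses \(S\) as the edge-disjoint union of the edge sets of \(\ell\) paths pairing up disjoint vertices of \(T\) together with the edge sets of zero or more further cycles, all pairwise edge-disjoint. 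Any such cycle would be a cycle subgraph of \(G(S)\), which~(i) forbids, so no cycles occur in the decomposition and \(S\) is exactly the union of the edge sets of \(\ell\) pairwise edge-disjoint paths connecting disjoint pairs of vertices of \(T\).

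I do not expect a genuine obstacle here. The only points worth stating carefully are that a graph which is not a forest does contain a \emph{simple} cycle — so that removing its edges changes each vertex degree by an even amount and thereby preserves the \emph{exact} terminal set \(T\), keeping us inside the class of \(T\)-joins — and that the connectivity requirement of a co-connected \(T\)-join is monotone under shrinking \(S\), since \(G\setminus S' \supseteq G\setminus S\) on the same vertex set whenever \(S'\subseteq S\). Both observations are routine, so the argument should go through cleanly.
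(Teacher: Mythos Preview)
Your proof is correct and follows essentially the same approach as the paper: both argue by contradiction that a cycle in \(G(S)\) would allow one to strip its edges from \(S\) while preserving both the \(T\)-join property (via the even degree change) and connectivity of \(G\setminus S\) (by monotonicity), and then invoke Proposition~\ref{prop:pathsystem-tjoin} to obtain the path decomposition once cycles are ruled out. Your write-up is in fact slightly more explicit about why the cycle must be simple and why connectivity is monotone, but the argument is the same.
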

\begin{proof}
  Suppose the subgraph $G(S)$ is not a forest. Then there exists a
  cycle $C$ in $G(S)$. The degree of any vertex \(v\) of \(G\) in
  the subgraph \(G(S\setminus{}E(C))\) is either the same as its
  degree in the subgraph \(G(S)\), or is smaller by exactly
  two. So the set \(S\setminus{}E(C)\) is also a $T$-join of
  \(G\).  And since the subgraph $G\setminus S$ is connected by
  assumption, we get that the strictly larger subgraph $G\setminus
  (S\setminus E(C))$ is also connected.  Thus $S\setminus E(C)$ is
  a co-connected $T$-join of \(G\) which is a strict subset of
  \(S\).  This contradicts the minimality of $S$, and hence we get
  that $G(S)$ is a forest.  

  Thus there are no cycles in the subgraph \(G(S)\), and hence we
  get from \autoref{prop:pathsystem-tjoin} that \(S\) is a union
  of the edge sets of \(\ell\) pairwise edge-disjoint paths which
  connect disjoint pairs of vertices in \(T\). 
\end{proof}
% \textbf{Note that the set of paths described in
%   \autoref{lemma:co-connected-T-join} are just pairwise
%   \emph{edge-disjoint}. \emph{Vertices} (including terminals) may
%   appear in more than one path as \emph{internal} vertices.}  
Note that the set of paths described in
Lemma~\ref{lemma:co-connected-T-join} are just pairwise
\emph{edge-disjoint}. \emph{Vertices} (including terminals) may
appear in more than one path as \emph{internal} vertices.  A
partial converse of the above lemma follows directly from
\autoref{prop:pathsystem-tjoin}.
% \todo[inline]{Philip check the english of the bold lines.}
\begin{lemma}
 \label{lemma:co-connected-path-system}
 Let \(T\) be an even-sized subset of vertices of a graph \(G\),
 and let \(\ell=\frac{|T|}{2}\). Let a subset \(S\subseteq E(G)\)
 of edges of \(G\) be such that (i) $G\setminus S$ is connected,
 and (ii) \(S\) is a union of the edge-sets of \(\ell\) pairwise
 edge-disjoint paths which connect disjoint pairs of vertices in
 \(T\). Then $S$ is a co-connected $T$-join.
\end{lemma}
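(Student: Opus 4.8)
The plan is to verify the two defining conditions of a co-connected $T$-join directly. The first condition---that $G\setminus S$ is connected---is exactly hypothesis (i), so nothing needs to be argued there. Hence the whole content of the lemma reduces to showing that $S$ is a $T$-join of $G$.

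For this I would invoke the ``if'' direction of \autoref{prop:pathsystem-tjoin}. Hypothesis (ii) says that $S$ is the union of the edge sets of $\ell=\frac{|T|}{2}$ pairwise edge-disjoint paths, each connecting a pair of vertices of $T$, with the $\ell$ pairs pairwise disjoint. Since there are $\ell$ such disjoint pairs and $|T|=2\ell$, these pairs partition $T$. Thus $S$ admits a decomposition of precisely the form demanded by \autoref{prop:pathsystem-tjoin}---namely $\ell$ pairwise edge-disjoint paths joining disjoint pairs of $T$, together with zero cycles (the proposition explicitly allows ``zero or more'' cycles). Therefore $S$ is a $T$-join of $G$.

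Putting the two observations together with the definition of a co-connected $T$-join yields the statement. There is essentially no obstacle here: the argument is immediate from \autoref{prop:pathsystem-tjoin}, and the only point worth a sentence of care is the remark that $\ell$ pairwise disjoint terminal pairs must exhaust $T$ (because $|T|=2\ell$), so that the hypothesis of this lemma genuinely matches the hypothesis of the ``if'' direction of \autoref{prop:pathsystem-tjoin}.
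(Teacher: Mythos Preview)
Your proposal is correct and follows exactly the same approach as the paper: invoke the ``if'' direction of \autoref{prop:pathsystem-tjoin} (with zero cycles) to conclude that $S$ is a $T$-join, and then combine this with hypothesis~(i) to obtain that $S$ is a co-connected $T$-join. The only difference is that you spell out the observation that the $\ell$ disjoint pairs exhaust $T$, which the paper leaves implicit.
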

\begin{proof}
%  $(\Rightarrow)$ Since $S$ is a co-connected $T$-join, $G\setminus S$ 
% is connected by definition. Now we show that $S$ is a minimal $T$-join as well. 
% Suppose not. Let $S'\subset S$ is a $T$-join. Since the subgraph $G\setminus S$ is connected, 
% the strictly larger subgraph $G\setminus S'$ is also connected. This implies $S'$ 
% is a co-connected $T$-join, which contradicts the minimality of $S$. Now by \autoref{prop:pathsystem-tjoin} 
% $S$ is a union of edge disjoint paths which connects disjoint pairs of vertices in $T$.
  Since $S$ is a union of the edge sets of \(\ell\) pairwise
  edge-disjoint paths which connect disjoint pairs of vertices in
  \(T\), we get from \autoref{prop:pathsystem-tjoin} that $S$ is a
  $T$-join. Since $G\setminus S$ is connected as well, $S$ is a
  co-connected $T$-join.  
\end{proof}

An immediate corollary of Lemma~\ref{lemma:co-connected-T-join} is
that for any set \(T\subseteq{}V(G)\), \emph{any} \(T\)-join of the 
graph \(G\) has at least \(|T|/2\) edges. Hence if \(|T|>2k\) then
we can directly return \NO as the answer for \CCTJ. So from now on
we assume that \(|T|\leq2k\).
%
%If \(|T|>2k\) then \emph{any} \(T\)-join of \(G\) must have more than \(k\) edges. Indeed, as any edge in a \(T\)-join can change the degree parity of at most its two end points. Hence we can directly say \NO.
% 
%If \(|T|>2k\) then---since each edge in a \(T\)-join can
%change the degree parity of at most its two end
%vertices---\emph{any} \(T\)-join of \(G\) must have more than \(k\) edges,and so we can directly say \NO. 
% So from now on we assume without loss of generality that
% \(|T|\leq2k\), and call the vertices of \(T\) as
% \emph{terminals}.
From Lemmas~\ref{lemma:co-connected-T-join}
and~\ref{lemma:co-connected-path-system} we get that to solve
\CCTJ it is enough to check for the existence of a pairwise
edge-disjoint collection of paths
\(\PP=\{P_{1},\dotsc,P_{\frac{|T|}{2}}\}\) such that (i) the
subgraph \((G\setminus E(\PP))\) is connected, (ii)
\(|E(\PP)|\leq{}k\), and (iii) the paths in \PP connect disjoint
pairs of terminals in $T$.  We use dynamic programming to find
such a path system. 

We first state some notation which we need to describe the dynamic
programming table. We use \QQ to denote the set of \emph{all} path
systems in \(G\) which satisfy the above conditions.
% \QQ is thus the collection of all \emph{minimal} solutions of the kind characterized by
% \autoref{lemma:pathsystem}. 
For \(1\leq{}i\leq{}k\) we use \(\QQ^{(i)}\) to denote the set of
all \emph{potential} \emph{partial} solutions of \emph{size}
\(i\) : Each \(\QQ^{(i)}\) is a collection of path systems
$\QQ^{(i)}=\{\PP^{(i)}_{1},\dotsc,\PP^{(i)}_{t}\}$
where each path system
\(\PP^{(i)}_{s}=\{P_{1},\dotsc,P_{r}\}\in\QQ^{(i)}\) has the
following properties:
 \begin{itemize}
 \item [(i)] The paths \(P_{1},\dotsc,P_{r}\) are pairwise
   edge-disjoint. 
 \item[(ii)] The end-vertices of the paths
   \(P_{1},\dotsc,P_{r}\) are all terminals and are pairwise
   disjoint, \emph{with one possible exception}.  One end-vertex
   (the \emph{final} vertex) of the path \(P_{r}\) may be a
   non-terminal, or a terminal which appears as an end-vertex of
   another path as well.
 \item[(iii)] $|E(\PP^{(i)}_s)|=i$, and the subgraph $G\setminus
   E({\PP^{(i)}_s})$ is connected.
% and the graph $G(E({\PP^{(i)}_s}))$ is a forest.
%  \item[(iv)] For each $1\leq{}j\leq{}r$, all the internal
%    vertices of the path $P_{j}$ belong to the set
%    \((V(G)\setminus{}T)\cup(\bigcup_{\ell=1}^{j-1}P_{\ell}\cap{}T)\).
 \end{itemize}   
 Note that the only ways in which a partial solution
 \(\PP^{(i)}_{s}\) may violate one of the conditions in
 Lemma~\ref{lemma:co-connected-path-system} are: (i) it may contain
 strictly less than \(\frac{T}{2}\) paths, and/or (ii) there may be
 a path \(P_{r}\) (and only one such), which has \emph{one}
 end-vertex \(v_{r}\) which is a non-terminal or is a terminal
 which is an end-vertex of another path as well. We call \(P_{r}\)
 the \emph{last path} in \(\PP^{(i)}_{s}\) and \(v_{r}\) the
 \emph{final vertex} of \(\PP^{(i)}_{s}\), and say that \(P_{r}\)
 \emph{ends at} vertex \(v_{r}\).  For a path system ${\cal
   P}=\{P_1,\ldots,P_r\}$ and $u\in V(G)\cup\{\epsilon \}$, we use
 $W({\cal P},u)$ to denote the following set.

% $$W({\cal P},u) =     \left\{ \begin{array}{lrr}
%       V^e({\cal P}) & \mbox{if} & u=\epsilon \\ 
%       V^e({\cal P}\setminus\{P_r\})\cup \{v~|~v \mbox{ is the start vertex of $P_r$} \} & \mbox{if} & u\neq\epsilon 
%              \end{array}\right.$$
           
\[
W(\PP,u) =     
\begin{cases}
  V^{e}(\PP) & \text{if } u=\epsilon \\ 
  (V^{e}(\PP\setminus\{P_{r}\}))\cup\{v\,|\,v\text{ is the initial vertex of }P_r \} & \text{if }u\neq\epsilon
\end{cases}
\]
Finally, for each \(1\leq{}i\leq{}k\), $T'\subseteq T$, and
\(v\in(V(G)\cup\{\epsilon\})\) we define
\[\QQ[i,T',v]=\{\PP\in\QQ^{(i)}\;|\;W({\cal P},v)=T'\text{, and if
}v\neq\epsilon\text{ then }v\text{ is the final vertex of } \PP
\}\] as the set of all potential partial solutions of size \(i\)
whose set of end vertices is exactly \(T' \cup \{v\}\). Observe
from this definition that in the case \(v=\epsilon\), the last
path \(P_{r}\) in each path system
\(\PP=\{P_{1},\dotsc,P_{r}\}\in\QQ[i,T',\epsilon]\) ends at
a ``good'' vertex; that is, at a terminal vertex which is
different from all the end vertices of the other paths
\(P_{1},\dotsc,P_{(r-1)}\) in \PP.

It is not difficult to see that this definition of \(\QQ[i,T',v]\)
is a correct notion of a partial solution for \CCTJ:

\begin{lemma}\label{lem:dp-correct}
  Let \((G,T,k)\) be a \YES instance of \CCTJ which has a minimal
  solution of size \(k'\leq{}k\), and let
  \(\ell=\frac{|T|}{2}\). Then for each \(1\leq{}i\leq{}k'\) there
  exist \(T'\subseteq{}T\), \(v\in(V(G)\cup\{\epsilon\})\), and
  path systems \(\PP=\{P_{1},P_{2},\dotsc,P_{r}\}\in\QQ[i,T',v]\)
  and \(\PP'=\{P'_{r},P'_{r+1},\dotsc,P'_{\ell}\}\) in \(G\)
  (where $E(P'_r)=\emptyset$ if $v=\epsilon$) such that (i)
  \(E(\PP)\cap{}E(\PP')=\emptyset\), (ii) \(P_{r}P'_{r}\) is a
  path in \(G\), and (iii)
  \(\PP\cup\PP'=\{P_{1},P_{2},\dotsc,P_{r}P'_{r},P_{r+1}',\dotsc,P'_{\ell}\}\)
  is an edge-disjoint path system whose edge set
  %\(E(P_{1},P_{2},\dotsc,P_{r}P'_{r},P_{r+1}',\dotsc,P'_{s})\) is
  is a solution to the instance \((G,T,k)\).
\end{lemma}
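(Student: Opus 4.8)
The plan is to reverse the role of the dynamic program: starting from a genuine minimal solution, I would exhibit, for each length $i$, a prefix of it that lies in the right cell of the table. Since $(G,T,k)$ is a \YES instance with a minimal solution $S$ of size $k'\le k$, Lemma~\ref{lemma:co-connected-T-join} gives that $S$ is the union of the edge sets of $\ell=\frac{|T|}{2}$ pairwise edge-disjoint paths $Q_1,\dots,Q_\ell$ connecting pairwise disjoint pairs of terminals of $T$, and that $G\setminus S$ is connected; in particular the $2\ell$ end-vertices of the $Q_j$ are distinct terminals. Because the $Q_j$ are edge-disjoint and each connects two distinct terminals (so has at least one edge), the partial sums $c_j:=\sum_{s=1}^{j}|E(Q_s)|$ satisfy $0=c_0<c_1<\dots<c_\ell=|S|=k'$. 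I would fix an arbitrary ordering of $Q_1,\dots,Q_\ell$ and an arbitrary orientation of each.

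Now fix $i$ with $1\le i\le k'$ and let $r\in\{1,\dots,\ell\}$ be the unique index with $c_{r-1}<i\le c_r$, so $a:=i-c_{r-1}$ lies in $\{1,\dots,|E(Q_r)|\}$. Split $Q_r$ at its $a$-th vertex into a prefix $P_r$ (its first $a$ edges, starting at the initial vertex of $Q_r$) and a suffix $P'_r$ (its remaining $|E(Q_r)|-a$ edges), so that $Q_r=P_rP'_r$; this concatenation is legal precisely because $Q_r$ is a simple path, hence $P_r$ and $P'_r$ meet only at their common endpoint. Set $\PP=\{Q_1,\dots,Q_{r-1},P_r\}$, put $P'_j:=Q_j$ for $r<j\le\ell$, and set $\PP'=\{P'_r,P'_{r+1},\dots,P'_\ell\}$. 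The claim is that, with the right choice of $v$ and $T'$, the pair $\PP,\PP'$ witnesses the lemma, and here I would split into two cases according to whether $a=|E(Q_r)|$.

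If $a=|E(Q_r)|$ (equivalently $i=c_r$), then $P_r=Q_r$ ends at a terminal and $P'_r$ is the trivial edge-less path at that terminal; I would take $v=\epsilon$ and $T'=V^e(\PP)$, the set of the $2r$ end-vertices of $Q_1,\dots,Q_r$, which are distinct terminals, so the final vertex of $P_r$ is a ``good'' vertex and $\PP\in\QQ[i,T',\epsilon]$. If $a<|E(Q_r)|$, then the final vertex $v_r$ of $P_r$ is strictly internal to $Q_r$; I would take $v=v_r\neq\epsilon$ and $T'=V^e(\{Q_1,\dots,Q_{r-1}\})\cup\{\text{initial vertex of }Q_r\}=W(\PP,v_r)$, which is a set of $2r-1$ distinct terminals, with $v_r$ playing the role of the single permitted exceptional end-vertex, so again $\PP\in\QQ[i,T',v]$. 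In either case, membership of $\PP$ in $\QQ^{(i)}$ is immediate: its paths are pairwise edge-disjoint as sub-paths of the edge-disjoint $Q_j$; $|E(\PP)|=c_{r-1}+a=i$; and $G\setminus E(\PP)$ is connected since $E(\PP)\subseteq S$ and $G\setminus S$ is connected. It then remains to verify conditions (i)--(iii): $E(\PP)\cap E(\PP')=\emptyset$ because $E(P_r)$ and $E(P'_r)$ partition $E(Q_r)$ while the $Q_j$ are edge-disjoint; $P_rP'_r=Q_r$ is a path in $G$; and $\PP\cup\PP'=\{Q_1,\dots,Q_\ell\}$ is an edge-disjoint path system whose edge set is $S$, which is a solution to $(G,T,k)$.

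Essentially everything here is bookkeeping against the definitions of $\QQ^{(i)}$, $\QQ[i,T',v]$ and $W(\PP,u)$, together with the fact that the $Q_j$ are simple. The one place that genuinely needs care --- and the point I would flag as the main pitfall --- is the case split at the cumulative boundaries: when $i=c_r$ one is forced into $v=\epsilon$ with an edge-less $P'_r$, whereas when $i$ falls strictly inside some $Q_r$ one is forced into $v=v_r\neq\epsilon$, and conflating these would break either the ``$E(P'_r)=\emptyset$ if $v=\epsilon$'' convention in the statement or the requirement that $v$ be the final vertex of $\PP$.
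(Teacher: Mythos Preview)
Your proposal is correct and follows essentially the same approach as the paper's proof: take the path system \(\{Q_1,\dots,Q_\ell\}\) guaranteed by Lemma~\ref{lemma:co-connected-T-join}, locate the unique path \(Q_r\) containing the \(i\)-th edge via partial sums, and split there, with the case distinction on whether \(i\) lands at a cumulative boundary (\(v=\epsilon\)) or strictly inside a path (\(v=v_r\)). Your verification is in fact more careful than the paper's sketch; in particular, you correctly set \(T'=W(\PP,v)\) to match the definition of \(\QQ[i,T',v]\), whereas the paper writes \(T'=T\cap V^e(\PP)\), which can differ in the corner case where the split vertex \(v_r\) happens itself to lie in \(T\).
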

%proof moved to appendix
\begin{proof}
  Let
  \(\hat{\PP}=\{\hat{P_{1}},\dotsc,\hat{P_{\ell}}\}\)
  be a path system in graph \(G\) which witnesses---as per
  \autoref{lemma:co-connected-T-join}---the fact that \((G,T,k)\) has a
  solution of size \(k'\). % is a
  % \YES instance.
  If \(i=\sum_{j=1}^{r}|E(\hat{P_{j}})|\) for some
  \(1\leq{}r\leq\ell\) then the path systems
  \(\PP=\{\hat{P_{1}},\hat{P_{2}},\dotsc,\hat{P_{r}}\}\in\QQ[i,T',v]\)
  and
  \(\PP'=\{\emptyset,\hat{P_{r+1}},\hat{P_{r+2}},\dotsc,\hat{P_{\ell}}\}\)
  satisfy the claim, where \(T'=T\cap{}V^{e}(\PP)\) and
  \(v=\epsilon\).  

  If \(i\) takes another value then let \(1\leq{}r\leq\ell\) be
  such that
  \(\sum_{j=1}^{r-1}|E(\hat{P_{j}})|<i<\sum_{j=1}^{r}|E(\hat{P_{j}})|\).
  ``Split'' the path \(\hat{P_{r}}\) as
  \(\hat{P_{r}}=\hat{P_{r}^{1}}\hat{P_{r}^{2}}\) such that
  \(\sum_{j=1}^{r-1}|E(\hat{P_{j}})|+|E(\hat{P_{r}^{1}})|=i\). Now
  the path systems
  \(\PP=\{\hat{P_{1}},\hat{P_{2}},\dotsc,\hat{P_{r-1}},\hat{P_{r}^{1}}\}\in\QQ[i,T',v]\)
  and
  \(\PP'=\{\hat{P_{r}^{2}},\hat{P_{r+1}},\hat{P_{r+2}},\dotsc,\hat{P_{\ell}}\}\)
  satisfy the claim, where \(T'=T\cap{}V^{e}(\PP)\) and \(v\) is the
  final vertex of the path \(\hat{P_{r}^{1}}\).  
\end{proof}

% \todo[inline]{Why do we need the above Lemma.\\The purpose of this
%   lemma is to rule out the following situation: \((G,T,k)\) is a
%   \YES instance, but for some \(1\leq{}i\leq{}k\) the sets
%   \(\QQ[i,T',v]\) are empty for \emph{all} choices of
%   \(T'\subseteq{}T\) and \(v\). If this happens, then our DP will
%   stop at this \(i\) and return \NO. This lemma guarantees that
%   this will not happen. -- Philip}
 
Given this notion of a partial solution the natural dynamic
programming approach is to try to compute, in increasing order of
\(1\leq{}i\leq{}k\), partial solutions $\QQ[i,T',v]$ for all
$T'\subseteq T$, \(v\in(V(G)\cup\{\epsilon\})\) at step $i$. But
this is not feasible in polynomial time because the sets
$\QQ[i,T',v]$ can potentially grow to sizes exponential in
\(|V(G)|\). Our way out is to observe that to reach a final
solution to the problem we do not need to store \emph{every}
element of a set $\QQ[i,T',v]$ at each intermediate step. Instead,
we only need to store a \emph{representative family} \RR of
partial solutions corresponding to $\QQ[i,T',v]$, where \RR has
the following property: If there is a way of extending---in the
sense of Lemma~\ref{lem:dp-correct}---any partial solution
\(\PP\in\QQ[i,T',v]\) to a final solution then there exists a
\(\hat{\PP}\in\RR\) which can be extended the \emph{same} way to a
final solution.

Observe now that our final solution and all partial solutions are
independent sets in the co-graphic matroid \(M_{G}\) of the input
graph \(G\). We use the algorithm of Lokshtanov et
al.~\cite{LokshtanovMPS14}---see \autoref{thm:repsetlovasz}---to
compute these representative families of potential partial
solutions at each intermediate step. In step \(i\) of the dynamic
programming we store, in place of the set $\QQ[i,T',v]$, its
\((k-i)\)-representative set
$\widehat{\QQ[i,T',v]}\subseteq_{rep}^{k-i} \QQ[i,T',v]$ with
respect to the co-graphic matroid \(M_{G}\); for the purpose of
this computation we think of each element \PP of \(\QQ[i,T',v]\)
as the \emph{edge set} \(E(\PP)\). Lemma~\ref{lem:repset-safe} below
shows that this is a safe step.
 %\autoref{alg:undirecteddp} formalizes this intuitive outline. 
Whenever we talk about representative families in this section,
% \textbf{\textit{it is always with respect to the co-graphic
%     matroid $M_G$ associated with $G$; we do not explicitly
%     mention the matroid from now on.}}  
it is always with respect to the co-graphic matroid $M_G$ associated
with $G$; we do not explicitly mention the matroid from now on.
We start with the following definitions.

\begin{definition}
  Let \(1\leq{}i\leq{}k\,,\,T'\subseteq{}T, \ell=\frac{|T|}{2}\) and
  \(v\in(V(G)\cup\{\epsilon\})\), and let $\QQ[i,T',v]$ be the
  corresponding set of partial solutions.  Let
  \(\PP=\{P_{1},\dotsc,P_{r}\}\) be a path system in the set
  $\QQ[i,T',v]$. Let \(\PP'=\{P'_{r},P'_{r+1},\dotsc,P'_{\ell}\}\)
   be a path system in \(G\) (where~$E(P'_r)=\emptyset$ if
  $v=\epsilon$) such that
%   \begin{enumerate}
%   \item  \(|E(\PP')|\leq(k-i)\), 
%   \item  \(P_{r}P_{r}'\) is a path in \(G\),
%\item 
%  \(\PP\cup\PP'=\{P_{1},P_{2},\dotsc,P_{r}P'_{r},P_{r+1}',\dotsc,P'_{l}\}\)
%  is an edge-disjoint path system that connects disjoint pairs of
%  terminals in $T$,  
%  \item $V^e(\PP\cup\PP')=T$ and 
%  \item \(G\setminus E(\PP\cup\PP')\) is
%  connected, 
%  \end{enumerate}
  (i) \(|E(\PP')|\leq(k-i)\), (ii) \(P_{r}P_{r}'\) is a path in
  \(G\), (iii)
  \(\PP\cup \PP'=\{P_{1},P_{2},\dotsc, P_{r}P'_{r},P_{r+1}',\dotsc,P'_{\ell}\}\)
  is an edge-disjoint path system that connects disjoint pairs of
  terminals in $T$, (iv) $V^e(\PP\cup\PP')=T$  and (v) \(G\setminus
  (E(\PP)\cup E(\PP'))\) is connected. Then ${\cal P}'$ is said to
  be an \textbf{extender} for $\cal P$. 
  
%  \todo{add remark about
%    above definition, especially the definition of
%    $\PP\cup\PP'$.\\I think we should not use \(\cup\) here, since
%    Diestel (and everyone else) uses \(\cup\) for graph union. I
%    have made this change, please see if it looks OK. -- Philip}
% and $\cal P$ and ${\cal P}'$ are said to be {\bf compatible}.   
% then there exists a path system
%  \(\PP^*=\{P^*_1,P^*_2,\dotsc,P^*_{r}\}\) in
%  ${\cal J}[i,T',v]$ such that
%
%  (i) \(P^{*}_{r}P_{r}'\) is a path in \(G\), (ii)
%  \(\PP^*\cup\PP'=\{P^*_1,P^*_2,\dotsc,P^*_{r}P'_{r},P'_{r+1},\dotsc,P'_{\ell}\}\)
%  is an edge-disjoint path system that connects disjoint pairs of
%  terminals in $T$, (iii) $V^e(\PP^*\cup\PP')=T$ and  (iv) \(G\setminus E(\PP^*\cup\PP')\) is  connected.
\end{definition} 
%\begin{remark}
%  If \(\PP'\) is an extender for \PP then---from the
%  definition---the meet \(\mu(P_{r},P_{r}')\) exists, and so we
%  have that \(E(\PP)\cap{}E(\PP')=\emptyset\).
%\end{remark}

\begin{definition}
  Let \(1\leq{}i\leq{}k\,,\,T'\subseteq{}T\) and
  \(v\in(V(G)\cup\{\epsilon\})\), and let $\QQ[i,T',v]$ be the
  corresponding set of partial solutions.  We say that ${\cal
    J}[i,T',v]\subseteq \QQ[i,T',v]$ is a {\bf path-system
    equivalent set} to $\QQ[i,T',v]$ if the following holds: If
  $\PP \in \QQ[i,T',v]$ and $\PP'$ be an extender for $\PP$, then
  there exists $\PP^* \in {\cal J}[i,T',v]$ such that $\PP'$ is
   an extender for $\PP^*$ as well.  We say that $ {\cal J}[i,T',v]
  \pathsym{k-i}\QQ[i,T',v]$.
%$ that is compatibnle
% then there exists a path system
%  \(\PP^*=\{P^*_1,P^*_2,\dotsc,P^*_{r}\}\) in
%  ${\cal J}[i,T',v]$ such that
%%    \begin{enumerate}
%%   \item \(P^{*}_{r}P_{r}'\) is a path in \(G\), 
%%   \item  \(\PP^*\cup\PP'=\{P^*_1,P^*_2,\dotsc,P^*_{r}P'_{r},P'_{r+1},\dotsc,P'_{l}\}\)
%%  is an edge-disjoint path system that connects disjoint pairs of
%%  terminals in $T$
%%   \item $V^e(\PP^*\cup\PP')=T$   and 
%%  \item \(G\setminus E(\PP^*\cup\PP')\) is  connected.
%%  \end{enumerate}
%  %(i) \(E(\PP^{*})\cap{}E(\PP')=\emptyset\), 
%  (i) \(P^{*}_{r}P_{r}'\) is a path in \(G\), (ii)
%  \(\PP^*\cup\PP'=\{P^*_1,P^*_2,\dotsc,P^*_{r}P'_{r},P'_{r+1},\dotsc,P'_{\ell}\}\)
%  is an edge-disjoint path system that connects disjoint pairs of
%  terminals in $T$, (iii) $V^e(\PP^*\cup\PP')=T$ and  (iv) \(G\setminus E(\PP^*\cup\PP')\) is  connected.
\end{definition}
% $
% \DD[i,T',v] =
% \left\{ \begin{array}{rcl}
% \displaystyle\bigcup_{\substack{t\in
%                    T' \\ (t,v)\in E(G)}} (\DD[i-1,T'\setminus
%                \{t\},\epsilon]\bullet\{(t,v)\})$ 
%                $\quad\quad\quad\quad\medcup\displaystyle\bigcup_{e=(u,v)\in
%   E(G)}(\DD[i-1,T',u]\bullet \{e\}) & \mbox{for}
% & |z|<3 \\ 0 & \mbox{for} & 3\leq|z|\leq5 \\
% \sin\overline{z} & \mbox{for} & |z|>5
% \end{array}\right.
% $
The next lemma shows that a representative family is indeed a path-system equivalent set  to $\QQ[i,T',v]$. 
%Also from now onwards we awe assume
%\begin{lemma}\label{lem:repset-safe}
%  Let \((G,T,k)\) be an instance of \CCTJ and let
%  $l=\frac{|T|}{2}$. Let \(M_{G}\) be the 
%  co-graphic matroid of \(G\).  Let \(1\leq{}i\leq{}k\,,\,T'\subseteq{}T\)
%  and \(v\in(V(G)\cup\{\epsilon\})\) and 
%$\QQ[i,T',v]$  be the corresponding set as defined above. 
% Let $\widehat{\QQ[i,T',v]}$
%  be a \((k-i)\)-representative set of \(\QQ[i,T',v]\) with
%  respect to \(M_{G}\), then $\widehat{\QQ[i,T',v]}$ is a path-system equivalent set  to $\QQ[i,T',v]$.  
%\end{lemma}
\begin{lemma}\label{lem:repset-safe}
  Let \((G,T,k)\) be an instance of \CCTJ such that the smallest co-connected $T$-join of $G$ has size $k$  and let
  \(\ell=\frac{|T|}{2}\).  Let
  \(1\leq{}i\leq{}k\,,\,T'\subseteq{}T\) and
  \(v\in(V(G)\cup\{\epsilon\})\), and let $\QQ[i,T',v]$ be the
  corresponding set of partial solutions. If 
  $\widehat{\QQ[i,T',v]} \subseteq_{rep}^{k-i} \QQ[i,T',v]$, then
  $\widehat{\QQ[i,T',v]} \pathsym{k-i} \QQ[i,T',v]$. More generally, if
  ${\cal J}[i,T',v]\subseteq \QQ[i,T',v]$ and  
  $\widehat{{\cal
      J}[i,T',v]} \subseteq_{rep}^{k-i} {\cal J}[i,T',v]$ then
  $\widehat{{\cal J}[i,T',v]} \sqsubseteq_{rep}^{k-i} {\cal
    J}[i,T',v]$. 
\end{lemma}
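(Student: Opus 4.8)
The plan is to unwind the two definitions (representative family and path-system equivalent set) and show that the combinatorial condition in the latter is an instance of the matroid condition in the former, applied with $q = k-i$. Concretely, suppose $\widehat{{\cal J}[i,T',v]} \subseteq_{rep}^{k-i} {\cal J}[i,T',v]$, take an arbitrary $\PP \in {\cal J}[i,T',v]$ and an extender $\PP'$ for $\PP$; I want to produce some $\PP^* \in \widehat{{\cal J}[i,T',v]}$ for which $\PP'$ is also an extender. The first step is to translate everything to edge sets: set $X = E(\PP)$ and $Y = E(\PP')$. Since $\PP'$ is an extender, $|Y| \le k-i$, the edge sets are disjoint (the combined system is edge-disjoint), and $G \setminus (E(\PP) \cup E(\PP'))$ is connected, which is precisely the statement that $X \cup Y$ is independent in the co-graphic matroid $M_G$. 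Thus $X \in {\cal J}[i,T',v]$ is disjoint from $Y$, $|Y| \le k-i$, and $X \cup Y \in \II$, so the defining property of a $(k-i)$-representative family hands us some $\widehat{X} \in \widehat{{\cal J}[i,T',v]}$ disjoint from $Y$ with $\widehat{X} \cup Y \in \II$.

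The second step — and this is where the real content lies — is to check that the path system $\PP^*$ with $E(\PP^*) = \widehat{X}$ actually admits $\PP'$ as an extender, not merely that the edge sets behave well in the matroid. The matroid only guarantees the connectivity condition (v) and the disjointness/size conditions (i). What must still be verified is conditions (ii)–(iv): that $P_r^* P_r'$ is a genuine path in $G$, that $\PP^* \cup \PP'$ is an edge-disjoint path system connecting disjoint pairs of terminals, and that $V^e(\PP^* \cup \PP') = T$. Here I would lean on the structure shared by all members of a single cell $\QQ[i,T',v]$: every $\PP^* \in \QQ[i,T',v]$ has the same multiset of "committed" endpoint data — its completed paths already pin down the set $T'$ of terminals used, and (when $v \neq \epsilon$) its last path $P_r^*$ ends exactly at the prescribed final vertex $v$ and starts at a vertex recorded in $T'$ via $W(\PP^*, v) = T'$. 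Because $\PP'$ was an extender for $\PP$, its first path $P_r'$ starts at $v$ and its remaining paths connect precisely the terminals in $T \setminus T'$ in disjoint pairs; none of this references $\PP$ beyond the cell data $(i, T', v)$. So the very same $\PP'$ glues onto $P_r^*$ (giving the path $P_r^* P_r'$, valid since $\widehat{X} \cap Y = \emptyset$), and the resulting system connects disjoint pairs covering all of $T$, with $V^e = T' \cup (T \setminus T') = T$. The edge-disjointness of $\PP^* \cup \PP'$ is exactly $\widehat{X} \cap Y = \emptyset$ together with the internal edge-disjointness already built into membership in $\QQ^{(i)}$ and into $\PP'$ being a path system.

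For the first, special sentence of the lemma — that $\widehat{\QQ[i,T',v]} \subseteq_{rep}^{k-i} \QQ[i,T',v]$ implies $\widehat{\QQ[i,T',v]} \sqsubseteq_{peq}^{k-i} \QQ[i,T',v]$ — I would just note that it is the instance of the general statement obtained by taking ${\cal J}[i,T',v] = \QQ[i,T',v]$, so it needs no separate argument. The main obstacle, as flagged above, is the bookkeeping in the second step: one must argue carefully that the endpoint/terminal-assignment conditions (ii)–(iv) of "extender" are determined entirely by the cell indices $(i,T',v)$ and by $\PP'$, so that replacing $\PP$ by $\PP^*$ within the same cell cannot break them; the matroid machinery is doing nothing there and it is purely a matter of reading off the definitions of $\QQ[i,T',v]$, $W(\PP,u)$, and the composition operation $P_r^* P_r'$. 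Everything else is a direct substitution into Definition of $q$-representative with $q = k-i$, $X = E(\PP)$, $Y = E(\PP')$.
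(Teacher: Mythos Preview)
Your overall strategy matches the paper's: translate to edge sets, invoke the representative-family property in the co-graphic matroid to obtain $\PP^{*}$, and then verify the remaining extender conditions using only the cell data $(i,T',v)$. The case analysis on $v=\epsilon$ versus $v\neq\epsilon$ and the endpoint bookkeeping are essentially what the paper does as well.

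However, there is a genuine gap in your second step. You assert that ``the very same $\PP'$ glues onto $P_r^{*}$ (giving the path $P_r^{*}P_r'$, valid since $\widehat{X}\cap Y=\emptyset$)''. Edge-disjointness of $E(\PP^{*})$ and $E(\PP')$ does \emph{not} ensure that the concatenation $P_r^{*}P_r'$ is a path: it only guarantees no edge is repeated, so a priori $P_r^{*}P_r'$ is merely an Eulerian walk (vertices of $P_r^{*}$ other than $v$ may well appear on $P_r'$). Condition~(ii) of ``extender'' requires an honest path, and condition~(iii) requires a path system. The paper confronts exactly this issue: it first observes that $E(\PP^{*})\cup E(\PP')$ is a co-connected $T$-join of size at most $k$ (via \autoref{prop:pathsystem-tjoin}, which tolerates the walk), and then invokes the hypothesis ``the smallest co-connected $T$-join of $G$ has size $k$'' to conclude the size is exactly $k$; minimality then forces $G(E(\PP^{*})\cup E(\PP'))$ to be a forest by \autoref{lemma:co-connected-T-join}, whence $P_r^{*}P_r'$ is a genuine path. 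You never use this minimality hypothesis, which is precisely the missing ingredient.
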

\begin{proof}
%   Since the edge set \(E(P_{1},P_{2},\dotsc,
%   P_{r}P'_{r},\dotsc,P'_{s})=E(\PP\cup\PP')\) is a solution for
%   \((G,T,k)\), we get (by definition of co-connected $T$-join) that
%   the subgraph $G\setminus (E(\PP\cup \PP'))$ is connected.  
%Let
%  \(\PP=\{P_{1},\dotsc,P_{r}\}\) be a path system in the set
%   $\QQ[i,T',v]$. 
%    Furthermore,  let 
%   \(\PP'=\{P'_{r},P'_{r+1},\dotsc,P'_{\ell}\}\) be a path system in \(G\)
%   (where~$E(P'_r)=\emptyset$ if $v=\epsilon$) such that 
% %   \begin{enumerate}
% %   \item  \(|E(\PP')|\leq(k-i)\), 
% %   \item  \(P_{r}P_{r}'\) is a path in \(G\),
% %\item 
% %  \(\PP\cup\PP'=\{P_{1},P_{2},\dotsc,P_{r}P'_{r},P_{r+1}',\dotsc,P'_{l}\}\)
% %  is an edge-disjoint path system that connects disjoint pairs of
% %  terminals in $T$,  
% %  \item $V^e(\PP\cup\PP')=T$ and 
% %  \item \(G\setminus E(\PP\cup\PP')\) is
% %  connected, 
% %  \end{enumerate}
%    (i) \(|E(\PP')|\leq(k-i)\), (ii) \(P_{r}P_{r}'\) is a path in \(G\),
%   (iii)
%   \(\PP\cup\PP'=\{P_{1},P_{2},\dotsc,P_{r}P'_{r},P_{r+1}',\dotsc,P'_{\ell}\}\)
%   is an edge-disjoint path system that connects disjoint pairs of
%   terminals in $T$,  (iv) $V^e(\PP\cup\PP')=T$ and (v) \(G\setminus E(\PP\cup\PP')\) is
%   connected. 
% Let \(\PP'=\{P'_{r},P'_{r+1},\dotsc,P'_{\ell}\}\) be a path system in \(G\)
% (where~$E(P'_r)=\emptyset$ if $v=\epsilon$) such that $\PP'$ is an extender for $\PP$.
  We first prove the first claim. The second claim of the lemma follows by similar arguments. Let $\widehat{\QQ[i,T',v]}
  \subseteq_{rep}^{k-i} \QQ[i,T',v]$, let
  \(\PP=\{P_{1},\dotsc,P_{r}\}\) be a path system in the set
  $\QQ[i,T',v]$, and let
  \(\PP'=\{P'_{r},P'_{r+1},\dotsc,P'_{\ell}\}\) be a path system
  in \(G\) (where~$E(P'_r)=\emptyset$ if $v=\epsilon$) which is an
  extender for \PP. We have to show that there exists a path
  system \(\PP^{*}\in\widehat{\QQ[i,T',v]}\) such that \(\PP'\) is
  an extender for \(\PP^{*}\) as well. Since \(\PP'\) is an
  extender for \PP we have, by definition, that
  (i) \(|E(\PP')|\leq(k-i)\), (ii) \(P_{r}P_{r}'\) is a path in
  \(G\), (iii)
  \(\PP\cup \PP'=\{P_{1},\dotsc,P_{r}P'_{r},P_{r+1}',\dotsc,P'_{\ell}\}\)
  is an edge-disjoint path system that connects disjoint pairs of
  terminals in $T$, (iv) $V^e(\PP\cup \PP')=T$ and (v) \(G\setminus
  (E(\PP)\cup E(\PP'))\) is connected.

  Since (i) \(\PP\in\QQ[i,T',v]\), (ii)
  \(E(\PP)\cap{}E(\PP')=\emptyset\), (iii) \(G\setminus
  (E(\PP)\cup E(\PP'))\) is connected, and (iv)
  $\widehat{\QQ[i,T',v]}\subseteq_{rep}^{k-i} \QQ[i,T',v]$, there
  exists a path system
  \(\PP^{*}=\{P^{*}_{1},P^{*}_{2},\dotsc,P^{*}_{r}\}\) in
  $\widehat{\QQ[i,T',v]}$ such that (i)
  \(E(\PP^{*})\cap{}E(\PP')=\emptyset\) and (ii) $G\setminus
  (E(\PP^*)\cup E(\PP'))$ is connected. This follows directly from the
  definitions of a co-graphic matroid and a representative set.

% We now show that   $E(\PP^*)\cup E(\PP')$ is a co-connected $T$-join of $G$. 
We now show that \(\PP'\) is indeed an extender for \(\PP^{*}\).
  Since $\PP$ and \(\PP^{*}\) both belong to the set
  \(\QQ[i,T',v]\) we get that \(|E(\PP)|=|E(\PP^{*})|=i\) and that
  \(\PP^{*}\) is an edge-disjoint path system.  And since
  \(E(\PP^{*})\cap{}E(\PP')=\emptyset\), we have that
  \(\PP^{*}\cup {}\PP'=\{P_1^*,\dotsc,P_{r-1}^*,P_r^*P_r',P_{r+1}',\dotsc,P_{\ell}'\}\)
  is an edge-disjoint path system but for $P_r^*P_r'$  which could be an {\em Eulerian walk} (walk where vertices could repeat but not the edges).    
  Now we prove that the ``path system'' 
  \(\PP^{*}\cup \PP'\) connects disjoint pairs of terminals in
  $T$, but for a pair which is connected by an Eulerian walk.
% where vertices could repeat but not the edges.    
We now consider two cases for the ``vertex'' $v$.
%  \begin{description}
% \item[$v=\epsilon:$]
% {\bf $v=\epsilon:$}

\smallskip
\noindent
{\bf Case 1: $v=\epsilon$.}
In this case, since \(\PP\) and \(\PP^{*}\)
  both belong to the set \(\QQ[i,T',\epsilon]\) we have that
  $V^e(\PP)=V^e(\PP^*)=T'$. Also \(E(P_{r}')=\emptyset\), and
  \(\PP\cup \PP'\) is the path system
  \(\{P_{1},\dotsc,P_{r},P'_{r+1},P'_{r+2},\dotsc,P'_{\ell}\}\)
  with exactly \(\ell=\frac{|T|}{2}\) paths which connect disjoint  pairs of terminals in $T$.  
Since \(V^{e}(\PP\cup \PP')=T\), \(\PP=\{P_{1},\dotsc,P_{r}\}\) and
  \(V^{e}(\PP)=T'\), we get that $V^e(\PP')=T\setminus T'$.  Now
  since \(V^e(\PP^*)=T'\) it follows that
  $\PP^*\cup\PP'$ is a path system which
  connects disjoint pairs of terminals in $T$.
%   , where $P_r^*P_r'$  which could be an Eulerian walk.  
% \item[$v\neq\epsilon:$] 
% $\mathbf{v\neq\epsilon:}$

\smallskip
\noindent
{\bf Case 2: $v\neq \epsilon$.}
In this case, since \(\PP\) and
  \(\PP^{*}\) both belong to the set \(\QQ[i,T',v]\) we have that
  $V^e(\PP)=V^e(\PP^*)=T'\cup\{v\}$, and that the final vertex of
  each of these two path systems is \(v\). Also
  \(\PP\cup \PP'=\{P_{1},\dotsc,P_{r}P_{r}',P'_{r+1},P'_{r+2},\dotsc,P'_{\ell}\}\)
  is a path system with exactly \(\ell=\frac{|T|}{2}\) paths which
  connect disjoint pairs of terminals in $T$. Since 
  (i) \(V^{e}(\PP\cup \PP')=T\), (ii)
  \(\PP=\{P_{1},\dotsc,P_{r}\}\), (iii)
  \(\PP'=\{P_{r}',P'_{r+1},\dotsc,P'_{\ell}\}\), (iv)
  \(V^{e}(\PP)=T'\cup\{v\}\), and (v) the final vertex of the
  path \(P_{r}\) in \PP is \(v\), we get that (i) the initial
  vertex of the path \(P_{r}'\) in $\PP'$ is \(v\) and (ii)
  \(V^{e}(\PP')=(T\setminus T')\cup\{v\}\).  Now since
  \(V^e(\PP^*)=T'\cup\{v\}\) and (ii) the final vertex of
  \(\PP^{*}\) is \(v\) it follows that $\PP^*\cup \PP'$ is a path
  system which connects disjoint pairs of terminals in $T$, where $P_r^*P_r'$  which could be an Eulerian walk.  
%  \end{description}
% \end{itemize}
% \end{enumerate}

Thus, we have shown that 
  \(\PP^{*}\cup \PP'\) connects disjoint pairs of terminals in
  $T$  with paths, except  for $P_r^*P_r'$  which could be an Eulerian walk.  
Combining this with \autoref{prop:pathsystem-tjoin} and the fact that $G\setminus
   (E(\PP^*)\cup E(\PP'))$ is connected, we get that $E(\PP^*)\cup E(\PP')$ is a co-connected $T$-join of $G$. 

   Finally, we show that   \(\PP^{*}\cup \PP'\) is a path system. Towards this we only need to show that 
$P_r^*P_r'$  is not an Eulerian walk  but a path.  Observe that  $|E(\PP^*)\cup E(\PP')|\leq |E(\PP^*)|+|E(\PP')|\leq k$. However,  
$E(\PP^*)\cup E(\PP')$ is a co-connected $T$-join of $G$ and thus by our assumption, $E(\PP^*)\cup E(\PP')$ has size exactly $k$ -- thus a 
minimum sized solution.  By Lemma~\ref{lemma:co-connected-T-join}  this implies that $E(\PP^*)\cup E(\PP')$ is a forest and hence 
  \(P_{r}^{*}P_{r} \) is a path in \(G\).  This completes the proof.
%  If \(v=\epsilon\) then $E(P'_r)=\emptyset$ and it is vacuously
%  the case that \(P^{*}_{r}P_{r}'=P^{*}_{r}\) is a path in \(G\).
%  If \(v\neq\epsilon\) then: since the paths \(P_{r}\) and
%  \(P_{r}^{*}\) both belong to path systems which in turn belong
%  to the set ${\QQ[i,T',v]}$, both these paths end at the vertex
%  \(v\).  So \(P_{r}^{*}P_{r}'\) is a \emph{walk} in
%  \(G\). However, the smallest co-connected $T$-join of $G$ has
%  size $k$ and thus $|E(\PP^*)\cup E(\PP')|=k$. 
%
%\todo[inline]{I
%    don't see how the smallest size of a co-connected \(T\)-join
%    being \(k\) implies $|E(\PP^*)\cup E(\PP')|=k$. The path
%    system \(\PP^{*}\) being a member of $\widehat{\QQ[i,T',v]}$
%    only gives us the co-connectedness of $E(\PP^*)\cup E(\PP')$,
%    and not that $E(\PP^*)\cup E(\PP')$ is a \(T\)-join. Also:
%    even if $E(\PP^*)\cup E(\PP')$ is indeed a \(T\)-join, what
%    prevents it from being a non-\emph{minimal} \(T\)-join? If it
%    is not a minimal co-connected \(T\)-join, then we only know
%    that $|E(\PP^*)\cup E(\PP')|\geq{}k$. } 
%
%This implies that
%  \(P_{r}^{*}P_{r}'\) is a path in \(G\)
%  (\autoref{lemma:co-connected-T-join}).
%%   -- else we can  
%%  ,
%%  Since the final vertex of $P_r'$ does not belong to $T'$,
%%  \(\mu(P^{*}_{r},P_{r}')\) is a path in \(G\).
%    Thus in every
%  case \(P^{*}_{r}P_{r}'\) is a path in \(G\).
% The second claim of the proof follows by similar arguments.
%This completes the proof. 
\end{proof}
For our proofs we also need the transitivity property of the relation $\pathsym{q}$. 
\begin{lemma}
 \label{lemma:transitivity of peq}
 The relation $\pathsym{q}$ is transitive.
\end{lemma}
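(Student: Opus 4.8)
The plan is to unwind the definition of $\pathsym{q}$ directly and chase an extender through two successive path-system equivalent sets. Suppose we have three sets of path systems $\mathcal{A}, \mathcal{B}, \mathcal{C}$ over the same index $[i,T',v]$ (i.e.\ all subfamilies of the same $\QQ[i,T',v]$), with $\mathcal{A} \pathsym{q} \mathcal{B}$ and $\mathcal{B} \pathsym{q} \mathcal{C}$; we must show $\mathcal{A} \pathsym{q} \mathcal{C}$. Recall that by definition $\mathcal{A} \pathsym{q} \mathcal{B}$ means $\mathcal{A} \subseteq \mathcal{B}$ and for every $\PP \in \mathcal{B}$ and every extender $\PP'$ for $\PP$ there is some $\PP^{*} \in \mathcal{A}$ for which $\PP'$ is also an extender. (Here $q = k-i$; the notion of ``extender'' from the preceding definition depends only on $i$, $T'$, $v$, $G$, $T$, $k$, and in particular on the bound $|E(\PP')| \le k-i$, so it is the same notion throughout.)

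The key steps, in order: First I would take an arbitrary $\PP \in \mathcal{C}$ and an arbitrary extender $\PP'$ for $\PP$. Since $\mathcal{B} \pathsym{q} \mathcal{C}$ and $\PP \in \mathcal{C}$, there exists $\PP^{\dagger} \in \mathcal{B}$ such that $\PP'$ is an extender for $\PP^{\dagger}$. Now apply the second hypothesis: since $\mathcal{A} \pathsym{q} \mathcal{B}$, $\PP^{\dagger} \in \mathcal{B}$, and $\PP'$ is an extender for $\PP^{\dagger}$, there exists $\PP^{*} \in \mathcal{A}$ such that $\PP'$ is an extender for $\PP^{*}$. Finally, $\mathcal{A} \subseteq \mathcal{B} \subseteq \mathcal{C}$ gives the containment requirement, so $\mathcal{A} \pathsym{q} \mathcal{C}$, which is exactly transitivity.

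The only genuine subtlety — and the one place I would be careful — is to confirm that the ``extender'' relation attached to a triple $[i,T',v]$ is the \emph{same} relation for all three families, so that an object witnessing extendability for $\PP^{\dagger}$ (which lives in $\mathcal{B}$) is literally usable as a witness for $\PP$ (which lives in $\mathcal{C}$) and for $\PP^{*}$ (which lives in $\mathcal{A}$). This holds because a path system $\PP'$ being an extender for a given $\PP$ is a condition phrased purely in terms of $\PP'$ together with the data $(G,T,k,i)$ — the bound $|E(\PP')|\le k-i$, the path $P_rP_r'$ condition, edge-disjointness, $V^e(\PP\cup\PP')=T$, and connectivity of $G\setminus(E(\PP)\cup E(\PP'))$ — and all the path systems here share the same $i$ (hence the same $k-i$) and lie in the same $\QQ[i,T',v]$. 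There is no real obstacle; the statement is essentially a syntactic chase once this observation is in place, and I would simply present it as a two-line argument.
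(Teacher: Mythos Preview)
Your argument is correct and follows exactly the same two-step chase as the paper's proof: take $\PP\in\mathcal{C}$ with extender $\PP'$, lift it to some $\PP_b\in\mathcal{B}$ via $\mathcal{B}\pathsym{q}\mathcal{C}$, then to some $\PP_a\in\mathcal{A}$ via $\mathcal{A}\pathsym{q}\mathcal{B}$. Your additional remarks on the containment $\mathcal{A}\subseteq\mathcal{B}\subseteq\mathcal{C}$ and on why the notion of extender is uniform across the three families are sound elaborations, but the paper's own proof omits them and presents only the bare chase.
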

%moved to appendix
\begin{proof}
 Let $ {\cal A} \pathsym{q} {\cal B}$ and ${\cal B}\pathsym{q} {\cal C}$. We need to show that 
 ${\cal A} \pathsym{q} {\cal C}$.  
 Let $\PP\in \cal C$ and ${\PP}'$ be an extender for $\PP$. By the definition of ${\cal B}\pathsym{q} {\cal C}$, 
 there exists ${\PP}_b\in \cal B$ such that ${\PP}'$ is also an extender of ${\PP}_b$. Since 
 ${\cal A}\pathsym{q} {\cal B}$, there exists ${\PP}_a\in \cal A$ such that ${\PP}'$ is also an extender of ${\PP}_a$.  This implies ${\cal A} \pathsym{q} {\cal C}$.
\end{proof} 

Our algorithm is based on dynamic programming and stores a table $\DD[i,T',v]$ for all $i\in \{0,\ldots,k\}$, 
$T'\subseteq T$ and $v\in V(G)\cup \{\epsilon\}$. The idea is that $\DD[i,T',v]$ will {\em store a path-system equivalent set  to $\QQ[i,T',v]$.}  That is,  $\DD[i,T',v] \pathsym{k-i}\QQ[i,T',v]$. The recurrences for dynamic programming is given by the following. 

For $i=0$, we have the following cases. 
\begin{equation}
\label{ddzero}
\DD[0,T',v] := 
\begin{cases}
\{\emptyset\} & \text{if } T'=\emptyset  \mbox{ and }  v=\epsilon \\
\emptyset  & \text{otherwise } 
\end{cases}
\end{equation}

For $i\geq 1$, we have the following cases based on whether $v=\epsilon$ or not. 
\begin{eqnarray}
%\DD[0,\emptyset,\epsilon]:=& \{\emptyset\} \label{ddzero}\\
%\mbox{For $i=0$, $T'\subseteq T, T'\neq \emptyset$}
%\DD[0,\emptyset,\epsilon]:=& \{\emptyset\} \label{ddzero}\\
\DD[i,T',v] := &\bigg( \displaystyle\bigcup_{\substack{t\in T' \\ (t,v)\in E(G)}}&  \DD[i-1,T'\setminus
               \{t\},\epsilon]\bullet\{(t,v)\}  \bigg)   \bigcup \nonumber \\ 
               &\bigg(\displaystyle\bigcup_{(u,v)\in E(G)}&   \DD[i-1,T',u]\bullet \{ (u,v)\}\bigg) \label{ddv}\\
\DD[i,T',\epsilon] := & \bigg( \displaystyle  \bigcup_{\substack{t_1,t_2\in T' \\ (t_1,t_2)\in E(G)}}&  \DD[i-1,T'\setminus
                \{t_1,t_2\},\epsilon]\bullet\{(t_1,t_2)\}\bigg)\bigcup \nonumber\\
               &\bigg( \displaystyle\bigcup_{\substack{t\in T' \\ (u,t)\in E(G)}}& 
               \DD[i-1,T'\setminus \{t\},u]\bullet\{(u,t)\}\bigg) \label{ddepsilon}
\end{eqnarray}

%\begin{eqnarray}
%\DD[0,\emptyset,\epsilon]:=& \{\emptyset\} \label{ddzero}\\
%\DD[i,T',v] := &\displaystyle\bigcup_{\substack{t\in T' \\ (t,v)\in E(G)}}& \Big(\DD[i-1,T'\setminus
%               \{t\},\epsilon]\bullet\{(t,v)\} \Big) \medcup \nonumber \\ 
%               &\displaystyle\bigcup_{(u,v)\in E(G)}&  \Big(\DD[i-1,T',u]\bullet \{ (u,v)\}\Big) \label{ddv}\\
%\DD[i,T',\epsilon] := & \displaystyle  \bigcup_{\substack{t_1,t_2\in T' \\ (t_1,t_2)\in E(G)}}& \Big(\DD[i-1,T'\setminus
%                \{t_1,t_2\},\epsilon]\bullet\{(t_1,t_2)\}\Big)\medcup \nonumber\\
%               &\displaystyle\bigcup_{\substack{t\in T' \\ (u,t)\in E(G)}}& 
%               \Big(\DD[i-1,T'\setminus \{t\},u]\bullet\{(u,t)\}\Big) \label{ddepsilon}
%\end{eqnarray}

The next lemma will be used in proving the correctness of the algorithm. 
\begin{lemma}
 \label{lemma:DP table entry peq}
 % Let \((G,T,k)\) be an instance of \CCTJ such that the smallest co-connected $T$-join of $G$ has size $k$. 
For all $i\in\{0,\ldots,k\}, T'\subseteq T, v\in V(G)\cup\{\epsilon\}$, $\DD[i,T',v] \pathsym{k-i} \QQ[i,T',v]$.
%For all $i \in \{0, \ldots, k\}$, $T' \subseteq T$ and $v \in V(G) \cup \{ \epsilon \}$
%the following holds.
%If $\DD[i',T'',u] \pathsym{k-i'} \QQ[i',T'',u]$ for all $i' < i$, $T'' \subseteq T'$
%and $u \in V(G) \cup \{ \epsilon \}$,
%then $\DD[i,T',v] \pathsym{k-i} \QQ[i,T',v]$.

%  \textbf{ where $\DD[i,T',v]$ is the DP table entry computed at \autoref{alg:vnotepsilon} or 
%  \autoref{alg:visepsilon} in \autoref{alg:undirecteddp}}
\end{lemma}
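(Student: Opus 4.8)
The plan is to prove Lemma~\ref{lemma:DP table entry peq} by induction on $i$, using the recurrences \eqref{ddzero}--\eqref{ddepsilon} together with the tools already assembled: Lemma~\ref{lem:repset-safe} (a representative family is a path-system equivalent set), Lemma~\ref{lemma:transitivity of peq} (transitivity of $\pathsym{q}$), and the combinatorial characterization of partial solutions via $\QQ[i,T',v]$. The base case $i=0$ is immediate: the only size-$0$ path system is the empty one, which lies in $\QQ[0,\emptyset,\epsilon]$ and in no other $\QQ[0,T',v]$, so $\DD[0,T',v]$ as defined in \eqref{ddzero} literally equals $\QQ[0,T',v]$ and hence is trivially $(k)$-representative for it.

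For the inductive step, fix $i\geq 1$ and assume $\DD[i-1,\cdot,\cdot]\pathsym{k-i+1}\QQ[i-1,\cdot,\cdot]$ for all index triples. First I would establish the ``unfolding'' identity at the level of the true partial-solution families: every $\PP\in\QQ[i,T',v]$ is obtained from some $\PP_0\in\QQ[i-1,T'',u]$ by appending a single edge $e$ via the $\circ$ operation, where $(T'',u,e)$ ranges exactly over the cases appearing in \eqref{ddv} or \eqref{ddepsilon} (depending on whether $v=\epsilon$), and conversely every such $\PP_0\circ e$ whose edge set is independent in $M_G$ lies in $\QQ[i,T',v]$. This is a routine but slightly tedious case analysis on whether the last edge of $\PP$ closes off a terminal-to-terminal path or extends the current last path; it is essentially bookkeeping about the definitions of $W(\PP,u)$, $V^e$, and the $\circ$ and $\bullet$ operations, and it yields $\QQ[i,T',v] = \bigcup (\QQ[i-1,T'',u]\bullet\{e\})$ over the same index set as in the recurrence.

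Next I would show that the $\bullet\{e\}$ operation is compatible with $\pathsym{\cdot}$: if ${\cal A}\pathsym{k-i+1}{\cal B}$ then ${\cal A}\bullet\{e\}\pathsym{k-i}{\cal B}\bullet\{e\}$. The point is that an extender $\PP'$ for some $\PP_0\circ e\in {\cal B}\bullet\{e\}$ of ``budget'' $\leq k-i$ can be combined with $e$ to form an extender $\PP''$ for $\PP_0$ itself of budget $\leq k-i+1$ (prepending the edge $e$ to the appropriate path, which is legitimate precisely because $\PP_0\circ e$ was defined); by ${\cal A}\pathsym{k-i+1}{\cal B}$ there is $\PP_0^*\in{\cal A}$ for which $\PP''$ is an extender, and then $\PP_0^*\circ e$ — which is still a valid element of ${\cal A}\bullet\{e\}$ since $e\notin E(\PP_0^*)$ and independence is inherited from $E(\PP_0^*\circ e)\subseteq E(\PP_0^*)\cup\{e\}$ being used inside a co-connected $T$-join — admits $\PP'$ as an extender. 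Taking the union over the index set and applying the obvious fact that $\pathsym{\cdot}$ is preserved under (finite) unions, I get $\DD[i-1,\cdot,\cdot]\bullet\{e\}$-unions $\pathsym{k-i}$ the corresponding $\QQ[i-1,\cdot,\cdot]\bullet\{e\}$-unions, i.e. $\left(\bigcup \DD[i-1,T'',u]\bullet\{e\}\right)\pathsym{k-i}\QQ[i,T',v]$. The algorithm stores the $(k-i)$-representative subfamily of this union as $\DD[i,T',v]$, and Lemma~\ref{lem:repset-safe} (second, ``more general'' claim, applied with ${\cal J}$ being that union) tells us the representative subfamily is still $\pathsym{k-i}$ the union; combining with transitivity (Lemma~\ref{lemma:transitivity of peq}) gives $\DD[i,T',v]\pathsym{k-i}\QQ[i,T',v]$, completing the induction.

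The main obstacle I anticipate is the second step — proving the exact unfolding identity $\QQ[i,T',v]=\bigcup(\QQ[i-1,T'',u]\bullet\{e\})$ with the index set matching \eqref{ddv}/\eqref{ddepsilon}. The subtlety is that when $v=\epsilon$ the ``last path'' of a size-$i$ system can end at a terminal either because a fresh single-edge path $(t_1,t_2)$ was just added (both endpoints terminals) or because the previous last path ending at a non-terminal $u$ was extended by $(u,t)$ to terminate at a terminal $t$; one must check these are the only two possibilities and that they correspond precisely to the two unions in \eqref{ddepsilon}, with the $W$-set bookkeeping ($v$ being the final vertex, $T'$ being the end-vertex set) working out in each case. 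Symmetrically for $v\neq\epsilon$ one splits on whether the final edge $(u,v)$ closes a path starting at a terminal $t\in T'$ (so the predecessor state is $[i-1,T'\setminus\{t\},\epsilon]$) or merely extends the current last path (predecessor state $[i-1,T',u]$). The one genuinely nontrivial point hiding here is that when we append an edge we must argue $E(\PP_0\circ e)$ is independent in $M_G$ iff both $E(\PP_0)$ is independent and $G\setminus(E(\PP_0)\cup\{e\})$ is connected, so that the $\bullet$ operation's independence filter exactly matches membership condition (iii) of $\QQ^{(i)}$; this is immediate from the definition of the co-graphic matroid but is worth stating explicitly.
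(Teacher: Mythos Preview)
Your approach is correct and essentially the same as the paper's: both argue by induction on $i$, peel off the last edge of a given $\PP\in\QQ[i,T',v]$ to land in the appropriate $\QQ[i-1,\cdot,\cdot]$, push the edge into the extender, apply the induction hypothesis, and then reattach the edge. The paper carries this out element-by-element with exactly the four-way case split you anticipate (on $v=\epsilon$ versus $v\neq\epsilon$, and on whether the peeled edge is the only edge of $P_r$), whereas you package the same computation as an unfolding identity for $\QQ[i,T',v]$ plus a general compatibility lemma for $\bullet$ with $\pathsym{}$.
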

\begin{proof}
Let $\II$ denote the family of independent sets in $M_G$, the co-graphic matroid associated with $G$. 
 We prove the lemma using induction on $i$. 
The base case is $i=0$. 
%We first consider the case when $i = 0$.
Observe that for $i=0$,
for all $T'\subseteq T$ and $v\in V(G)\cup \{\epsilon \}$ we have that $\QQ[0,T',v]=\emptyset$. 
So ideally we should set 
$\DD[0,T',v]=\QQ[0,T',v]=\emptyset$. However, in the recursive steps of the algorithm we need to use 
the $\bullet$ {\em operation} between two families of sets, and to make this meaningful, we define  
$\DD[0,\emptyset,\epsilon]= \{\emptyset\}$ and $\emptyset$ otherwise.  
%Observe that empty path system belongs to 

% \todo{since an empty path system is in $\DD[0,\emptyset,\epsilon]$ and not in any other we can argue like that?}. 

% For $i=0$, since 
% $\DD[0,\emptyset,\epsilon]=\QQ[0,\emptyset,\epsilon]=\{\emptyset\}$, 
% $\DD[0,\emptyset,\epsilon]\pathsym{k-i} \QQ[0,\emptyset,\epsilon]$. 

Now we prove that the claim holds for $i\geq 1$. 
 Let us also assume that  by induction hypothesis the claim is true for all $i'<i$. 
% and $i\geq 1$. 
 %Assume the claim is true for all $i-1$. 
 Fix a $T'\subseteq T$, and $v\in V(G)\cup\{\epsilon\}$ and let  $\QQ[i,T',v]$ be the corresponding set of partial solutions. 
 %\textbf{ for which $\QQ[i,T',v]$ is defined properly}. 
Let \(\PP=\{P_{1},\dotsc,P_{r}\}\in\QQ[i,T',v]\)  and 
\(\PP'=\{P'_{r},P'_{r+1},\dotsc,P'_{\ell}\}\) be a path system such that 
$\PP'$ is an extender for $\PP$. We need to show that there exists a $\PP^*\in\DD[i,T',v]$ 
such that $\PP'$ is also an extender for $\PP^*$. 

\smallskip
\noindent
{\bf Case 1: $v\neq \epsilon$.}  Consider the path system
  \(\PP=\{P_{1}, \dotsc,P_{r}\}\in\QQ[i,T',v]\). $\PP$ has
  \(i\) edges and its set of end-vertices is
  \(T'\cup\{v\}\). Also, its final vertex is \(v\). Let \((u,v)\)
  be the last edge in path \(P_{r}\). % Then \(t\in{}T'\).
  Let \(P_{r}''\) be the path obtained by deleting edge \((u,v)\)
  from \(P_{r}\). More precisely: If \(P_{r}\) has at least two
  edges then \(P_{r}''\) is the non-empty path obtained by deleting
  the edge \((u,v)\) and the vertex \(v\) \emph{from \(P_{r}\)},
  and if \((u,v)\) is the only edge in \(P_{r}\) (in which case
  \(u\in{}T'\)) then \(P_{r}''=\emptyset\).
  Note that the initial vertex of $P_r'\in \PP'$ is $v$. 
  Let $uP_r'$ be the path obtained by concatenating the path $uv$ and $P_r'$. 
  Let \(\PP_1=\{P_{1},\dotsc,P_{r}''\}\) and \(\PP_1'=\{uP'_{r},P'_{r+1},\dotsc,P'_{\ell}\}\). 
  Then \(\PP_1\) has
  \((i-1)\) edges and $\PP_1'$ is an extender for $\PP_1$. 
  Now we consider two cases:
  \begin{description}
  \item[\((u,v)\) is the only edge in \(P_{r}\):] Here
    \(P_{r}''=\emptyset\) and \(u\in{}T'\); let \(t=u\). 
Note that $\PP_1=\{P_{1},\dotsc,P_{r-1}\}\in \QQ[i-1,T'\setminus\{t\},\epsilon]$. Hence by induction hypothesis 
there exists $\PP_1^*\in \DD[i-1,T'\setminus\{t\},\epsilon]$ such that $\PP_1'$ is 
also an extender for $\PP_1^*$. Since $\PP_1'$ is an extender 
for $\PP_1^*$, $E(\PP_1^*) \cup E(\PP_1')\in \II$ (by the definition of extender). This implies that 
$E(\PP_1^*)\cup \{(t,v)\}\in \II$.
Since $\PP_1^*\in \DD[i-1,T'\setminus\{t\},\epsilon]$  and 
$(t,v)\in E(G)$, by \autoref{ddv}, we get a path system 
${\PP^*}\in \DD[i,T',v]$ by adding the new path \(P_{r}= tv\) to \({\PP_1^*}\).  
Since $\PP_1'$ is an extender of $\PP_1^*$, $\PP'$ is an extender of $\PP^*$ as well. 
% -------------------------------------------------
%     , and from
%     \autoref{lem:repset-safe} we get \todo{Strictly speaking,
%       \autoref{lem:repset-safe} does not immediately imply this
%       when \(T'\neq{}T\). So we need a corollary to get this from
%       the lemma, but perhaps it is OK to omit this corollary?}
%     that there is a path system
%     \(\hat{\PP'}\in\DD[i-1,T'\setminus\{t\},\epsilon]\) such that
%     we can obtain the path system \PP by adding the new path
%     \(P_{r}=\{(t,v)\}\) to \(\hat{\PP'}\).
\item[\((u,v)\) is not the only edge in \(P_{r}\):] Here
    \(P_{r}''\neq\emptyset\), and \(u\) is the final vertex in
    \(P_{r}'\). 
%     If \(v\in{}T'\) then \(v\) is a ``repeated''
%     end-vertex in $\PP$, and the set of end-vertices of $\(\PP_1\) is
%     \(T'\cup\{u\}\). If \(v\notin{}T'\) then \(v\) does not appear
%     twice as an end-vertex in \(P_{r}\), and the set of
%     end-vertices of \(\PP'\) is again \(T'\cup\{u\}\). In either
%     case \(\PP'\) is a path system with (i) \((i-1)\) edges, (ii)
%     \(T'\cup\{u\}\) as its set of end-vertices, (iii) \(u\) as its
%     final vertex, and (iv) all the other ``good'' properties of
%     \PP. 
    Hence \(\PP_1=\{P_{1},\dotsc,P_{r}''\} \in\QQ[i-1,T',u]\). 
    Since $\PP_1'$ is an extender for $\PP_1$, by induction hypothesis 
    there exists $\PP_1^*\in \DD[i-1,T',u]$ such that $\PP_1'$ is 
also an extender for $\PP_1^*$. 
% Since $\PP_1'$ is an expander 
By the definition of extender, we have that $E(\PP_1^*) \cup E(\PP_1')\in \II$ . This implies that 
$E(\PP_1^*)\cup \{(u,v)\}\in \II$.
Since $\PP_1^*\in \DD[i-1,T',u]$ 
and $(u,v)\in E(G)$, by \autoref{ddv}, we get a path system 
${\PP^*}\in \DD[i,T',v]$ by adding the new edge \(\{(u,v)\}\) to \({\PP_1^*}\). 
Since $\PP_1'$ is an extender of $\PP_1^*$, $\PP'$ is an extender of $\PP^*$ as well.
%     \autoref{lem:repset-safe} we get that there is a path system
%     \(\hat{\PP'}\in\DD[i-1,T',u]\) such that we can obtain the
%     path system \PP by extending the path \(P_{r}'\) in
%     \(\hat{\PP'}\) with the one edge \((u,v)\).
\end{description}

\smallskip
\noindent
{\bf Case 2: $v= \epsilon$.}  We have that \(\PP=\{P_{1},\dotsc,P_{r}\}\in\DD[i,T',\epsilon]\). 
Then \PP has \(i\) edges, its set of end-vertices is \(T'\), and no
  end-vertex repeats. Let \((u,t)\) be the last edge in path
  \(P_{r}\). Then \(t\in{}T'\). Let \(P_{r}''\) be the path
  obtained by deleting edge \((u,t)\) from \(P_{r}\). More
  precisely: If \(P_{r}\) has at least two edges then \(P_{r}''\)
  is the non-empty path obtained by deleting the edge \((u,t)\)
  and the vertex \(t\) \emph{from \(P_{r}\)}, and if \((u,t)\) is
  the only edge in \(P_{r}\) then \(P_{r}''=\emptyset\). Let
  \(\PP_1=\{P_{1},\dotsc,P_{r}''\}\) and 
  \(\PP_1'=\{ut, P'_{r},P'_{r+1},\dotsc,P'_{\ell}\}\). 
  Then \(\PP_1\) has
  \((i-1)\) edges and $\PP_1'$ is an extender for $\PP_1$.
  Now we consider two cases:
  \begin{description}
  \item[\((u,t)\) is the only edge in \(P_{r}\):] Here
    \(P_{r}''=\emptyset\), and \(\{u,t\}\subseteq{}T'\). Let
    \(t_{1}=u,t_{2}=t\). Then \(\PP_1\) is a path system 
in \(\QQ[i-1,T'\setminus\{t_{1},t_{2}\},\epsilon]\). By induction hypothesis 
there exists $\PP_1^*\in \DD[i-1,T'\setminus\{t_{1},t_{2}\},\epsilon]$ such that 
$\PP_1'$ is also an extender of $\PP_1^*$. 
By the definition of extender, we have that $E(\PP_1^*) \cup E(\PP_1')\in \II$. This implies that 
$E(\PP_1^*)\cup \{(t_1,t_2)\}\in \II$.
Since $\PP_1^*\in \DD[i-1,T'\setminus\{t_{1},t_{2}\},\epsilon]$ 
and $(t_1,t_2)\in E(G)$, by \autoref{ddepsilon}, we get a path system 
${\PP^*}\in \DD[i,T',v]$ by adding the new path \( t_1t_2\) to \({\PP_1^*}\). 
Since $\PP_1'$ is an extender of $\PP_1^*$, $\PP'$ is an extender of $\PP^*$ as well.
  \item[\((u,t)\) is not the only edge in \(P_{r}\):] Here
    \(P_{r}''\neq\emptyset\), \(u\) is the final vertex in
    \(P_{r}''\). 
Then \(\PP_1\in\QQ[i-1,(T'\setminus{}t),u]\). By induction hypothesis 
there exists $\PP_1^*\in \DD[i-1,(T'\setminus{}t),u]$ such that 
$\PP_1'$ is also an extender of $\PP_1^*$. 
By the definition of extender, we have that $E(\PP_1^*) \cup E(\PP_1')\in \II$. This implies that 
$E(\PP_1^*)\cup \{(u,t)\}\in \II$.
Since $\PP_1^*\in \DD[i-1,(T'\setminus{}t),u]$ 
and $(u,t)\in E(G)$, by \autoref{ddepsilon}, we get a path system 
${\PP^*}\in \DD[i,T',\epsilon]$ by adding the new edge \( (u,t)\) to \({\PP_1^*}\). 
Since $\PP_1'$ is an extender of $\PP_1^*$, $\PP'$ is an extender of $\PP^*$ as well.  
\end{description}

In both cases above we showed that $\DD[i,T',v] \pathsym{k-i} \QQ[i,T',v]$. 
\end{proof}

%
%
%\begin{algorithm}[t]
%\KwIn{A connected undirected graph $G$, a set of terminals $T\subseteq V(G)$, and an integer $k$}
%\KwOut{ A co-connected $T$-join of size at most $k$, if exists, otherwise \No}
%% $T=\{v\in V(G)\;|\; v \text{ has odd degree in } G\}$\\
% \If{$|T|>2k$}
% { 
% \KwRet \NO
% }
%$\DD[0,\emptyset,\epsilon]\leftarrow\{\emptyset\}$\\
%\label{alg:init}
%\For{$i \in \{1,2,\ldots,k\}$}{
%\label{alg:outerFor}
%\For{$T'\subseteq T$ and $v\in V(G)\cup \{\epsilon\}$}{
%\If{$v\neq\epsilon$}{
%$\DD[i,T',v]\leftarrow \displaystyle\bigcup_{\substack{t\in
%                   T' \\ (t,v)\in E(G)}} (\DD[i-1,T'\setminus
%               \{t\},\epsilon]\bullet\{(t,v)\})$\\ 
%\label{alg:vnotepsilon}
%               $\quad\quad\quad\quad\medcup\displaystyle\bigcup_{e=(u,v)\in
%  E(G)}(\DD[i-1,T',u]\bullet \{e\})$
%}
%\label{alg:vnotepsilon2}
%\If{$v=\epsilon$}{
%  $\DD[i,T',\epsilon]\leftarrow \displaystyle\bigcup_{\substack{t_1,t_2\in
%    T' \\ (t_1,t_2)\in E(G)}} (\DD[i-1,T'\setminus
%\{t_1,t_2\},\epsilon]\bullet\{(t_1,t_2)\})$\\
%\label{alg:visepsilon}
%$\quad\quad\quad\quad\;\;\medcup\displaystyle\bigcup_{\substack{t\in T' \\ (u,t)\in E(G)}} (\DD[i-1,T'\setminus \{t\},u]\bullet\{(u,t)\})$
%\label{alg:visepsilon2}
%}
%Compute $\widehat{\DD[i,T',v]} \subseteq^{k-i}_{rep} \DD[i,T',v]$
%in $M_G$ using \autoref{thm:repsetlovasz}\\
%\label{alg:makerepset}
%% \BlankLine
%$\DD[i,T',v]\leftarrow\widehat{\DD[i,T',v]}$
%}
%\If {$\DD[i,T,\epsilon]\neq\emptyset$}{\KwRet~$S\in \DD[i,T,\epsilon]$}
%}
% \KwRet \NO
%
%\caption{Solving \CCTJ}
%  \label{alg:undirecteddp}
%\end{algorithm}
%
%!TEX root = main-euler-del.tex 
\noindent 
{\bf Algorithm, Correctness and Running Time.}
We now describe the main steps of  the algorithm. It finds a smallest sized 
co-connected $T$-join (of size at most $k$)  for $G$. The algorithm iteratively tries to find a solution of size 
$\frac{|T|}{2}\leq k' \leq k$ and returns a solution corresponding to the smallest $k'$ for which it 
succeeds; else it returns {\sc No}. 
%The smallest $k'\leq k$ for which we succeed to find a solution we return {\sc Yes}, else 
%we return {\sc No}.    
By Lemma~\ref{lem:repset-safe} it is enough, in the
dynamic programming (DP) table, to store the representative set
$\widehat{\QQ[i,T',v]}\subseteq_{rep}^{k-i} \QQ[i,T',v]$ instead
of the complete set $\QQ[i,T',v]$, for all
\(i\in\{1,2,\dotsc,k\}\), \(T'\subseteq T\) and
\(v\in(V(G)\cup\{\epsilon\})\).  In the algorithm we compute and
store the set $\widehat{\QQ[i,T',v]}$ in the DP table entry
${\DD[i,T',v]}$.  We follow Equations~\ref{ddzero}, \ref{ddv} and 
\ref{ddepsilon} and fill the table ${\DD[i,T',v]}$. For $i=0$ we use 
Equation~\ref{ddzero} and fill the table.  After this 
%We initialize a ``sentinel'' entry
%\({\DD[0,\emptyset,\epsilon]}\) to the (sensible) value
%\(\{\emptyset\}\), and then 
we compute the values of ${\DD[i,T',v]}$
in increasing order of $i$ from $1$ to $k$. At the $i^{th}$
iteration of the \textbf{for} loop, we
compute ${\DD[i,T',v]}$ from the DP table entries computed at the
previous iteration.  Since we need to keep the size of potential partial solutions in
check, we compute the representative family
$\widehat{\DD[i,T',v]}$ for each DP table entry ${\DD[i,T',v]}$
constructed in the $i^{th}$ iteration %(\autoref{alg:makerepset})
and then set ${\DD[i,T',v]}\leftarrow \widehat{\DD[i,T',v]}$.  By
the definition of $\QQ[i,T,\epsilon]$ and
Lemma~\ref{lemma:co-connected-path-system}, any path system in
$\DD[i,T,\epsilon]$ is a solution to the instance \((G,T,k)\); we
check for such a solution as the last step. This completes the
description of the algorithm.

The correctness of the algorithm follows from the following. By Lemma~\ref{lemma:DP table entry peq} 
we know that  $\DD[i,T',v] \pathsym{k-i} \QQ[i,T',v]$ and by  Lemma~\ref{lem:repset-safe}  we 
have that $\widehat{\DD[i,T',v]}  \pathsym{k-i} \DD[i,T',v] $. Thus, by transitivity of $\pathsym{q}$ (by Lemma~\ref{lemma:transitivity of peq}) 
we have that  $\widehat{\DD[i,T',v]}  \pathsym{k-i}  \QQ[i,T',v]$. This completes the proof of correctness. 

We now compute an upper bound on the running time of
the algorithm.

\begin{lemma}
%   \autoref{alg:undirecteddp} 
The above algorithm 
runs in time
  $\OO(2^{(2+\omega{})k}\cdot{}n^2m^3k^5 )+m^{\OO(1)}$ where
  \(n=|V(G)|\) and $m=|E(G)|$. 
 %and \(\omega\) is the exponent of matrix   multiplication.
\end{lemma}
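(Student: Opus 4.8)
The plan is to charge the running time to the work of filling in the dynamic‑programming table $\DD[\cdot,\cdot,\cdot]$ one cell at a time, bounding separately the number of cells and the per‑cell cost. There are at most $\OO(2^{2k}kn)$ cells: $i$ takes the $k+1$ values in $\{0,\dots,k\}$; since we have already argued that we may assume $|T|\le 2k$, the subset $T'\subseteq T$ takes at most $2^{|T|}\le 2^{2k}$ values; and $v$ takes the $n+1$ values in $V(G)\cup\{\epsilon\}$. For each cell the algorithm (a) assembles the ``raw'' entry $\DD[i,T',v]$ from the previously computed, already pruned entries using \autoref{ddzero}, \autoref{ddv} and \autoref{ddepsilon} together with the $\bullet$‑operation, and then (b) replaces it by a $(k-i)$‑representative family computed through \autoref{thm:repsetlovasz}. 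The total running time is then $\OO(2^{2k}kn)$ times the larger of these two per‑cell costs, plus a one‑time $m^{\OO(1)}$ term for computing the $\mathbb{F}_{2}$‑representation of the co‑graphic matroid $M_G$.

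For step (a) I would first record that, after pruning, \autoref{thm:repsetlovasz} forces every stored entry to have size at most $\rho\cdot b\cdot\binom{b+q}{b}$ with $b\le k$, $q\le k$, and $\rho:=|E(G)|-|V(G)|+1\le m$ the rank of $M_G$; since here $\binom{b+q}{b}=\binom{(i-1)+(k-i+1)}{i-1}=\binom{k}{i-1}\le 2^{k}$, this is at most $\rho k2^{k}\le mk2^{k}$. Each of \autoref{ddv} and \autoref{ddepsilon} is a union of $\OO(m)$ families (the unions run over edges of $G$, of which there are $m$), and building each term amounts to one $\bullet$‑operation: for every path system $\PP$ in the relevant stored entry we form $\PP\circ e$ and test independence of $E(\PP\circ e)$ in $M_G$, a rank computation on a submatrix of the representation costing $m^{\OO(1)}$. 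Hence the raw entry $\DD[i,T',v]$ has size at most $t:=\OO(m^{2}k2^{k})$ and is produced in time $t\cdot m^{\OO(1)}$.

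For step (b) I would invoke \autoref{thm:repsetlovasz} on this family of $t$ independent sets, each of size $b=i\le k$, to obtain a $(k-i)$‑representative family, so $q=k-i\le k$, the matroid rank is $\rho\le m$, the ground set has size $m$, and each operation over $\mathbb{F}_{2}$ costs $\OO(1)$. Using $\binom{b+q}{b}=\binom{k}{i}\le 2^{k}$, the cost is
\[
\OO\!\left(2^{k}\,t\,k^{3}m^{2}+t\,2^{(\omega-1)k}(km)^{\omega-1}\right)+m^{\OO(1)}.
\]
Substituting $t=\OO(m^{2}k2^{k})$ makes the exponential part of this $2^{\omega k}\cdot\mathrm{poly}(n,m,k)$; the $\omega$‑dependent polynomial factors are then bounded using $\omega<3$, and the $\bullet$‑operation overhead from step (a) is folded into the polynomial factor. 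Multiplying by the $\OO(2^{2k}kn)$ cells turns $2^{\omega k}$ into $2^{(2+\omega)k}$ and, after collecting the polynomial factors (and keeping the one‑time $m^{\OO(1)}$ cost of building the representation separate), yields the claimed bound $\OO(2^{(2+\omega)k}n^{2}m^{3}k^{5})+m^{\OO(1)}$.

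The only genuinely delicate point — and the reason for working with representative families at all — is the size bound $t=2^{\OO(k)}\mathrm{poly}(n,m)$ in step (a). It hinges on the fact that every path system stored at level $i$ has an edge set of size \emph{exactly} $i\le k$, so the representative‑family bound $\rho\,b\,\binom{b+q}{b}$ of \autoref{thm:repsetlovasz} specializes to $\rho k\binom{k}{i}\le\rho k2^{k}$, because $\binom{k}{i}$ never exceeds $2^{k}$; this is precisely what replaces the $2^{\OO(k\log k)}$ blow‑up inherent in the colour‑coding approach of Cygan et al. Everything else is bookkeeping: correctness of the recurrences and of interleaving the $\bullet$‑operations with the pruning is already guaranteed by \autoref{lemma:co-connected-path-system}, \autoref{lem:repset-safe}, \autoref{lemma:DP table entry peq}, and the transitivity of $\pathsym{q}$ (\autoref{lemma:transitivity of peq}).
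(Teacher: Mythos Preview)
Your approach is exactly the paper's: count the $\OO(2^{2k}kn)$ DP cells, bound each raw entry $\DD[i,T',v]$ by (number of terms in the union) times (size of a pruned entry at level $i-1$), and charge the per-cell representative-family computation via \autoref{thm:repsetlovasz}. The exponential factor $2^{(2+\omega)k}$ is derived correctly.

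There is one slip in the polynomial bookkeeping, however. You bound the number of terms in the unions of \autoref{ddv} and \autoref{ddepsilon} by $\OO(m)$ on the grounds that ``the unions run over edges of $G$.'' But in those recurrences the vertex $v$ is \emph{fixed}, so the union ranges only over the at most $n-1$ edges incident to $v$, together with at most $|T'|\le 2k$ choices of a terminal; the paper bounds this by $(2k)^{2}+2kn=\OO(kn)$, not $\OO(m)$. With your $\OO(m)$ estimate the raw entry has size $t=\OO(m^{2}k\,2^{k})$, and carrying your own arithmetic through gives a total of $\OO\bigl(2^{(2+\omega)k}\,n\,m^{4}\,k^{5}\bigr)$: the polynomial factor comes out as $nm^{4}$, not the claimed $n^{2}m^{3}$, and the former does not in general imply the latter (for dense graphs $nm^{4}$ is larger). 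Replacing your $\OO(m)$ by $\OO(kn)$, and keeping the finer size bound $mi\binom{k}{i-1}$ for the pruned entries so that the Vandermonde identity $\sum_{i}\binom{k}{i}\binom{k}{i-1}=\binom{2k}{k-1}\le 4^{k}$ absorbs the sum over $i$, recovers precisely the paper's $n^{2}m^{3}k^{5}$ factor.
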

\begin{proof}
  Let $1 \leq i \leq k$ and $T'\subseteq T$ and
  \(v\in{}(V(G)\cup\{\epsilon\})\) be fixed, and let us consider
  the running time of 
%   \autoref{alg:makerepset}, 
  computing $\widehat{\DD[i,T',v]}$. That is, the running time to compute 
 $(k-i)$-representative family of  $\DD[i,T',v]$.  
We know that
  the co-graphic matroid $M_G$ is representable over
  \(\mathbb{F}_{2}\) and that its rank is bounded by $m-n+1$. By
  \autoref{thm:repsetlovasz}, the running time of this computation
  of the $(k-i)$-representative family is bounded by:

\[\OO\left({k\choose{}i}\cdot|\DD[i,T',v]|i^3 m^{2}+|\DD[i,T',v]|\cdot{k\choose{}i}^{\omega-1}(i\cdot{}m)^{\omega-1}\right)+m^{\OO(1)}.\]

% \todo[inline]{Comparing with the running time stated in
%   \autoref{thm:repsetlovasz}, I think the $i^{\omega}$ above
%   should be $i^{3}$ instead. -- Philip}
% Note that the size of the family~$\DD[i,T',v]$ is given by the
% size of the representative families computed in the previous
% iteration. Therefore,
% $$|\DD[i,T',v]| \leq |V(G)|\cdot \left(\max_{T''\subseteq T',u\in V} \widehat {\DD[i-1,T'',u]}\right).$$
The family $\DD[i,T',v]$ is computed using \autoref{ddv} or \autoref{ddepsilon} from the 
DP table entries $\DD[i-1,T'',u]$, computed in the previous iteration and 
the size of $\DD[i-1,T'',u]$ is bounded 
according to \autoref{thm:repsetlovasz}. 
Thus the size of the family~$\DD[i,T',v]$ is upper bounded by,  
% can be upper-bounded in terms
% of the sizes of representative families computed in the previous
% iteration, as follows:
%\begin{eqnarray*}
\[|\DD[i,T',v]| \leq   ((2k)^2+ 2k n)\cdot \left(\max_{T''\subseteq T',u\in
    V} \widehat {\DD[i-1,T'',u]}\right).\]
%& \leq & 4kn \cdot \left(\max_{T''\subseteq T',u\in
%    V} \widehat {\DD[i-1,T'',u]}\right).
%\end{eqnarray*}
% 
% \todo[inline]{Since the big unions on lines 7, 8, 10, 11 are all
%   over edge sets, I think the first $|V(G)|$ above should be
%   $|E(G)|$ instead. If this is the case, then the following
%   expressions will also change (slightly). -- Philip}
%
\autoref{thm:repsetlovasz} gives bounds on the sizes of these
representative families $\widehat {\DD[i-1,T'',u]}$, from which we get
\(|\DD[i,T',v]|\leq{} 4kn\cdot{}mi{k\choose i-1 }\).

% \todo[inline]{Comparing with \autoref{thm:repsetlovasz}, it looks
%   to me like the $n$ is not required on the RHS above. -- Philip}
Observe that 
Since the number choices for $(T',v)$ such that $T'\subseteq T$
and $v\in V(G) \{\epsilon\}$ is bounded by $4^k(n+1)$, and 
% the outer \textbf{for} loop runs 
we compute DP table entries for $i=1$ to $k$, the overall running time
can be bounded by:

$$\OO\left(4^k n \sum_{i=1}^k \left( {k \choose i} \cdot {k \choose i-1} kn i^4 m^{3} + {k \choose i-1} \cdot{k \choose i}^{\omega-1} kn(im)^{\omega}\right) \right)+m^{\OO(1)}.$$

The running time above simplifies
to~$\OO(2^{(2+\omega{})k}\cdot{}n^2m^3k^5 )+m^{\OO(1)}$. 
%$$\OO\left(4^k n \sum_{i=1}^k \left( {k \choose i} \cdot {k \choose i-1} k^2n i^4 m^{3} + {k \choose i-1} \cdot{k \choose i}^{\omega-1} k^2 n (im)^{\omega}\right) \right)+m^{\OO(1)}.$$
%
%The running time above simplifies
%to~$\OO(2^{(2+\omega{})k}\cdot{}(k\cdot m)^{4}\cdot n )+m^{\OO(1)}$. 
% To compute
% representative families efficiently using
% \autoref{thm:repsetlovasz}, we need to do find a $k$-truncation of
% the co-graphic matroid $M_G$ using
% \autoref{prop:truncationrep}. This latter step implies that
% \autoref{alg:undirecteddp} is randomized with constant success
% probability. 
\end{proof}
Putting all these together we get
\begin{thm}
 \label{thm:cctj}
\CCTJ  can be solved in 
 $\OO(2^{(2+\omega{})k}\cdot{}n^2m^3k^6 )+m^{\OO(1)}$ time where
 \(n=|V(G)|\) and \(m=|E(G)|\).
% and \(\omega\) is the exponent of matrix
 %multiplication.
\end{thm}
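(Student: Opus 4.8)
The plan is to assemble the components already developed above into an algorithm and then count. First I would dispose of the trivial cases: if $|T|$ is odd, or if $|T|>2k$ (impossible by the corollary to \autoref{lemma:co-connected-T-join}, since every $T$-join has at least $|T|/2$ edges), return \NO; so assume $|T|\le 2k$. Then, for each candidate size $k'=\frac{|T|}{2},\frac{|T|}{2}+1,\dots,k$ in increasing order, I would run the dynamic program with the budget $k$ replaced by $k'$: initialise with \autoref{ddzero}, and for $i=1,\dots,k'$ compute $\DD[i,T',v]$ from the previous layer via \autoref{ddv} and \autoref{ddepsilon}, immediately overwriting each entry by a $(k'-i)$-representative subfamily $\widehat{\DD[i,T',v]}$ computed with \autoref{thm:repsetlovasz} applied to the co-graphic matroid $M_G$ — legitimate because every path system stored in a fixed cell $\DD[i,T',v]$ has edge set of the same size $i$, independent in $M_G$, so one may take $b=i$ there. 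After layer $k'$ is complete, test whether $\DD[k',T,\epsilon]\neq\emptyset$; if so, output the edge set of any of its members as a solution of size $k'$ and halt, else go to $k'+1$. If all values fail, return \NO.

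For correctness I would argue both directions. If no co-connected $T$-join of size $\le k$ exists, then for every $k'$ tried we have $\DD[k',T,\epsilon]\subseteq\QQ[k',T,\epsilon]=\emptyset$ (every stored path system has edge set contained in that of a genuine partial solution of size $k'$), so the algorithm never reports anything. Otherwise, let $k'$ be the smallest size of a co-connected $T$-join of $G$; for this $k'$ the hypothesis of \autoref{lem:repset-safe} is exactly met, hence at every step $\widehat{\DD[i,T',v]}\pathsym{k'-i}\DD[i,T',v]$, and combining this with $\DD[i,T',v]\pathsym{k'-i}\QQ[i,T',v]$ (\autoref{lemma:DP table entry peq}) and transitivity of $\pathsym{q}$ (\autoref{lemma:transitivity of peq}) the pruned table still satisfies $\DD[i,T',v]\pathsym{k'-i}\QQ[i,T',v]$ at every layer. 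Applying \autoref{lem:dp-correct} with $i=k'$ (which forces $T'=T$, $v=\epsilon$, and an empty extender $\PP'=\{\emptyset\}$) shows $\QQ[k',T,\epsilon]$ contains a path system that is already a solution; path-system equivalence with the empty extender then forces $\DD[k',T,\epsilon]\neq\emptyset$, and by \autoref{lemma:co-connected-path-system} every element of it is a co-connected $T$-join of size $k'\le k$. Since $k'$ is searched in increasing order, the first success yields a smallest co-connected $T$-join, which is what \CCTJ asks for.

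For the running time, the running-time lemma proved just above bounds one full execution (for a fixed $k'\le k$) by $\OO(2^{(2+\omega)k}n^2m^3k^5)+m^{\OO(1)}$; the ingredients there are the size bound $|\DD[i,T',v]|\le 4kn\cdot mi\binom{k}{i-1}$ obtained by feeding the size guarantee of \autoref{thm:repsetlovasz} into the recurrences \autoref{ddv}, \autoref{ddepsilon} and inducting, the $4^k(n+1)$ bound on the number of pairs $(T',v)$, and crude binomial estimates such as $\binom{k}{i}\binom{k}{i-1}\le 4^k$. We perform at most $k$ such executions, one per value of $k'$, which multiplies the bound by $k$ and gives $\OO(2^{(2+\omega)k}n^2m^3k^6)+m^{\OO(1)}$; the one-time $m^{\OO(1)}$ cost of computing an $\mathbb{F}_2$-representation of $M_G$ and the field-arithmetic overhead of \autoref{thm:repsetlovasz} are absorbed into this term. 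Theorem~\ref{thm:main-theorem} then follows by plugging in $T=$ the set of odd-degree vertices (via \autoref{obs:solution-characterization}) for \UEED, and the analogous reductions for \UCOED and \DEED.

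The step I expect to be the main obstacle is making the correctness argument airtight in the regime where the guessed budget $k'$ does not equal the true optimum, since \autoref{lem:repset-safe} is stated only under the exact-optimum hypothesis: one must argue separately that underestimating $k'$ can only cause the DP to \emph{fail} to find a solution (never to manufacture a spurious one), and that overestimating is unnecessary because of the increasing search over $k'$. A secondary technical point is propagating the size bound on $\DD[i,T',v]$ cleanly through \autoref{ddv} and \autoref{ddepsilon} so that \autoref{thm:repsetlovasz} may be invoked at every layer with $b=i$ — this is precisely the content of the running-time lemma and where all the arithmetic lives.
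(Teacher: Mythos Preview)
Your proposal is correct and follows essentially the same approach as the paper: iterate the dynamic program over candidate solution sizes $k'=\frac{|T|}{2},\dots,k$, fill the table via \autoref{ddzero}--\autoref{ddepsilon} with pruning by \autoref{thm:repsetlovasz}, and certify correctness by chaining \autoref{lemma:DP table entry peq}, \autoref{lem:repset-safe}, and \autoref{lemma:transitivity of peq}. You are in fact slightly more explicit than the paper about the soundness direction (anything the DP outputs lies in $\QQ[k',T,\epsilon]$ and hence is a genuine co-connected $T$-join by \autoref{lemma:co-connected-path-system}) and about why the exact-optimum hypothesis of \autoref{lem:repset-safe} is harmless under the increasing search over $k'$.
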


Using \autoref{obs:solution-characterization} and \autoref{thm:cctj} 
we get
\begin{thm}
 \label{thm:undirected}
 {\UEED} can be solved in time 
 $\OO(2^{(2+\omega{})k}\cdot{}n^2m^3k^6 )+m^{\OO(1)}$ where
 \(n=|V(G)|\) and \(m=|E(G)|\).
\end{thm}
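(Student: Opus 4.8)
The plan is to reduce \UEED to \CCTJ in polynomial time and then invoke \autoref{thm:cctj}. Given an instance $(G,k)$ of \UEED, the first step is to compute the set $T\subseteq V(G)$ of all odd-degree vertices of $G$, which costs only $m^{\OO(1)}$ time. By the handshake lemma the number of odd-degree vertices in any graph is even, so $|T|$ is even and, since $G$ is connected by assumption, $(G,T,k)$ is a valid instance of \CCTJ.

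Next I would appeal to \autoref{obs:solution-characterization}: a set $S\subseteq E(G)$ with $|S|\le k$ is a solution to $(G,k)$ for \UEED if and only if $G\setminus S$ is connected and $S$ is a $T$-join, i.e.\ if and only if $S$ is a co-connected $T$-join of $G$ of size at most $k$. Hence $(G,k)$ is a \YES-instance of \UEED precisely when $(G,T,k)$ is a \YES-instance of \CCTJ, and moreover any co-connected $T$-join of size at most $k$ returned by the algorithm for \CCTJ is itself a valid solution to the \UEED instance. It then remains to run the algorithm of \autoref{thm:cctj} on $(G,T,k)$; since the preprocessing costs only $m^{\OO(1)}$, the overall running time is dominated by the call to the \CCTJ routine and is therefore $\OO(2^{(2+\omega{})k}\cdot{}n^2m^3k^6 )+m^{\OO(1)}$.

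I do not expect a serious obstacle here: all the substantive work — the dynamic programming over path systems, the use of representative families in the co-graphic matroid $M_G$, and the running-time analysis — has already been carried out in establishing \autoref{thm:cctj}. The only points requiring (minor) care are the observation that $|T|$ is even so that \CCTJ is well defined on the constructed instance, and the remark that the output of the \CCTJ algorithm — an edge set that is simultaneously a $T$-join and leaves the graph connected — is exactly what \UEED asks for, which is immediate from \autoref{obs:solution-characterization}.
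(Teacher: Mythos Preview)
Your proposal is correct and matches the paper's own argument exactly: the paper derives \autoref{thm:undirected} in one line from \autoref{obs:solution-characterization} and \autoref{thm:cctj}, which is precisely the reduction you spell out (set $T$ to be the odd-degree vertices, observe $|T|$ is even, and invoke the \CCTJ algorithm). There is nothing to add or correct.
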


We can similarly use \autoref{thm:cctj} to design an algorithm for
\odddel.  First we observe the following.
\begin{observation} \label{obs:odd-del} Let $(G,k)$ be an instance
  of \odddel, let $T$ be the set of even degree vertices of $G$
  and let $S$ be a solution to $(G,k)$. Then $S$ is a co-connected
  $T$-join.
\end{observation}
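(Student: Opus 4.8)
The plan is to verify directly the two defining conditions of a co-connected $T$-join. Condition (i), that $G \setminus S$ is connected, needs no argument at all: since $S$ is by hypothesis a solution to the instance $(G,k)$ of \odddel, the definition of such a solution already requires $G\setminus S$ to be odd \emph{and} connected. So the whole content is condition (ii), namely that $S$ is a $T$-join of $G$ --- that is, that the set of odd-degree vertices of the subgraph $(V(G),S)$ is \emph{exactly} $T$, the set of even-degree vertices of $G$.

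For condition (ii) I would argue vertex by vertex by a parity count. Fix any $v\in V(G)$. Each edge of $G$ incident with $v$ lies either in $S$ or in $G\setminus S$, so the degrees satisfy the identity $d_G(v)=d_{G\setminus S}(v)+d_{(V(G),S)}(v)$. Because $G\setminus S$ is odd, $d_{G\setminus S}(v)$ is odd for every $v$. Hence $d_{(V(G),S)}(v)=d_G(v)-d_{G\setminus S}(v)$ is odd if and only if $d_G(v)$ is even. Consequently the set of odd-degree vertices of $(V(G),S)$ equals $\{v\in V(G): d_G(v)\text{ is even}\}$, which is precisely $T$; thus $S$ is a $T$-join, and together with (i) this makes $S$ a co-connected $T$-join, as claimed.

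There is essentially no obstacle: the statement is a one-line consequence of the degree identity above together with the oddness of $G\setminus S$, so the ``hard part'' is really just being careful with the bookkeeping. The only side remark worth making is a sanity check that the phrase ``$T$-join'' is meaningful, i.e.\ that $|T|$ is even --- this holds because any graph, in particular $(V(G),S)$, has an even number of odd-degree vertices, and we have just shown that number to be $|T|$. One might additionally note, for the downstream use of this observation, that since $|S|\le k$ the triple $(G,T,k)$ is thereby a \Yes-instance of \CCTJ, so \odddel reduces in this way to \CCTJ and can be solved within the claimed bound via \autoref{thm:cctj}; but that extends slightly beyond the statement itself.
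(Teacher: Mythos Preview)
Your proof is correct and follows the same approach as the paper's. The paper's proof is terser --- it simply asserts that the set of odd-degree vertices in $G(S)$ is exactly $T$ and that $G\setminus S$ is connected --- whereas you spell out the underlying parity identity $d_G(v)=d_{G\setminus S}(v)+d_{(V(G),S)}(v)$ that justifies the first assertion; this is exactly the computation the paper leaves implicit.
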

\begin{proof}
%  Suppose, there is a solution $S$ to this instance.
  Since $S$ is a solution to the instance $(G,k)$ of {\odddel},
  the set of odd degree vertices in $G(S)$ is exactly $T$. Since
  $G\setminus S$ is connected as well, $S$ is co-connected
  $T$-join.
% But in any graph, the set of odd degree vertices must have even cardinality.
% But this is a contradiction. 
\end{proof}

Therefore we must have that $T$ is a set of even cardinality.  By
setting $T$ as the set of terminal vertices and applying
\autoref{thm:cctj} we get 
\begin{thm} \label{thm:odddel}
{\odddel} can be solved in time 
$\OO(2^{(2+\omega{})k}\cdot{}n^2m^3k^6 )+m^{\OO(1)}$  where
 \(n=|V(G)|\) and \(m=|E(G)|\).
\end{thm}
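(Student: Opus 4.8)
The plan is to reduce \odddel to \CCTJ and then invoke \autoref{thm:cctj}. Given an instance $(G,k)$ of \odddel, I would take $T$ to be the set of even-degree vertices of $G$. The first step is to dispose of a parity corner case: if $|T|$ is odd we may immediately answer \NO. Indeed, by \autoref{obs:odd-del} every solution to $(G,k)$ is a co-connected $T$-join, and no $T$-join exists at all when $|T|$ is odd, since the odd-degree vertices of any graph form a set of even size. (Equivalently, $|T| \equiv |V(G)| \pmod 2$ because the odd-degree vertices of $G$ form an even-sized set; and an odd connected graph has all degrees odd, which forces $|V(G)|$ to be even, so an odd $|T|$ rules out a solution outright.) From now on we may assume $|T|$ is even, so that $(G,T,k)$ is a legitimate instance of \CCTJ.

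The second step is to establish the equivalence: $(G,k)$ is a \YES-instance of \odddel if and only if $G$ has a co-connected $T$-join of size at most $k$, where $T$ is the set of even-degree vertices of $G$. The forward direction is precisely \autoref{obs:odd-del}. For the converse, suppose $S \subseteq E(G)$ with $|S| \leq k$ is a co-connected $T$-join. Since $S$ is a $T$-join, the set of odd-degree vertices of $G(S)$ is exactly $T$, so for each $v \in V(G)$ we have $d_{G \setminus S}(v) = d_G(v) - d_{G(S)}(v)$; if $v \in T$ this is (even) minus (odd), and if $v \notin T$ it is (odd) minus (even), so in either case $d_{G \setminus S}(v)$ is odd. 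Hence $G \setminus S$ is odd, and it is connected because $S$ is co-connected, so $S$ is a solution to $(G,k)$.

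The third and final step is to observe that the equivalence above is a parameter-preserving reduction from \odddel to \CCTJ on the same graph $G$, with terminal set $T$ and parameter $k$; applying the algorithm of \autoref{thm:cctj} to $(G,T,k)$ therefore solves \odddel, and within the same running time $\OO(2^{(2+\omega)k} \cdot n^2 m^3 k^6) + m^{\OO(1)}$ with $n = |V(G)|$ and $m = |E(G)|$. I do not expect any real obstacle here: the only genuine content is the degree-parity verification in the converse direction and the parity bookkeeping on $|T|$; the substantive algorithmic work—dynamic programming over representative families of the co-graphic matroid $M_G$—has already been done in the proof of \autoref{thm:cctj}.
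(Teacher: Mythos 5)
Your proposal is correct and follows essentially the same route as the paper: take $T$ to be the even-degree vertices, note via \autoref{obs:odd-del} that solutions are exactly co-connected $T$-joins (the paper leaves the easy converse and the even-cardinality check of $T$ implicit, which you spell out), and invoke \autoref{thm:cctj}.
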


%%% Local Variables:
%%% mode: latex
%%% TeX-master: "main-euler-del"
%%% End:

%!TEX root = main-euler-del.tex
\section{\sc Directed Eulerian Edge Deletion}
\label{sec:directed}
In this section we modify the algorithm described for  {\UEED} 
%\autoref{alg:undirecteddp} 
to solve 
% {\DEED}. 
the directed version of the problem. 
The main ingredient of the proof is the characterization of 
``solution'' for the directed version of the problem.  We begin with a few definitions. For a digraph $D$, 
we call $S\subseteq A(D)$ a  {\em balanced arc deletion set}, if $D\setminus S$ is balanced. 
We call a set $S\subseteq A(D)$ a  {\em co-connected balanced arc deletion set}  if $D\setminus S$ is balanced and weakly connected. 

Let $(D,k)$ be an instance to {\sc Directed Eulerian Edge Deletion}. 
A solution $S\subseteq A(D)$ of the problem should satisfy the following two properties, 
(a) $S$ must be a balanced arc deletion set of $D$ and,
(b) $D\setminus S$ must be strongly connected.  In fact, something more stronger is known in the literature. 
%However consider the following proposition.
\begin{proposition}
%[\cite{Gutin2008}]
\textup{\textbf{\cite{Gutin2008}}}
\label{prop:euler_weaklycon}
 A digraph $D$ is Eulerian if and only if $D$ is weakly connected and balanced. 
\end{proposition}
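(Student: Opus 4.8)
The plan is to observe that this proposition merely reconciles two notions of ``Eulerian'' for digraphs: the one adopted in this paper (strongly connected and balanced) and the ``underlying-graph'' version in the statement (weakly connected and balanced). So I would split the proof into the two implications.

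First I would dispatch the forward direction, which is immediate: a strongly connected digraph is weakly connected, and balancedness appears on both sides, so if $D$ is Eulerian then $D$ is weakly connected and balanced.

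The substance is the converse, so next I would assume $D$ is weakly connected and balanced and argue that it is strongly connected. Suppose not. Then the condensation of $D$ (contract each strongly connected component to a vertex) is a directed acyclic graph on at least two vertices, hence it has a \emph{sink} component: a strongly connected component $C$ of $D$ with $\emptyset \neq C \subsetneq V(D)$ and no arc going from $C$ to $V(D)\setminus C$. Since $D$ is weakly connected and the cut $(C, V(D)\setminus C)$ is non-trivial, the underlying undirected graph has at least one edge across this cut, and as $C$ is a sink the corresponding arc of $D$ must be directed into $C$. Now I would double-count degrees across the cut: writing $e$ for the number of arcs of $D$ with both endpoints in $C$ and $a \ge 1$ for the number of arcs entering $C$ from $V(D)\setminus C$, balancedness gives $e + a = \sum_{v\in C} d^{-}_{D}(v) = \sum_{v\in C} d^{+}_{D}(v) = e$, so $a = 0$, a contradiction. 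Hence $D$ is strongly connected, and, being balanced as well, it is Eulerian.

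I do not expect any genuine obstacle here; the only ``idea'' is to pick a sink (equivalently, a source) strongly connected component and run the degree count across the cut it induces, after which everything is routine bookkeeping.
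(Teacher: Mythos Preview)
Your argument is correct and is the standard proof of this classical fact. Note, however, that the paper does not prove this proposition at all: it is stated with a citation to the literature and used as a black box. So there is no ``paper's own proof'' to compare against; you have simply supplied a clean self-contained proof where the paper defers to a reference. The sink-component-plus-degree-count argument you give is exactly the textbook one, and nothing is missing.
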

Due to \autoref{prop:euler_weaklycon}, we can relax the property (b) of the solution $S$ and replace  the requirement of having $D\setminus S$ as strongly connected with just requiring $D\setminus S$ 
to be be weakly connected. 
Now observe that solution $S$ of {\sc Directed Eulerian Edge Deletion} is in fact a co-connected balanced arc deletion set of the directed graph $D$.
Thus our goal is to compute a minimal co-connected balanced arc deletion set of $D$ of size at most $k$.

% We begin with the following proposition which is a generalization of Lemma~\ref{lemma:pathsystem} to the directed case.
We start with the following easy property of in-degrees and out-degrees of vertices in $D$. For a digraph $D$, 
define  ${\cal T}^{-}=\{v\in V(D)~|~d^{-}_D(v)>d^{+}_D(v)\}$, ${\cal T}^{=}=\{v\in V(D)~|~d^{-}_D(v)=d^{+}_D(v)\}$  and ${\cal T}^{+}=\{v\in V(D)~|~d^{-}_D(v)< d^{+}_D(v)\}$.  
%From now onwards let 

\begin{proposition}
\label{prop:equalinoutdegree}
 In a digraph $D$, 
 $\sum\limits_{v\in {\cal T}^+}d_D^{+}(v)-d_D^{-}(v)=\sum\limits_{ v\in {\cal T}^-}d_D^{-}(v)-d_D^{+}(v)$. 
%$$\sum_{\substack{v\in V(D)\\ d_D^{+}(v)>d_D^{-}(v)}}d_D^{+}(v)-d_D^{-}(v)=\sum_{\substack{v\in V(D)\\ d_D^{-}(v)>d_D^{+}(v)}}d_D^{-}(v)-d_D^{+}(v)$$
\end{proposition}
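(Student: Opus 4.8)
The plan is to use the basic handshake identity for digraphs: each arc contributes exactly one to the total out-degree (at its tail) and exactly one to the total in-degree (at its head), so $\sum_{v\in V(D)} d_D^+(v) = |A(D)| = \sum_{v\in V(D)} d_D^-(v)$. Subtracting gives $\sum_{v\in V(D)} \bigl(d_D^+(v) - d_D^-(v)\bigr) = 0$.

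Next I would partition the vertex set $V(D)$ into the three sets $\mathcal{T}^+$, $\mathcal{T}^=$, $\mathcal{T}^-$, which are pairwise disjoint and cover $V(D)$ by their definitions (they are distinguished by whether $d_D^-(v)$ is less than, equal to, or greater than $d_D^+(v)$). Splitting the vanishing sum over this partition yields
\[
\sum_{v\in\mathcal{T}^+}\bigl(d_D^+(v)-d_D^-(v)\bigr)
+\sum_{v\in\mathcal{T}^=}\bigl(d_D^+(v)-d_D^-(v)\bigr)
+\sum_{v\in\mathcal{T}^-}\bigl(d_D^+(v)-d_D^-(v)\bigr)=0.
\]
The middle sum is $0$ term by term, since $d_D^+(v)=d_D^-(v)$ for every $v\in\mathcal{T}^=$. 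Hence $\sum_{v\in\mathcal{T}^+}\bigl(d_D^+(v)-d_D^-(v)\bigr) = -\sum_{v\in\mathcal{T}^-}\bigl(d_D^+(v)-d_D^-(v)\bigr) = \sum_{v\in\mathcal{T}^-}\bigl(d_D^-(v)-d_D^+(v)\bigr)$, which is exactly the claimed identity (and both sides are manifestly nonnegative, since each summand on the left has $d_D^+(v)>d_D^-(v)$ and each on the right has $d_D^-(v)>d_D^+(v)$).

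There is essentially no obstacle here; the only thing to be careful about is that the identity is stated with the summand written as a positive quantity on each side ($d_D^+(v)-d_D^-(v)$ over $\mathcal{T}^+$ and $d_D^-(v)-d_D^+(v)$ over $\mathcal{T}^-$), so I would just make sure the sign bookkeeping in the final rearrangement is presented cleanly. This quantity will later be interpreted as the number of arcs that must be deleted "touching" the imbalanced vertices, so it is worth noting explicitly in the proof that the common value is a nonnegative integer.
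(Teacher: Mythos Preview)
Your proposal is correct and follows essentially the same approach as the paper: both start from the handshake identity $\sum_{v} d_D^+(v)=\sum_{v} d_D^-(v)$, split the vertex set according to the sign of $d_D^+(v)-d_D^-(v)$, and rearrange. Your version is slightly more explicit in handling the $\mathcal{T}^{=}$ part separately, but the argument is the same.
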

\begin{proof}
It is well known that  
\begin{eqnarray*}
&  \sum\limits_{v\in V(D)}d^{+}(v)  = & \sum\limits_{v\in V(D)}d^{-}(v) \\
\iff  & \sum\limits_{v\in {\cal T}^+}d^{+}(v)  + \sum\limits_{v\in {\cal T}^-}d^{+}(v)  = &   \sum\limits_{v\in {\cal T}^-}d^{-}(v)  + \sum\limits_{v\in {\cal T}^+}d^{-}(v)  \\
\iff &  \sum\limits_{v\in {\cal T}^+}d_D^{+}(v)-d_D^{-}(v)= & \sum\limits_{ v\in {\cal T}^-}d_D^{-}(v)-d_D^{+}(v).
\end{eqnarray*}
This completes the proof.
\end{proof}

The following lemma characterizes the set of arcs which form a minimal solution $S$ of the 
given instance $(D,k)$. We then use this characterization to design a dynamic-programming 
algorithm for the problem.
%\begin{lemma}
% \label{lemma:co-connected_balanced_deletion_set}
%  Let $D$ be a digraph, $T^{-}=\{v\in V(D)~|~d^{-}_D(v)>d^{+}_D(v)\}$, 
%$T^{+}=\{v\in V(D)~|~d^{-}_D(v)< d^{+}_D(v)\}$, and 
%$l=\sum_{\substack{v\in V(D) \vert d_D^{+}(v)>d_D^{-}(v)}} d_D^{+}(v)-d_D^{-}(v)$. Let $S\subseteq A(D)$ be a minimal 
%co-connected balanced arc deletion set. 
%Then $S$ is a union of $l$ arc disjoint paths ${\cal P}=\{P_1,\ldots,P_{l}\}$ such that 
%\begin{itemize}
%     \item[(1)] $\forall i\in\{1,\ldots,l\},\;P_i$ starts at a vertex in $T^{+}$ and ends at a vertex in $T^{-}$.
%     \item[(2)] Number of paths in ${\cal P}$ starts at $v\in T^{+}$ is equal to $d_D^{+}(v)-d_D^{-}(v)$ and 
%            number of paths in ${\cal P}$ ends at $u\in T^{-}$ is equal to $d_D^{-}(u)-d_D^{+}(u)$.
%\end{itemize}
%\end{lemma}

\begin{algorithm}[t]
 \KwIn{A weakly connected digraph $D$ and an integer $k$}
\KwOut{$S\subseteq A(D)$ of size at most $k$, such that $D\setminus S$ is Eulerian  if there exists one, otherwise \No}
${\cal T}^{-}\leftarrow\{v\in V(D)~|~d^{-}_D(v)>d^{+}_D(v)\}$\\
${\cal T}^{+}\leftarrow\{v\in V(D)~|~d^{-}_D(v)< d^{+}_D(v)\}$\\
% ${\cal T}^{=}=\{v\in V(D)~|~d^{-}_D(v)=d^{+}_D(v)\}$
Construct a multiset $\TT^{-}_m$ such that number of occurrences of $v\in {\cal T}^{-}$ is exactly equal to $d_D^{-}(v)-d_D^{+}(v)$\\
Construct a multiset $\TT^{+}_m$ such that number of occurrences of $v\in {\cal T}^{+}$ is exactly equal to $d_D^{+}(v)-d_D^{-}(v)$\\
$T\leftarrow \TT^{+}_m\cup \TT^{-}_m$\\
\If{$\TT_m^{+}>k$}
{
 \KwRet \NO
}
\For{$ T'\subseteq T$ and $v\in V(D)\cup\{\epsilon \}$}{
$\DD[0,T',v]\leftarrow \emptyset$\\
}
$\DD[0,\emptyset,\epsilon]\leftarrow\{\emptyset\}$\\
 \For{$i \in \{1,2,\ldots,k\}$}{
 \For{$T'\subseteq T$ and $v\in V(D) \cup \{\epsilon\}$}{
 \If{$v\neq\epsilon$}{
 $\DD[i,T',v]\leftarrow \bigg(\displaystyle\bigcup_{e=(u,v)\in A(D)} \left(\DD[i-1,T',u]\bullet \{e\}\right)\bigg)\medcup$\\
                  $\qquad\quad\quad\;\;\quad\bigg( \displaystyle\bigcup_{\substack{t\in T'\cap \TT^{+}_m \\ (t,v)\in A(D)}} \left(\DD[i-1,T'\setminus \{t\},\epsilon]\bullet\{(t,v)\}\right)\bigg)$
 }
 \If{$v=\epsilon$}{
 $\DD[i,T',\epsilon]\leftarrow \bigg(\displaystyle\bigcup_{\substack{t\in T'\cap \TT^{-}_m \\ (u,t)\in A(D)}} \left(\DD[i-1,T'\setminus \{t\},u]\bullet\{(u,t)\}\right)\bigg) \medcup$\\ 
 $\qquad\qquad\quad\; \bigg(\displaystyle\bigcup_{\substack{t_1\in T'\cap \TT^{+}_m,t_2\in T'\cap \TT^{-}_m \\ (t_1,t_2)\in A(D)}} \left(\DD[i-1,T'\setminus \{t_1,t_2\},\epsilon]\bullet\{(t_1,t_2)\}\right)\bigg)$
 }
 Compute $\widehat{\DD[i,T',v]} \subseteq^{k-i}_{rep} \DD[i,T',v]$ in $M_D$ using \autoref{thm:repsetlovasz}\\
 % \BlankLine
 $\DD[i,T',v]\leftarrow\widehat{\DD[i,T',v]}$
 }
 \If {$\DD[i,T,\epsilon]\neq\emptyset$}{\KwRet~$S\in \DD[i,T,\epsilon]$}
 }
  \KwRet \NO 
 \caption{Algorithm for {\sc Directed Eulerian Edge Deletion}}
   \label{alg:directeddp}
  \end{algorithm}

\begin{lemma}
 \label{lemma:co-connected_balanced_deletion_set}
 Let $D$ be a digraph,  and 
$\ell=\sum_{v\in {\cal T}^+} d_D^{+}(v)-d_D^{-}(v)$. Let $S\subseteq A(D)$ be a minimal 
co-connected balanced arc deletion set. 
Then $S$ is a union of $\ell$ arc disjoint paths ${\cal P}=\{P_1,\ldots,P_{\ell}\}$ such that 
\begin{itemize}
     \item[(1)] For $ i\in\{1,\ldots,\ell\},\;P_i$ starts at a vertex in ${\cal T}^{+}$ and ends at a vertex in ${\cal T}^{-}$.
     \item[(2)] The number of paths in ${\cal P}$ that starts at $v\in {\cal T}^{+}$ is equal to $d_D^{+}(v)-d_D^{-}(v)$ and the  number of paths in ${\cal P}$ that ends at $u\in {\cal T}^{-}$ is equal to $d_D^{-}(u)-d_D^{+}(u)$.
\end{itemize}
\end{lemma}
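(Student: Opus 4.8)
The plan is to mirror the undirected argument (Lemma~\ref{lemma:co-connected-T-join} and Proposition~\ref{prop:pathsystem-tjoin}), but working with the ``excess out-degree'' multiset instead of a $T$-join. First I would set up the directed analogue of Proposition~\ref{prop:pathsystem-tjoin}: if $S \subseteq A(D)$ is such that in the sub-digraph $D(S)$ every vertex $v$ has $d^{+}_{D(S)}(v) - d^{-}_{D(S)}(v)$ equal to the prescribed ``imbalance'' $b(v) := d^{+}_{D}(v) - d^{-}_{D}(v)$ of $v$ in $D$ (which is $>0$ on $\mathcal{T}^{+}$, $<0$ on $\mathcal{T}^{-}$, $0$ elsewhere), then $S$ decomposes into $\ell = \sum_{v \in \mathcal{T}^+} b(v)$ arc-disjoint directed paths, each starting in $\mathcal{T}^+$ and ending in $\mathcal{T}^-$, with exactly $b(v)$ paths starting at each $v \in \mathcal{T}^+$ and exactly $-b(u) = d^{-}_D(u) - d^{+}_D(u)$ paths ending at each $u \in \mathcal{T}^-$, plus possibly some arc-disjoint directed cycles. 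This is the standard Eulerian-type decomposition of an integral circulation-with-sources-and-sinks; it follows by repeatedly peeling off a maximal directed trail from a positive-imbalance vertex (or a cycle if no imbalanced vertex remains). Note $D \setminus S = D \setminus S'$ balanced is \emph{equivalent} to $S$ having exactly the imbalance profile $b(\cdot)$, because $d^{\pm}_{D \setminus S} = d^{\pm}_D - d^{\pm}_{D(S)}$, so ``$S$ is a balanced arc deletion set'' $\iff$ ``$d^{+}_{D(S)}(v) - d^{-}_{D(S)}(v) = b(v)$ for all $v$''; and Proposition~\ref{prop:equalinoutdegree} guarantees $\ell$ is well-defined (the total source excess equals the total sink deficit).

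Next, the forest/acyclicity step: let $S$ be a \emph{minimal} co-connected balanced arc deletion set. I claim the decomposition above has no cycles, i.e.\ $S$ is exactly a union of $\ell$ arc-disjoint directed paths as in (1)--(2). Indeed, if the decomposition contained a directed cycle $C$, then deleting $A(C)$ from $S$ changes no vertex's imbalance in $D(S)$ (a cycle contributes $+1$ to both $d^{+}$ and $d^{-}$ at each of its vertices), so $S \setminus A(C)$ is still a balanced arc deletion set; and since $D \setminus S$ is weakly connected, the strictly larger sub-digraph $D \setminus (S \setminus A(C))$ is weakly connected too. Hence $S \setminus A(C)$ is a co-connected balanced arc deletion set strictly inside $S$, contradicting minimality. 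Therefore $S$ is the arc-disjoint union of the $\ell$ paths, and properties (1) and (2) are read off directly from the decomposition statement.

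I would also record, as in the undirected case, that this forces $\ell \le |S| \le k$ whenever a solution of size $\le k$ exists (each of the $\ell$ paths uses at least one arc, and they are arc-disjoint), which is exactly the check $|\mathcal{T}^{+}_m| > k \Rightarrow$ \textsc{No} performed in Algorithm~\ref{alg:directeddp} (note $|\mathcal{T}^{+}_m| = \sum_{v \in \mathcal{T}^+} b(v) = \ell$). The main obstacle is making the directed decomposition lemma precise and correct: one must be careful that the paths are genuinely \emph{arc-disjoint} (vertices, including those of $\mathcal{T}^{+} \cup \mathcal{T}^{-}$, may be reused as internal vertices of several paths, and a source vertex of $\mathcal{T}^+$ can also be an internal vertex or endpoint of another path), and that the peeling process terminates with the exact count $b(v)$ of path-starts at each source — this is where one invokes the conservation identity of Proposition~\ref{prop:equalinoutdegree}. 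Everything else is a routine transcription of the undirected proof.
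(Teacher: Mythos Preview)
Your proposal is correct and the underlying ideas coincide with the paper's, but the two are organized in opposite orders. The paper first uses minimality to show that $D(S)$ contains no directed cycle (exactly your cycle-removal argument: if $C$ is a directed cycle in $D(S)$ then $S\setminus A(C)$ is a strictly smaller co-connected balanced arc deletion set), and only \emph{then} decomposes $S$ into $\ell$ paths by induction on $\ell$: take a maximal directed path $P$ in the acyclic digraph $D(S)$, argue from the balance condition together with acyclicity that $P$ must start in $\mathcal{T}^{+}$ and end in $\mathcal{T}^{-}$, delete $A(P)$, and apply the induction hypothesis to $D' = D\setminus A(P)$ with parameter $\ell-1$. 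You instead first prove the general paths-plus-cycles decomposition (your directed analogue of Proposition~\ref{prop:pathsystem-tjoin}) for an arbitrary balanced arc deletion set, and afterwards invoke minimality to eliminate the cycles. Your ordering mirrors the undirected proof more faithfully and yields a reusable structural lemma, at the price of having to establish that lemma in full generality (your ``main obstacle'' is real, though standard); the paper's ordering avoids that work by reducing to the DAG case first, where the path-peeling is immediate. Either route is sound.
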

%\todo[inline]{read from here.}
\begin{proof}
First we claim that $D(S)$ is a directed acyclic digraph. Suppose not, then let $C$ be a directed cycle in $D(S)$. 
The in-degree and out-degree of any vertex of $v$ of $D$ in $D(S\setminus A(C))$ is either same as its 
in-degree and out-degree in the subgraph $D(S)$ or both in-degree and out-degree of $v$ is smaller by exactly one. 
So $S\setminus A(C)$ is a balanced arc deletion set of $D$.
And since the subgraph $D\setminus S$ is connected by assumption, we get that the strictly larger subgraph $D\setminus (S\setminus A(C))$ is also connected. 
Thus $S\setminus A(C)$ is a co-connected balanced arc deletion set of $D$. 
This contradicts the fact that $S$ is minimal, and hence we get that $D(S)$ a directed acyclic digraph. 

We prove the lemma using induction on $\ell$. 
%$$l = \sum_{\substack{v\in V(D)\\ d_D^{+}(v)>d_D^{-}(v)}}d_D^{+}(v)-d_D^{-}(v)$$ 
When $\ell=0$, the lemma holds vacuously. 
Now consider the induction step, i.e, when $\ell>0$. Consider a {\em maximal} path  
$P$ in $D(S)$. We claim that $P$ starts at a vertex in ${\cal T}^{+}$. Suppose not, let $P$ starts at 
$w \in V(D)\setminus {\cal T}^{+}$. Further, let $(w,x)$ be the arc of $P$ that is incident on $w$. By our assumption $w\in {\cal T}^{-}\cup {\cal T}^{=}$, which implies that  if 
$(w,x)\in S$, then there must exist an arc $(y,w) \in S$ or else the vertex $w$ cannot be balanced  in $D\setminus S$. 
%the vertex $w$ can not be balanced. 
And since $D(S)$ is a directed acyclic digraph we have that $y \not \in V(P)$.  
%Since $S$ is a  balanced arc deletion set there exists an 
%arc $(w',w)\in S\setminus A(P)$. Since $D(S)$ is a directed acyclic digraph, $w'\notin V(P)$.  
But this contradicts the assumption that $P$ is a maximal path in $D(S)$. By similar arguments we can prove that 
$P$ ends at a vertex in ${\cal T}^{-}$. Let $P$ starts at $t_1$ and ends at $t_2$, where $t_1\in {\cal T}^+$ and $t_2\in {\cal T}^-$.  
% Let $t_1'$ be the last vertex in $P$, which is also from $T^+$.  
% Consider a sub-path $P'$ in $P$ which starts at $t_1'$ and ends at $t_2$. Let $t_2'$ be the first vertex in $P'$ 
% which is also from $T^-$. Let $P''$ be the sub path of $P'$ which starts at $t_1'$ and ends at $t_2'$. 
% Observe that all the internal vertices of $P''$ are from $V(D)\setminus T$. 
Now consider the digraph $D'=D\setminus A(P)$. Clearly, 
$S\setminus A(P)$ is a minimal co-connected balanced arc deletion set for the digraph $D'$.   We claim  the following  
  \[ \sum_{ \{ v\in V(D') \vert d_{D'}^{+}(v) >d_{D'}^{-}(v) \}} d_{D'}^{+}(v)-d_{D'}^{-}(v)=\ell -1.\]
The correctness of this follows from the fact that the difference $ d_{D'}^{+}(v) - d_{D'}^{-}(v)= d_{D}^{+}(v) - d_{D}^{-}(v)$ for all 
$v\in V(D)\setminus \{t_1,t_2\}$.  And for $t_1$ we have that  $ d_{D'}^{+}(t_1) - d_{D'}^{-}(t_1)= d_{D}^{+}(t_1) - d_{D}^{-}(t_1)-1$. 
%Note that  $\sum_{ \{ v\in V(D') \vert d_{D'}^{+}(v) >d_{D'}^{-}(v) \}} d_{D'}^{+}(v)-d_{D'}^{-}(v)=\ell-1$. 

Now by applying  induction hypothesis on $D'$ with $\ell-1$  we have that $S\setminus A(P)$ is a union of $\ell-1$ 
arc disjoint paths $P_1,\ldots,P_{\ell-1}$ which 
satisfies properties $(1)$ and $(2)$ for the digraph $D'$. Now consider the path system $P,P_1,\ldots,P_{\ell-1}$ 
and observe that it indeed satisfies properties $(1)$ and $(2)$. This concludes the proof.
 \end{proof}
Finally, we prove a  kind of ``converse'' of Lemma~\ref{lemma:co-connected_balanced_deletion_set}.

 \begin{lemma}
 \label{lemma:directedpathsystem}
%  Let $D$ be a digraph, $T^{-}=\{v\in V(D)~|~d^{-}_D(v)>d^{+}_D(v)\}$, $T^{+}=\{v\in V(D)~|~d^{-}_D(v)< d^{+}_D(v)\}$ 
%and $l=\sum_{\substack{v\in V(D) \vert d_D^{+}(v)>d_D^{-}(v)}} d_D^{+}(v)-d_D^{-}(v)$. 
Let $D$ be a digraph, 
$\ell=\sum_{v\in {\cal T}^+} d_D^{+}(v)-d_D^{-}(v)$ and  let $S\subseteq A(D)$. Furthermore, $S$ is a union of $\ell$ arc disjoint paths ${\PP}=\{P_1,\ldots,P_\ell\}$ with the following properties.   
\begin{enumerate}
\item The digraph $D\setminus S$ is weakly connected.
\item For $i\in\{1,\ldots,\ell\},\;P_i$ starts at a vertex in ${\cal T}^{+}$ and ends at a vertex in ${\cal T}^{-}$.
\item The number of paths in ${\cal P}$ that starts at $v\in {\cal T}^{+}$ is equal to $d_D^{+}(v)-d_D^{-}(v)$ and the  number of paths in ${\cal P}$ that ends at $u\in {\cal T}^{-}$ is equal to $d_D^{-}(u)-d_D^{+}(u)$.
\end{enumerate}
Then $S$ is a  co-connected balanced arc deletion set. 
\end{lemma}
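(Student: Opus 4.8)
The plan is to verify directly the two defining properties of a co-connected balanced arc deletion set for the set $S$: that $D\setminus S$ is weakly connected, and that $D\setminus S$ is balanced. The first is exactly hypothesis~(1), so the whole content of the proof is to show that $d^{+}_{D\setminus S}(v)=d^{-}_{D\setminus S}(v)$ for every vertex $v\in V(D)$.

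First I would reduce this to a per-vertex degree count. For each $v$ we have $d^{+}_{D\setminus S}(v)=d^{+}_D(v)-d^{+}_{D(S)}(v)$ and $d^{-}_{D\setminus S}(v)=d^{-}_D(v)-d^{-}_{D(S)}(v)$, so it suffices to prove $d^{+}_D(v)-d^{-}_D(v)=d^{+}_{D(S)}(v)-d^{-}_{D(S)}(v)$ for all $v$. Since $S$ is the arc-disjoint union of the arc sets of $P_1,\dots,P_\ell$, the in- and out-degrees of $v$ in $D(S)$ split as sums of the contributions of the individual paths. A single (simple) path touches $v$ in exactly one of four mutually exclusive ways: it avoids $v$ and contributes $(0,0)$ to $(d^{+},d^{-})$ at $v$; it has $v$ as an internal vertex and contributes $(1,1)$; it starts at $v$ and contributes $(1,0)$; or it ends at $v$ and contributes $(0,1)$. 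The cases ``starts at $v$'' and ``ends at $v$'' cannot both occur for one path because, by hypothesis~(2), a path starts in $\mathcal{T}^{+}$ and ends in $\mathcal{T}^{-}$, and these sets are disjoint. Summing over $i$, the $(1,1)$ contributions of internal occurrences cancel, and we obtain $d^{+}_{D(S)}(v)-d^{-}_{D(S)}(v)=\#\{i: P_i\text{ starts at }v\}-\#\{i: P_i\text{ ends at }v\}$.

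Then I would substitute hypotheses~(2) and~(3). If $v\in\mathcal{T}^{+}$, then by~(3) exactly $d^{+}_D(v)-d^{-}_D(v)$ of the paths start at $v$, and by~(2) none ends at $v$; if $v\in\mathcal{T}^{-}$, then exactly $d^{-}_D(v)-d^{+}_D(v)$ of the paths end at $v$ and none starts at $v$; if $v\in\mathcal{T}^{=}$, then neither happens. In all three cases $\#\{i: P_i\text{ starts at }v\}-\#\{i: P_i\text{ ends at }v\}=d^{+}_D(v)-d^{-}_D(v)$, which is precisely the identity we needed. Hence $D\setminus S$ is balanced, and combined with hypothesis~(1) (equivalently, with Proposition~\ref{prop:euler_weaklycon}) this shows $S$ is a co-connected balanced arc deletion set.

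The only thing to be careful about — rather than a real obstacle — is the bookkeeping of the per-path degree contributions at a fixed vertex and the split into the three cases $\mathcal{T}^{+},\mathcal{T}^{=},\mathcal{T}^{-}$; in particular, unlike in Lemma~\ref{lemma:co-connected_balanced_deletion_set}, this converse direction does not require $D(S)$ to be acyclic.
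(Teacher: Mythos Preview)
Your proof is correct and follows essentially the same approach as the paper: a case split on whether a vertex lies in $\mathcal{T}^{+}$, $\mathcal{T}^{-}$, or $\mathcal{T}^{=}$, together with the observation that internal occurrences on a path contribute equally to in- and out-degree in $D(S)$. The paper's argument is terser but identical in substance; your explicit per-path degree bookkeeping is a fine way to make the same point. (The parenthetical appeal to Proposition~\ref{prop:euler_weaklycon} is unnecessary here, since the conclusion only requires ``balanced and weakly connected,'' not ``Eulerian''---but this does no harm.)
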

 \begin{proof}
By property (2), each vertex $v\in {\cal T}^=$ appears only as internal vertex of any path in ${\cal P}$. 
Therefore $v$ is balanced in $D\setminus S$. For every vertex  $t\in {\cal T}^+$, exactly $d_D^+(t)-d_D^-(t)$ paths start at $t$ 
in ${\cal P}$ and no path in $\PP$ ends at $t$ and thus $t$ is balanced in $D\setminus S$. Similar arguments hold for all $t\in {\cal T}^-$. 
Hence $D \setminus S$ is balanced. Since $D\setminus S$ is weakly connected as well by property (1), we have that  
$S$ is a co-connected balanced arc deletion set of $D$.
\end{proof}

Now we are ready to describe the algorithm for {\sc Directed Eulerian Edge Deletion}. 
Let $(D,k)$ be an instance of the problem.   Lemma~\ref{lemma:co-connected_balanced_deletion_set}
and Lemma~\ref{lemma:directedpathsystem} imply that for a solution we can seek a path system ${\cal P}$ with properties mentioned 
in Lemma~\ref{lemma:directedpathsystem}.
%  such that $D\setminus A({\cal P})$ is weakly connected. 
% Let $T$ be the set of vertices mentioned in Lemma~\ref{lemma:directedpathsystem}.  
Let ${\cal T}^+_m$ be the multiset of vertices in the graph $G$ such that each vertex $v\in {\cal T}^+$ appears 
$d^+_D(v)-d^-_D(v)$ times in ${\cal T}^+_m$. Similarly, let ${\cal T}^-_m$ be the multiset of vertices in the graph $D$ 
such that each vertex $v\in {\cal T}^-$ appears $d^-_D(v)-d^+_D(v)$ times in ${\cal T}^-_m$. 
Due to \autoref{prop:equalinoutdegree} we know 
that $|{\cal T}^+_m|=|{\cal T}^-_m|$. Observe that if $|{\cal T}^+_m|>k$, then any balanced arc deletion set must 
contain more than $k$ arcs and thus the given instance is a {\sc No} instance. So we assume 
that $|{\cal T}^+_m|\leq k$.

 Lemma~\ref{lemma:co-connected_balanced_deletion_set}  implies that the solution can be thought of as 
a path system ${\cal P}=\{P_1, \ldots,P_{\ell}\}$ connecting  vertices from 
${\cal T}^+_m$ to the vertices of $ {\cal T}^-_m$ such that all the vertices of ${\cal T}_m^+ \cup {\cal T}_m^-$  
appear as end points exactly once and $D\setminus A(\PP)$ is weakly connected.  
% the following properties. 
% \begin{enumerate}
%  \item[(1)] $D\setminus A({\cal P})$ is weakly connected and $D(A({\cal P}))$ is a DAG.  
%  \item[(2)] $\forall i$, all the internal vertices of the  path 
%   $P_i$ lie in $(V(D)\setminus T) \cup (\cup_{j=1}^{i-1}P_j \cap T)$.
% \end{enumerate}
Observe that the solution is a path system with properties which are similar to those in the undirected case of the problem.
Indeed, the solution $S$ corresponds to an independent set in the co-graphic matroid of the underlying (undirected) graph of $D$.  
After this the algorithm for {\DEED} is identical to the algorithm for {\CCTJ}. 
Let $T={\cal T}^+_m \cup  {\cal T}^-_m$. 
We can  define  a notion of partial solutions analogous to $\QQ[i,T',v]$.
The definition of {\em extender} remains the same except for the   
last item, where we now require that ${\cal P}\cup {\cal P}'$  is an arc disjoint path system  connecting  vertices from ${\cal T}^+_m$ to the vertices of $ {\cal T}^-_m$ such that every vertex of ${\cal T}_m^+ \cup {\cal T}_m^-$ is an endpoint of exactly one path.  Finally, we can define the recurrences for dynamic programming similar to those defined for $\DD[i,T',v]$ in the case of {\CCTJ}. We then use these recurrences along with an algorithm to compute representative families to solve the given instance.
A pseudo-code implementation of the algorithm is presented as \autoref{alg:directeddp}.
The correctness of the algorithm follows via similar arguments as before.  And by an analysis similar to the case of {\CCTJ} we can obtain the following bound on the running  time of the algorithm.
% \begin{lemma}
%  \autoref{alg:directeddp} runs in time 
%  $\OO(2^{(2+\omega{})k}\cdot{}n^2m^3k^5 )+m^{\OO(1)}$ where
% \(n=|V(D)|\) and \(m=|A(D)|\).
% \end{lemma}
%
%This leads us to the following theorem.

\begin{theorem}
 \label{thm:undirected}
{\sc Directed Eulerian Edge Deletion} can be solved in time 
$\OO(2^{(2+\omega{})k}\cdot{}n^2m^3k^6 )+m^{\OO(1)}$ where
where \(n=|V(D)|\) and $m=|A(D)|$.
% and \(\omega\) is the exponent of matrix
 % multiplication.  
\end{theorem}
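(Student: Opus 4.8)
The plan is to mirror the entire development of Section~\ref{section:undirected} in the directed setting, using the two structural lemmas (Lemma~\ref{lemma:co-connected_balanced_deletion_set} and Lemma~\ref{lemma:directedpathsystem}) in place of Lemma~\ref{lemma:co-connected-T-join} and Lemma~\ref{lemma:co-connected-path-system}. First I would fix the terminal multiset $T = \TT^+_m \cup \TT^-_m$ and observe, via \autoref{prop:equalinoutdegree}, that $|\TT^+_m| = |\TT^-_m| = \ell$; if $\ell > k$ we return \NO, so henceforth $\ell \le k$. By Lemma~\ref{lemma:co-connected_balanced_deletion_set} a minimal solution of size $\le k$ decomposes into exactly $\ell$ arc-disjoint paths, each from $\TT^+$ to $\TT^-$ with the prescribed multiplicities, and $D \setminus S$ weakly connected; conversely Lemma~\ref{lemma:directedpathsystem} shows any such path system yields a co-connected balanced arc deletion set. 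Since weak connectivity of $D \setminus S$ means exactly that $S$ is an independent set in the co-graphic matroid $M_D$ of the underlying undirected graph, the problem reduces to searching for such a path system that is $M_D$-independent and has $\le k$ arcs --- precisely the directed analogue of \CCTJ.

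Next I would set up the dynamic programming table $\DD[i,T',v]$ for $i \in \{0,\dots,k\}$, $T' \subseteq T$ (as a sub-multiset), and $v \in V(D) \cup \{\epsilon\}$, defined exactly as in the undirected case, where $\QQ[i,T',v]$ is the family of arc-disjoint, $M_D$-independent path systems of size $i$ that connect disjoint pairs of terminals except possibly for a final dangling endpoint $v$. The recurrences are those in \autoref{alg:directeddp}: when $v \ne \epsilon$ we either extend the last path by an arc $(u,v)$ entering $v$ (using $\DD[i-1,T',u]$) or start/close a path by an arc $(t,v)$ with $t \in T' \cap \TT^+_m$ (using $\DD[i-1,T'\setminus\{t\},\epsilon]$); when $v = \epsilon$ we close the last path with an arc $(u,t)$, $t \in T' \cap \TT^-_m$, or add a single-arc path $(t_1,t_2)$ with $t_1 \in T' \cap \TT^+_m$, $t_2 \in T' \cap \TT^-_m$. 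The one genuinely new point compared to the undirected case is bookkeeping the orientation: a path must begin at a $\TT^+_m$-terminal and end at a $\TT^-_m$-terminal, so the ``start'' and ``close'' operations are no longer symmetric --- this is why the recurrence splits $T'$ according to $\TT^+_m$ versus $\TT^-_m$. I would redefine \emph{extender} verbatim from the \CCTJ\ definition, changing only the final clause to demand that $\PP \cup \PP'$ be an arc-disjoint path system connecting $\TT^+_m$ to $\TT^-_m$ with every terminal an endpoint exactly once. The relation $\pathsym{q}$, its transitivity (Lemma~\ref{lemma:transitivity of peq}), and the representative-set safeness lemma (the analogue of Lemma~\ref{lem:repset-safe}) all carry over: the proof that a $(k-i)$-representative family is path-system-equivalent uses only that $D \setminus (E(\PP^*) \cup E(\PP'))$ is connected plus the counting argument that forces $E(\PP^*) \cup E(\PP')$ to be acyclic when it has size exactly $k$ --- here invoking the acyclicity part of Lemma~\ref{lemma:co-connected_balanced_deletion_set} in place of the forest part of Lemma~\ref{lemma:co-connected-T-join}, and the orientation constraints to rule out that the concatenated walk closes into something non-simple.

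With these ingredients, correctness follows exactly as in Section~\ref{section:undirected}: the analogue of Lemma~\ref{lemma:DP table entry peq} proves by induction on $i$ that $\DD[i,T',v] \pathsym{k-i} \QQ[i,T',v]$ (each inductive case peeling off the last arc of the last path, exactly as in the two-by-two case analysis there, now respecting orientations), and then transitivity of $\pathsym{q}$ together with the representative-set lemma gives $\widehat{\DD[i,T',v]} \pathsym{k-i} \QQ[i,T',v]$. The algorithm iterates $k'$ from $\ell$ up to $k$, and for the correct $k'$ the table entry $\DD[k',T,\epsilon]$ is nonempty and, by Lemma~\ref{lemma:directedpathsystem}, any path system in it is a solution. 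For the running time, the co-graphic matroid $M_D$ is representable over $\mathbb{F}_2$ with rank $\le m-n+1$ and a representation computable in polynomial time, and $|T| \le 2k$, so $T'$ ranges over $\le 4^k$ sub-multisets and $v$ over $\le n+1$ values; applying \autoref{thm:repsetlovasz} with $b = i \le k$, $q = k-i$ to bound both the size of each computed family and the cost of computing its $(k-i)$-representative, the same arithmetic as in the \CCTJ\ running-time lemma yields $\OO(2^{(2+\omega)k} \cdot n^2 m^3 k^6) + m^{\OO(1)}$.

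The main obstacle I anticipate is \emph{not} any deep new idea but the careful verification that the orientation constraints behave correctly in the representative-set safeness argument: one must confirm that when $\PP^*$ replaces $\PP$, the concatenation $P_r^* P_r'$ still runs from a $\TT^+$-vertex to a $\TT^-$-vertex (so that $\PP^* \cup \PP'$ is a legal path system of the required type), and that the size-$k$ minimality forces it to be a genuine simple directed path rather than a directed walk with a repeated vertex --- exactly the role played by the acyclicity half of Lemma~\ref{lemma:co-connected_balanced_deletion_set}. Everything else is a transcription of the undirected proof with ``graph'' replaced by ``underlying undirected graph of $D$'' and ``$T$-join'' replaced by ``balanced arc deletion set''.
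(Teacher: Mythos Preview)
Your proposal is correct and matches the paper's approach essentially line for line: the paper likewise transplants the entire \CCTJ\ machinery to the directed setting using Lemmas~\ref{lemma:co-connected_balanced_deletion_set} and~\ref{lemma:directedpathsystem}, the co-graphic matroid $M_D$ of the underlying undirected graph, the multiset terminal set $T=\TT^+_m\cup\TT^-_m$, and the recurrences of \autoref{alg:directeddp}, with the only substantive change being the asymmetric start/close rules enforcing that paths run from $\TT^+_m$ to $\TT^-_m$. Your identification of the one delicate point---that the safeness argument relies on the acyclicity half of Lemma~\ref{lemma:co-connected_balanced_deletion_set} to force $P_r^*P_r'$ to be a simple directed path---is exactly the directed analogue of the forest argument in Lemma~\ref{lem:repset-safe}, and the paper handles it the same way.
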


%%% Local Variables:
%%% mode: latex
%%% TeX-master: "main-euler-del"
%%% End:
\section{Conclusion}

In this paper, we have designed  \FPT algorithms
for {\sc Eulerian Edge Deletion} and related problems
which significantly improve upon the earlier algorithms.
Our algorithms are based on simple dynamic programming 
over the set of partial solutions to the problems
and, the construction of representative families of partial solutions. It would be interesting to 
find other applications of computation of representative families over a cographic matroid. 

%It is an interesting open problem to find deterministic algorithms
%for these problems with an equivalent running time.
%Further, finding algorithms with an improved running time 
%for these problems remains an interesting open problem.

\bibliographystyle{plain}
\bibliography{Eulerian}

\begin{thebibliography}{10}

\bibitem{Gutin2008}
Jrgen Bang-Jensen and Gregory~Z. Gutin.
\newblock {\em Digraphs: Theory, Algorithms and Applications}.
\newblock Springer Publishing Company, Incorporated, 2nd edition, 2008.

\bibitem{CaiY11}
Leizhen Cai and Boting Yang.
\newblock Parameterized complexity of even/odd subgraph problems.
\newblock {\em J. Discrete Algorithms}, 9(3):231--240, 2011.

\bibitem{CyganMPPS14}
Marek Cygan, D{\'a}niel Marx, Marcin Pilipczuk, Michal Pilipczuk, and
  Ildik{\'o} Schlotter.
\newblock Parameterized complexity of eulerian deletion problems.
\newblock {\em Algorithmica}, 68(1):41--61, 2014.

\bibitem{Diestel}
Reinhard Diestel.
\newblock {\em Graph Theory}, volume 173 of {\em Graduate Texts in
  Mathematics}.
\newblock Springer, 2010.

\bibitem{Edmonds73}
Jack Edmonds and Ellis~L. Johnson.
\newblock Matching, euler tours and the {Chinese} postman.
\newblock {\em Mathematical Programming}, 5(1):88--124, 1973.

\bibitem{FominG13}
Fedor~V. Fomin and Petr~A. Golovach.
\newblock Long circuits and large euler subgraphs.
\newblock In {\em ESA}, volume 8125, pages 493--504, 2013.

\bibitem{FominG14}
Fedor~V. Fomin and Petr~A. Golovach.
\newblock Parameterized complexity of connected even/odd subgraph problems.
\newblock {\em J. Comput. Syst. Sci.}, 80(1):157--179, 2014.

\bibitem{FLS13}
Fedor~V. Fomin, Daniel Lokshtanov, and Saket Saurabh.
\newblock Efficient computation of representative sets with applications in
  parameterized and exact algorithms.
\newblock In {\em SODA}, pages 142--151, 2014.

\bibitem{Frank94}
Andr{\'a}s Frank.
\newblock A survey on t-joins, t-cuts, and conservative weightings.
\newblock In {\em Combinatorics, Paul Erd{\"o}s is eighty}, volume~2, pages
  213--252. J{\'a}nos Bolyai Mathematical Society, 1993.

\bibitem{KratschW12}
Stefan Kratsch and Magnus Wahlstr{\"o}m.
\newblock Representative sets and irrelevant vertices: New tools for
  kernelization.
\newblock In {\em Proceedings of the 53rd Annual Symposium on Foundations of
  Computer Science (FOCS 2012)}, pages 450--459. IEEE, 2012.

\bibitem{LokshtanovMPS14}
Daniel Lokshtanov, Pranabendu Misra, Fahad Panolan, and Saket Saurabh.
\newblock Deterministic truncation of linear matroids.
\newblock {\em CoRR}, abs/1404.4506, 2014.

\bibitem{M09}
D{\'a}niel Marx.
\newblock A parameterized view on matroid optimization problems.
\newblock {\em Theor. Comput. Sci}, 410(44):4471--4479, 2009.

\bibitem{M85}
B.~Monien.
\newblock How to find long paths efficiently.
\newblock {\em Ann.~Discrete Math.}, 25:239--254, 1985.

\bibitem{Oxley2006}
James~G Oxley.
\newblock {\em Matroid theory}, volume~3.
\newblock Oxford University Press, 2006.

\bibitem{Williams12}
Virginia~Vassilevska Williams.
\newblock Multiplying matrices faster than {C}oppersmith-{W}inograd.
\newblock In {\em Proceedings of the 44th Symposium on Theory of Computing
  Conference (STOC 2012)}, pages 887--898. ACM, 2012.

\end{thebibliography}
 \end{document}